\documentclass[10pt]{article}

\usepackage{fullpage}

\usepackage{graphicx}%
\usepackage{multirow}%
\usepackage{amsmath,amssymb,amsfonts}%
\usepackage{amsthm}%
\usepackage{mathrsfs}%
\usepackage[title]{appendix}%
\usepackage{xcolor}%
\usepackage{textcomp}%
\usepackage{manyfoot}%
\usepackage{booktabs}%
\usepackage{algorithm}%
\usepackage{algorithmicx}%
\usepackage{algpseudocode}%
\usepackage{listings}%
\usepackage{xspace}
\usepackage{thm-restate}
\usepackage{enumitem}

\usepackage[hyphens]{url}
\usepackage{hyperref}
\usepackage[hyphenbreaks]{breakurl}
\hypersetup{
	hidelinks,
	colorlinks=true,
	allcolors=black,
	pdfstartview=Fit,
	breaklinks=true
}


\newtheorem{theorem}{Theorem}
%
\newtheorem{lemma}[theorem]{Lemma}%
\newtheorem{invariant}[theorem]{Invariant}

\newtheorem{remark}{Remark}%

\newtheorem{definition}{Definition}%
\newtheorem{example}{Example}%

\raggedbottom

\let\oldunder\underbar
\renewcommand{\underbar}[1]{\emph{\oldunder{#1}}}

\makeatletter
\@ifpackageloaded{stix}{%
}{%
	\DeclareFontEncoding{LS2}{}{\noaccents@}
	\DeclareFontSubstitution{LS2}{stix}{m}{n}
	\DeclareSymbolFont{stix@largesymbols}{LS2}{stixex}{m}{n}
	\SetSymbolFont{stix@largesymbols}{bold}{LS2}{stixex}{b}{n}
	\DeclareMathDelimiter{\lBrace}{\mathopen} {stix@largesymbols}{"E8}%
	{stix@largesymbols}{"0E}
	\DeclareMathDelimiter{\rBrace}{\mathclose}{stix@largesymbols}{"E9}%
	{stix@largesymbols}{"0F}
}
\makeatother


\newcommand{\abs}[1]{| #1 |}

\newcommand{\multiset}[1]{\lBrace #1 \rBrace}

\newcommand{\source}{\mathcal{C}_{S}} 
\newcommand{\target}{\mathcal{C}_{T}} 
\newcommand{\capacity}{\kappa} 
\newcommand{\universe}{U} 
\newcommand{\configuration}{\mathcal{C}} 
\newcommand{\configurationtwo}{\mathcal{D}} 

\newcommand{\bin}{bunch\xspace}
\newcommand{\bins}{bunches\xspace}

\newcommand{\binb}{B} 
\newcommand{\bintypesnoparameters}{\textsc{BunchTypes}} 
\newcommand{\binmultfunction}{\mathit{mult}}
\newcommand{\binmult}[2]{\binmultfunction(#1, #2)} 

\newcommand{\configurationFFD}{{\configuration}_{\mathit{FFD}}} 

\newcommand{\AtLeast}[3]{\mathit{AtLeast}(#1,#2,#3)}

\newcommand{\maxi}{\ensuremath{u_i}\xspace}

\newcommand{\bundleformname}{Bundles}
\newcommand{\bundleformrm}[2]{\ensuremath{\mathrm{\bundleformname}(#1,#2)}\xspace}
\newcommand{\bundleformit}[2]{\ensuremath{\mathit{\bundleformname}(#1,#2)}\xspace}
\newcommand{\bundleformbf}[2]{\ensuremath{\mathbf{\bundleformname}(#1,#2)}\xspace}

\newcommand{\AlgPowOfTwo}{\texttt{SettleItems}\xspace}
\newcommand{\ms}{Merge-Slack\xspace}

\newcommand{\All}{{A}ll}

\newcommand{\largest}{\ell{(\configuration, \target)}}
\newcommand{\largestSource}{\ell{(\source, \target)}}

\newcommand{\brepackingk}{\textsc{$\beta$-Repacking-$\capacity$}}
\newcommand{\partition}{\textbf{P}} 
\newcommand{\binnumber}{\beta} 

\newcommand{\partitionbound}{\beta} 
\newcommand{\partitionpart}{P} 
\newcommand{\boundedpartition}{\partitionbound\text{-bounded partition}\xspace} 
\newcommand{\sequence}[1]{\left( #1 \right)} 


\newcommand{\boundedsubs}[2]{\beta\text{-}\textsc{Subs}(#1, #2)}
\newcommand{\bsubs}{\beta\text{-}\textsc{Subs}}

\newcommand{\subconfig}{\mathcal{I}} 
\newcommand{\subconfigtwo}{\mathcal{J}} 
\newcommand{\subconfigthree}{\mathcal{K}} 
\newcommand{\subconfigfour}{\mathcal{L}} 

\newcommand{\assignment}{\textbf{A}}
\newcommand{\sourceassignment}{\textbf{A}_{\cal S}}
\newcommand{\targetassignment}{\textbf{A}_{\cal T}}
\newcommand{\assignmentpart}{\mathcal{A}}
\newcommand{\assignmenttwo}{\textbf{B}}

\newcommand{\areconfsequence}{\mathbb{A}}
\newcommand{\portion}[2]{\textsc{Portion}(#1 , #2)}

\newcommand{\partitionILP}{\textsc{Partition ILP}}
\newcommand{\demand}{d}
\newcommand{\flow}{f}

\newcommand{\graph}{G} 
\newcommand{\vertexset}{V(G)} 
\newcommand{\edgeset}{E(G)} 
\newcommand{\edgesetgeneric}{E} 
\newcommand{\euv}{uv} 
\newcommand{\inedges}{E^{+}} 
\newcommand{\outedges}{E^{-}} 

\newcommand{\flowsum}{\oplus}
\newcommand{\flowsummation}{\bigoplus}
\newcommand{\flowsub}{\ominus}

\newcommand{\innode}{x} 
\newcommand{\internode}{y} 
\newcommand{\outnode}{z} 
\newcommand{\inputnodes}{X} 
\newcommand{\internodes}{Y} 
\newcommand{\outputnodes}{Z} 



\usepackage{microtype}

\date{}
\begin{document}
	
	\title{Reconfiguration of Multisets \\ with Applications to \textsc{Bin Packing}\thanks{A preliminary version of this paper appeared in the proceedings of the 18th International Conference and Workshops on Algorithms and Computation (WALCOM 2024).\\Research supported by the Natural Sciences and Engineering Research Council of Canada (NSERC)}}

	\author{\small Jeffrey Kam \\ \textit{\footnotesize University of Waterloo}
		\and
		\small Shahin Kamali \\ \textit{\footnotesize York University}
		\and
		\small Avery Miller \\ \textit{\footnotesize University of Manitoba}
		\and 
		\small Naomi Nishimura \\ \textit{\footnotesize University of Waterloo}}
	
		\maketitle
		
	\begin{abstract}
		We use the reconfiguration framework to analyze problems that involve the rearrangement of items among groups. In various applications, a group of items could correspond to the files or jobs assigned to a particular machine, and the goal of rearrangement could be improving efficiency or increasing locality. 
		
		To cover problems arising in a wide range of application areas, we define the general \textsc{Repacking} problem as the rearrangement of multisets of multisets. We present hardness results for the general case and algorithms for various classes of instances that arise in real-life scenarios. 
		By limiting the total size of items in each multiset, our results can be viewed as an offline approach to \textsc{Bin Packing}, in which each bin is represented as a multiset.
		
		In addition to providing the first results on reconfiguration of multisets, our contributions open up several research avenues:  the interplay between reconfiguration and online algorithms and parallel algorithms; the use of the tools of linear programming  in reconfiguration; and, in the longer term, a focus on extra resources in reconfiguration.
		
		\end{abstract}

	\linespread{1.1}\selectfont
	\section{Introduction}
	
	We consider the problem of rearranging items in multisets, from a given source arrangement to a specified target arrangement. Although our techniques draw on the problem of \textsc{Bin Packing} and the area of reconfiguration, each one is considered in a non-traditional way: we view bins and items of the same size as indistinguishable, and we focus on the feasibility of reconfiguration under various conditions. In doing so, we set the stage for the exploration of {\em resource-focused reconfiguration}, in which the goal is to determine which extra resources, if any, are needed to make reconfiguration possible.
	
	The reconfiguration framework~\cite{Ito2011} has been used to consider the step-by-step modification among configurations, which may encode information such as solutions to the instance of a problem or the state of a geometric object, game, or puzzle. Common types of questions framed using reconfiguration include structural properties of the {\em reconfiguration graph}, formed by adding edges between a configuration and any other configuration resulting from the execution of a single {\em reconfiguration step}, as well as the reachability of one configuration from another~\cite{Heuvel2013,Nishimura2018}. In our context, we wish to transform a source configuration into a target configuration by a sequence of reconfiguration steps (or \emph{reconfiguration sequence)}; of particular import is the fact that each intermediate configuration in the sequence conform to the same constraints as the source and target.
	
	Ito et al. studied a related problem~\cite{ItoDErr,ItoD14}, where the objective is to reconfigure one ``feasible packing" of an instance of the \textsc{knapsack} problem to another, where a feasible packing is defined as a subset of items summing to a value in a given range. Under the assumption that the intermediate packings must be feasible, the authors present hardness results for the decision problem and a polynomial-time approximation scheme (PTAS) for the optimization problem. In another related problem~\cite{Cardinal2020}, Cardinal et al. showed the PSPACE-hardness of reconfiguring between two subsets of items with the same sum, where each step is defined by adding or removing at most three items while maintaining the same sum across intermediate subsets. In contrast, in our work, each configuration consists of multiple bins, not a single bin, and every item must be packed.
	
	We define a problem, \textsc{Repacking}, where, as in \textsc{Bin Packing}, the goal is to maintain a grouping of items that forms a packing. Packings are naturally modeled as multisets, since neither the ordering of multisets nor the ordering of items within a multiset are important. At a high level, we are considering the reconfiguration of items in unlabeled, and hence indistinguishable, bins. To avoid confusion with traditional bin packing (where the bins are distinguishable containers), we refer to each group of items as a \emph{\bin} instead of a bin. For full generality, there are no constraints on the numbers or types of items, composition of \bins, or allowable packings. 
	
	Due to the hardness of the most general form of the problem (Section~\ref{sec-hardness}), we turn to situations that might naturally arise in real-life uses of repacking. In many settings, such as virtual machine placement, the maximum sum of sizes of items in a multiset can be set to a fixed constant.
	To mimic the capacity of a container, we define a \textit{capacity} as the upper bound on the sum of sizes of items in a \bin. 
	From a practical standpoint, it may be reasonable to consider scenarios in which the sizes of items and capacities are constant with respect to the number of multisets under consideration. 
	For example, a typical data center contains thousands of homogeneous Physical Machines (PMs), each having a bounded capacity, in terms of bandwidth, CPU, memory, and so on, and hosting a relatively small number of Virtual Machines (VMs). Virtual Machines (``items") have various loads, and the total load of VMs assigned to a PM (a ``$\textrm{\bin}$") must not exceed the uniform capacity of the PMs. 
	It is desirable to reconfigure a given assignment of VMs to PMs for reasons such as performing maintenance, balancing loads, collocating VMs that communicate with each other, and moving VMs apart to minimize the fault domain~\cite{VMWareVM,MedinaG14}. Such reconfiguration must occur in steps that involve migrating a VM from one PM to another while respecting the bounded capacity of the PMs at each step.
	
	Our paper is structured as follows. In Section~\ref{sec-prelimes}, we formally define the \textsc{Repacking} problem. We prove that \textsc{Repacking} is strongly NP-hard in Section~\ref{sec-hardness}. Then, we provide algorithms for instances relevant to real-life scenarios. In Section~\ref{sect:specialSmallItems}, we consider instances in which all item sizes are bounded above by a constant fraction of the capacity, and give an algorithm that reconfigures any instance in which the unoccupied space is sufficiently large. In Section~\ref{Sect:SpecialPowers}, we consider a setting in which all item sizes and the capacity are powers of 2, fully characterize when reconfiguration is possible in such instances, and give a reconfiguration algorithm for all reconfigurable instances. Motivated by the possibility of solving reconfiguration problems in parallel, Section~\ref{sect:specialStructure} gives an algorithm that determines, for any given instance, whether or not reconfiguration is possible by partitioning the source configuration into smaller parts and only moving an item within its assigned part. In Section~\ref{sec-future}, we present directions for future work.


	\section{Preliminaries}\label{sec-prelimes}
	
	Our goal is to determine whether it is possible rearrange items from one configuration (multiset of multisets of items) to another.  Stated in the most general terms, items can have a variety of attributes, such as distinct IDs, and multisets can have not only various attributes but also different capacities. In the remainder of the paper, we simplify the discussion and notation by assuming that {\em items} are positive integers representing sizes (where items of the same size are indistinguishable) and {\em \bins} are multisets of items (where \bins corresponding to the same multiset are indistinguishable). To facilitate explanation of algorithms or analysis, at times we might assign names to specific items or \bins for greater clarity. 
	
	To avoid confusion among items, \bins (multisets of items), and multisets of \bins (multisets of multisets of items), in our choice of notation, we use the convention that a lower-case letter refers to an item or a number (where at times Greek letters will also be used for numbers), an upper-case letter refers to a set or multiset, and a calligraphic letter refers to a multiset of multisets. 
	
	To give an example of rearrangement, suppose we wish to transform the two \bins, $\left\lBrace 1, 1, 2, 6\right\rBrace$ and $\left\lBrace 2, 3, 5\right\rBrace$ (the first \bin and second \bin, respectively) into the two \bins $\left\lBrace 1, 3, 6\right\rBrace$ and $\left\lBrace 1, 2, 2, 5\right\rBrace$. In the absence of constraints, we can achieve our goal by moving items 1 and 2 from the first \bin to the second \bin and item 3 from the second \bin to the first \bin. The task proves more challenging when there is an upper bound on the sum of items in any \bin, and rendered impossible without extra storage space if the upper bound is 10.
	
	Although in general the sum of all items in a \bin may be unconstrained,  in this paper we view each \bin as having a fixed positive integer {\em capacity}, typically denoted by $\capacity$, as an upper bound. We use $\mathit{vol}(B)$, the {\em volume of $B$}, to denote the sum of all the items in \bin $B$, and require that $\mathit{vol}(B) \le \capacity$. The term {\em slack}, denoted $\mathit{slack}(B)$, is used to refer to the amount of space in $B$ that is not occupied by items, where $\mathit{slack}(B) = \capacity - \mathit{vol}(B)$. By definition, the slack of a \bin is non-negative. A \bin is {\em empty} if it contains no items; in such a case, the slack will equal the capacity. 
	
	Formalizing our terminology, we define a {\em configuration} to be a multiset of \bins, and a {\em legal configuration for capacity $\capacity$} to be a configuration in which the volume of each \bin is at most $\capacity$. The {\em underlying set of a configuration} $\configuration$,  denoted $U({\configuration})$, is the multiset union of items in all \bins in a configuration; $U$ is used without a parameter to denote the underlying set of the source and target configurations. 
	
	We transform one configuration to another by a series of steps, each corresponding to the move of a single item.
	More formally, we move a single item $u$ from one \bin (the {\em donor \bin}, $B_d$) to another \bin (the {\em recipient \bin}, $B_r$), where we reuse the names $B_d$ and $B_r$ to describe two different \bins, and two different points in time (before and after the move). Between the two points of time, $B_d$ is changed only by the removal of $u$, and $B_r$ is changed only by the addition of $u$. The pairs of \bins ($B_d$ and $B_r$ at the first point in time and $B_d$ and $B_r$ at the second point in time) are said to correspond to the {\em move of a single item}.
	
	We consider legal configurations ${\configuration}$ and ${\configurationtwo}$ for capacity $\capacity$ to be {\em adjacent} if it is possible to form ${\configurationtwo}$ from ${\configuration}$ by the move of a single item. More formally,
	${\configuration} \setminus {\configurationtwo}$ consists of two {\em old \bins},
	${\configurationtwo} \setminus {\configuration}$ consists of two {\em new \bins}, and the pair of old \bins and the pair of new \bins correspond to the move of a single item. We say that we can get from ${\configuration}$ to ${\configurationtwo}$ in a single {\em reconfiguration step} if the two configurations are adjacent.
	A {\em reconfiguration sequence for capacity $\capacity$} from a source configuration $\source$ to $\target$ consists of a sequence of legal configurations for capacity $\capacity$,  $\source = {\configuration}_0, {\configuration}_1, \ldots, \target$ such that each pair of consecutive configurations ${\configuration}_i$ and ${\configuration}_{i+1}$ are adjacent.
	
	We define the following problem \textsc{Repacking}, with instance $\left(\mathcal{C}_S, \mathcal{C}_T, \capacity\right)$, as defined below:
	
	\begin{description}
		\item[Input:] Source and target legal configurations $\source$ and $\target$ for capacity $\capacity$, with $|\source|=|\target|$.
		\item[Question:] Is there a reconfiguration sequence for capacity $\capacity$ from $\source$ to $\target$?
	\end{description}
	
	Since, as we will show in Theorem~\ref{lemma-NPcomplete}, \textsc{Repacking} is strongly NP-hard in general, it is natural to consider extensions and variants of the problem.   As defined, the number of \bins is preserved at each step in the sequence, and hence the number of \bins available for reconfiguration is the number of \bins in the source (and hence the target) configuration, denoted $|\source|$. More generally, we consider the interplay among various properties of an instance and the relationship between the sum of the items and the total capacity of all \bins. 
	
	
	\section{Complexity of \textsc{Repacking}}\label{sec-hardness}
	In this section, we prove that \textsc{Repacking} is strongly NP-hard via a reduction from \textsc{Bin Packing}, which is known to be strongly NP-hard~\cite{GareyJ79}. The \textsc{Bin Packing} problem is defined as follows: given a multiset of $n$ positive integers $\multiset{z_1,...,z_n}$, a positive integer $m$, and a positive integer $\alpha$, determine whether or not $\multiset{z_1,...,z_n}$ can be partitioned into at most $m$ multisets such that the numbers in each multiset sum to at most $\alpha$. At a high level, the reduction creates an instance of \textsc{Repacking} where the source and target configurations each consist of $n+2$ \bins, and only differ in two \bins in such a way that the reconfiguration requires a swap of two large items between these two \bins. But to perform the swap, many small items must be temporarily moved out from one of the two \bins into the remaining $n$ \bins not involved in the swap, and, to create enough temporary storage space for these small items, the items with sizes $z_1,\ldots,z_n$ must be rearranged so that their arrangement is ``compressed" from the $n$ \bins down to at most $m$ \bins that each only have $\alpha$ free space. Thus, the swap of the two large items can be carried out if and only if the integers $z_1,\ldots,z_n$ can be partitioned into at most $m$ parts with the sum of integers in each part being at most $\alpha$. 

    Our reduction consists of two main steps. First, we define an intermediate problem (a restricted version of \textsc{Bin Packing}) and prove that it is also strongly NP-hard. In particular, we define \textsc{Restricted Bin Packing} as follows:
    \begin{description}
		\item[Input:] a multiset of $n$ positive integers $\multiset{z_1,...,z_n}$, a positive integer $m$, and a positive integer $\alpha$ such that $\alpha \geq \max\{z_1,\ldots,z_n\}$, $\alpha \geq 2$, and $n \geq 2m+2$.
		\item[Question:] Can $\multiset{z_1,...,z_n}$ be partitioned into at most $m$ multisets such that the numbers in each multiset sum to at most $\alpha$?
	\end{description}
    
    \begin{restatable}{lemma}{restrictedBP}\label{lem:restrictedBP}
    \textsc{Restricted Bin Packing} is strongly NP-hard.
    \end{restatable}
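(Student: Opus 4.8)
The plan is to reduce from \textsc{Bin Packing}, and more precisely from the \textsc{3-Partition}-style instances that already witness its strong NP-hardness. Recall that such an instance has $n = 3m$ items $\multiset{z_1,\ldots,z_n}$, each with $\alpha/4 < z_i < \alpha/2$ and $\sum_i z_i = m\alpha$, and is a yes-instance of \textsc{Bin Packing} exactly when the items can be packed into $m$ bins of capacity $\alpha$ (equivalently, split into $m$ triples each summing to $\alpha$); \textsc{Bin Packing} is strongly NP-hard even when the input is promised to be of this form.

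First I would dispose of the degenerate case $m \le 1$: if $m = 0$ the instance is a yes-instance iff it has no items, and if $m = 1$ it has three items summing to $\alpha$ and is therefore a yes-instance, so in both cases the answer is decidable in polynomial time and the reduction can output a fixed yes-instance of \textsc{Restricted Bin Packing}, say $(\multiset{1,1,1,1}, 1, 4)$, or a fixed no-instance, say $(\multiset{1,1,1,1}, 1, 2)$, each of which is easily checked to satisfy $\alpha \ge \max_i z_i$, $\alpha \ge 2$, and $n \ge 2m+2$. For $m \ge 2$ the reduction simply outputs the given instance unchanged, now read as a \textsc{Restricted Bin Packing} instance; the three side constraints hold because $z_i < \alpha/2$ gives $\alpha \ge \max_i z_i$, because $z_i \ge 1$ together with $z_i < \alpha/2$ forces $\alpha > 2$, and because $n = 3m \ge 2m+2$ whenever $m \ge 2$. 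The map runs in polynomial time, changes no numbers (so magnitudes and encoding length are preserved, as strong NP-hardness requires), and preserves yes/no answers, so \textsc{Restricted Bin Packing} is strongly NP-hard.

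The hard part is really the constraint $n \ge 2m+2$: it is what dictates the choice of starting problem. One cannot obtain it by padding an arbitrary \textsc{Bin Packing} instance, because the only padding items that provably leave the answer unchanged are items of size $\alpha$ (each of which must occupy a bin by itself, since every item has size at least $1$), and adding such items together with one fresh bin apiece increases the number of items and the number of bins at the same rate, never improving the item-to-bin ratio; smaller padding items can instead merge with existing items and change the answer. Starting from \textsc{3-Partition}-style instances sidesteps this entirely, since there $n = 3m$ already supplies the required slack for every $m \ge 2$, while the other two constraints come for free from the size bounds $\alpha/4 < z_i < \alpha/2$.
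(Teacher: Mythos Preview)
Your reduction is correct, but it follows a different route from the paper's. The paper reduces from arbitrary \textsc{Bin Packing}: after disposing of a handful of boundary cases ($\alpha < \max_i z_i$; $m \ge n$; $m = n-1$; $\alpha = 1$), it scales every $z_i$ and $\alpha$ by a factor of $4m+4$, then pads the multiset with $2m+2$ copies of $1$ while adding $2m+2$ to $\alpha$. The scaling guarantees the new unit items are too small to interact with the scaled originals, so the answer is preserved and the padded instance meets all three side constraints. Your approach instead starts from \textsc{3-Partition}-shaped instances where $n = 3m$ and the size bounds $\alpha/4 < z_i < \alpha/2$ deliver all three constraints for free once $m \ge 2$; this is shorter and avoids the scaling trick entirely, at the cost of appealing to a slightly more specific (but standard) hardness source. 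One minor point: your final paragraph claims padding cannot work, but the paper's proof shows that it can, provided one first scales so the padding items are strictly smaller than any gap the original items could leave; your remark is therefore overstated, though it does not affect the correctness of your own reduction.
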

    The proof of Lemma \ref{lem:restrictedBP} is provided in Appendix \ref{app:hardness}. Next, we prove that \textsc{Repacking} is strongly NP-hard via a reduction from \textsc{Restricted Bin Packing}.

	\begin{theorem}\label{lemma-NPcomplete} 
		\textsc{Repacking} is strongly NP-hard.
	\end{theorem}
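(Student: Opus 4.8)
The plan is to realize the idea sketched above as a polynomial-time reduction from \textsc{Restricted Bin Packing}, which is strongly NP-hard by Lemma~\ref{lem:restrictedBP} and whose parameters are encoded in unary. Starting from an instance $\left(\multiset{z_1,\dots,z_n},m,\alpha\right)$ --- recall $\alpha\ge\max_i z_i$, $\alpha\ge 2$, $n\ge 2m+2$, so $n-m\ge m+2\ge 3$ --- I would construct a \textsc{Repacking} instance on $n+2$ \bins with capacity $\capacity=\Theta((n-m)\alpha)$. The first $n$ \bins are \emph{anchored}: \bin $i$ contains an enormous \emph{anchor} item $a$ together with the item $z_i$, with $a$ chosen so that $\capacity=a+\alpha$; thus an anchored \bin always has slack at most $\alpha$, can never receive a second anchor, and can hold extra $z$'s on top of its own exactly when the $z$'s it ends up carrying sum to at most $\alpha$ (so an anchored \bin stripped of its $z$'s has slack exactly $\alpha$). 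The remaining two \bins, $D$ and $E$, hold two large items (each of size greater than $\alpha$, so neither fits in any anchored \bin) and several size-$\alpha$ \emph{blocks}, with the sizes and block counts tuned so that $D$ is full carrying exactly $n-m$ blocks, $E$ is full, and the target configuration --- which agrees with the source on all $n$ anchored \bins and repartitions the items of $D\cup E$ into two legal \bins differently from the source (e.g.\ pooling both large items in one \bin and all the blocks in the other) --- can only be reached after all $n-m$ blocks have left $D$. All item sizes are $O((n-m)\alpha)$ and there are $O(n)$ \bins and items, so the reduction runs in polynomial time and keeps the unary size bound, hence proves strong NP-hardness; a harmless lower bound on $\alpha$ (the bounded-$\alpha$ case of \textsc{Restricted Bin Packing} being polynomial) can be used to make the size arithmetic and item-size distinctness work out.

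The easy direction: given a solution to \textsc{Restricted Bin Packing}, a partition of $\multiset{z_1,\dots,z_n}$ into at most $m$ parts of sum $\le\alpha$, I would write down a reconfiguration sequence directly. First, following the partition, slide the $z$'s onto $m$ of the anchored \bins --- an anchored \bin accepts an incoming $z$ whenever the $z$'s already on it together with the newcomer still sum to $\le\alpha$, which holds along every prefix of a part, and processing parts in a fixed order avoids collisions --- freeing at least $n-m$ anchored \bins of their $z$'s, each now with slack exactly $\alpha$. Next, move the $n-m$ blocks out of $D$, one into each freed \bin. Then $D$ is light enough to carry out the repartition of $D\cup E$ in a constant number of moves. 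Finally, reverse the first two phases. Each step passes a routine volume check.

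The hard direction --- a reconfiguration sequence existing forces \textsc{Restricted Bin Packing} to be a yes-instance --- is where the weight of the argument lies. The intended reasoning would run as follows. An induction along the sequence shows that no anchor item ever moves: moving one needs a recipient \bin whose contents total at most $\alpha$, and one argues that the only \bins that can ever get that empty are $D$ and $E$, while emptying either of them that far is itself impossible --- it would require first relocating a large item, which cannot enter an anchored \bin and so needs $D$ or $E$ to already be substantially emptied, and tracing this dependency shows it never unlocks. With anchors pinned, every anchored \bin permanently has slack at most $\alpha$ and so never holds a large item; hence the two large items remain confined to $\{D,E\}$, and reaching the target forces, at some step, both large items into a single \bin, which by the tuned sizes is possible only once all $n-m$ of $D$'s blocks have been evacuated. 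Each evacuated block can come to rest only in an anchored \bin holding no $z$ (the only \bins with $\alpha$ free space), and each such \bin holds at most one block, so at that moment at least $n-m$ anchored \bins hold no $z$; therefore the $n$ items $z_1,\dots,z_n$ --- which one also argues cannot be parked inside $D$ or $E$ --- all sit in the remaining $\le m$ anchored \bins, each within its $\alpha$ budget. That is a solution to \textsc{Restricted Bin Packing}, and combined with Lemma~\ref{lem:restrictedBP} the theorem follows.

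The step I expect to be the main obstacle is exactly this last direction: promoting the informal claim ``the swap forces a full compression of the $z$'s into $m$ \bins'' to a rigorous invariant that survives arbitrarily interleaved moves and excludes every shortcut --- moving anchors, temporarily stashing blocks or even $z$'s inside $D$ or $E$, or shuttling items through the two swap \bins. The point requiring the most care in the construction is, correspondingly, choosing the item sizes and block counts so that $D$ and $E$ truly leak nothing: so that no anchored \bin can ever hold a block while still holding a $z$, and so that exactly $n-m$ (not $n-m-1$) anchored \bins must be emptied --- an off-by-one here would merely reduce from \textsc{Restricted Bin Packing} with parameter $m+1$ instead of $m$, still strongly NP-hard but changing the bookkeeping. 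The forward sequence and the polynomial-size argument are routine by comparison.
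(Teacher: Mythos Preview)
Your approach is essentially the paper's: reduce from \textsc{Restricted Bin Packing}, anchor $n$ \bins with immovable large items so each has at most $\alpha$ of usable space, and design two ``swap'' \bins whose required rearrangement forces all $n-m$ size-$\alpha$ blocks out into anchored \bins, which in turn forces the $z_i$'s into at most $m$ anchored \bins. Two refinements are worth noting. First, the paper pins the anchors by a one-line total-slack bound --- the sum of all slack across the configuration is strictly less than one anchor, so no \bin can ever receive one --- which is cleaner than the circular dependency you sketch. Second, for the swap \bins the paper takes items of sizes $\capacity/2$ and $\capacity/2-1$ and merely interchanges one $\capacity/2$ with the $\capacity/2-1$ between the two nonmatching \bins; your illustrative target of pooling both large items in a single \bin runs into arithmetic trouble if you also insist that both $D$ and $E$ start full.

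The concern you explicitly flag --- that some $z_i$ might be parked inside $D$ or $E$ at the critical moment and thereby evade the $m$-\bin count --- is exactly the delicate point, and you are right to single it out. In the paper's concrete numbers the second swap \bin carries slack $1$, so a $z_i=1$ can in fact hide there when the blocks are gone (take $n=4$, $m=1$, $\alpha=3$, all $z_i=1$: the \textsc{Restricted Bin Packing} instance is a no-instance, yet the corresponding \textsc{Repacking} instance is reconfigurable by stashing one $z_i$ in that unit of slack). The fix is easy --- e.g.\ scale the \textsc{Restricted Bin Packing} instance so every $z_i\ge 2$, whence nothing fits into that one unit of slack at the critical configuration --- but your instinct that the construction must be tuned so the swap \bins ``truly leak nothing'' is precisely correct, and needs to be made explicit rather than deferred. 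Note that starting $D$ and $E$ with zero slack is not by itself sufficient, since slack opens up once blocks depart; what you must control is the contents of the swap \bins at the moment the large-item move becomes possible.
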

	
	\begin{proof}
		We give a polynomial-time reduction from \textsc{Restricted Bin Packing} to \textsc{Repacking}, which is sufficient by Lemma \ref{lem:restrictedBP}. Consider any instance $P=(z_1,...,z_n,m,\alpha)$ of \textsc{Restricted Bin Packing}. Define $\alpha_1,\ldots,\alpha_{n-m}$ as $n-m$ items of size $\alpha$. The output of our reduction is the instance $(\source,\target,\capacity)$ of $\textsc{Repacking}$ with
		\begin{align*}
			\source &= \multiset{\multiset{z_1,\capacity-\alpha},\ldots,\multiset{z_n,\capacity-\alpha},\multiset{\capacity/2,\capacity/2},\multiset{\capacity/2 - 1,\alpha_1,\ldots,\alpha_{n-m}}}\\
			\target &= \multiset{\multiset{z_1,\capacity-\alpha},\ldots,\multiset{z_n,\capacity-\alpha},\multiset{\capacity/2-1,\capacity/2},\multiset{\capacity/2,\alpha_1,\ldots,\alpha_{n-m}}}\\
			\capacity &= 2(n-m)\alpha
		\end{align*}

		Note that the source and target configurations each consist of $n+2$ \bins and are nearly identical: the first $n$ \bins of $\source$ are the same as the first $n$ \bins of $\target$, and the last two \bins of $\target$  are obtained from the last two \bins of $\source$ by swapping an item of size $\capacity/2$ with the item of size $\capacity/2-1$. In what follows, we refer to the first $n$ \bins listed above in the source and target configurations as the {\em matching \bins}, and we refer to the last 2 \bins as the {\em nonmatching \bins}. Also note that, by our choice of $\capacity$, the sum of the sizes of items $\alpha_1,\ldots,\alpha_{n-m}$ is exactly $\capacity/2$. An illustration of the source and target configurations is provided in Figure \ref{fig:Reduction}.

  \begin{figure}[h!]
		\centering
		\includegraphics[scale=0.6]{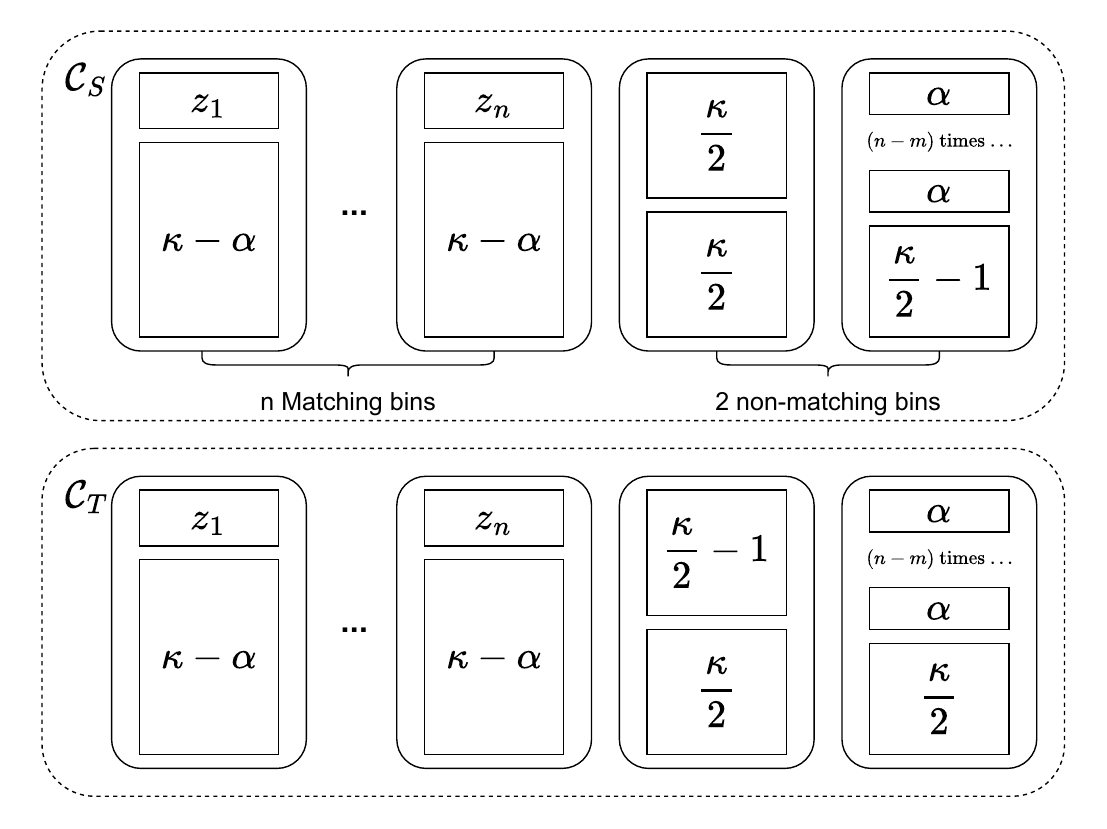}
		\caption{An illustration of our reduction's output instance of \textsc{Repacking}.}
		\label{fig:Reduction}
	\end{figure}
		
		We now proceed to prove that the reduction is correct. First, suppose that $P=(z_1,...,z_n,m,\alpha)$ is a yes-instance of \textsc{Restricted Bin Packing}. Recall that this means that the integers $z_1,\ldots,z_n$ can be partitioned into at most $m$ parts of size at most $\alpha$. It follows that, starting from the source configuration $\source$, we are able to rearrange the items with sizes $z_1,\ldots,z_n$ in the $n$ matching \bins so that these $n$ items are located in at most $m$ of the first $n$ \bins (along with the item of size $\capacity - \alpha$ in each \bin). After doing so, there are at least $n-m$ remaining matching \bins, each consisting of one item of size $\capacity-\alpha$, and thus each has slack equal to $\alpha$. Then, we move each of the $n-m$ items of size $\alpha$ (i.e., $\alpha_1,\ldots,\alpha_{n-m}$) from the last nonmatching \bin into one of the above-mentioned matching \bins that has slack equal to $\alpha$. Since the sum of the sizes of items $\alpha_1,\ldots,\alpha_{n-m}$ is exactly $\capacity/2$, it follows that the last nonmatching \bin now has slack equal to $\capacity/2 + 1$, and thus we exchange an item of size $\capacity/2$ with the item of size $\capacity/2-1$ between the two nonmatching \bins. Finally, we reverse the steps described above to return all other items to their original locations, and we have reached the target configuration $\target$, which proves that $(\source, \target, \capacity)$ is a yes-instance of \textsc{Repacking}, as required.
		
		Conversely, suppose that $(\source, \target, \capacity)$ is a yes-instance of \textsc{Repacking}. In $\source$, the first nonmatching \bin contains the only two items of size $\capacity/2$, and in $\target$ there is a \bin $\multiset{\capacity/2-1,\capacity/2}$, so it follows that an item of size $\capacity/2$ must be moved out of the first nonmatching \bin of $\source$ at some step of the reconfiguration sequence. The remainder of the proof argues that there is only one way this can happen.
		
		First, we set out to prove that no item of size $\capacity/2$ or $\capacity/2 - 1$ can ever be moved into any of the $n$ matching \bins. In the source configuration $\source$, observe that each of the $n$ matching \bins contains one item of size $\capacity-\alpha$ and one item whose size is a positive integer, so each of the $n$ matching \bins has slack at most $\alpha-1$, which implies that the total slack across these \bins is at most $n(\alpha-1)$. Further, the first nonmatching \bin has slack equal to 0, and the second nonmatching \bin has slack equal to 1. Thus, in $\source$, the total slack across all $n+2$ \bins is at most $n(\alpha-1)+1$. Since no items or \bins can be added or removed during reconfiguration, it follows that the total slack across all \bins is at most $n(\alpha-1)+1$ in every configuration of every reconfiguration sequence.
		Now, using the fact that we set $\capacity=2(n-m)\alpha$ and using the input restriction $n \geq 2m+2$ from the definition of \textsc{Restricted Bin Packing}, we see that $\capacity - \alpha = 2(n-m)\alpha - \alpha \geq 2(n-\frac{n}{2}+1)\alpha - \alpha = (n+1)\alpha > n\alpha \geq n(\alpha-1)+1$.
		This means that each item of size $\capacity - \alpha$ is strictly larger than the total slack in every configuration, so none of the items of size $\capacity - \alpha$ can ever be moved. Therefore, each of the $n$ matching \bins always contains an item of size $\capacity - \alpha$, i.e., the amount of slack in each such \bin is always at most $\alpha$. Next, since $\capacity = 2(n-m)\alpha$ and $n \geq 2m+2 \geq m+2$, we see that $\alpha \leq \capacity/4$, so it follows that the amount of slack in each of the $n$ matching \bins will never be more than $\capacity/4$. Finally, using the input restrictions $n \geq 2m+2$ and $\alpha \geq 2$ from the definition of \textsc{Restricted Bin Packing}, we see that $\capacity = 2(n-m)\alpha > 16$, so $\capacity/4 < \capacity/2-1$. Thus, we have shown that the amount of slack in each of the $n$ matching \bins is always strictly less than $\capacity/2-1$, so no item of size $\capacity/2$ or $\capacity/2 - 1$ can ever be moved to one of the $n$ matching \bins, as desired.
		
		From the fact proven in the previous paragraph, we know that to move one of the items of size $\capacity/2$, it must be moved directly from the first nonmatching \bin to the second nonmatching \bin in a single move. For such a move to occur, the reconfiguration sequence must first ensure that the amount of slack in the second nonmatching \bin is at least $\capacity/2$. To create this amount of slack in the second nonmatching \bin, note that we cannot move out the item of size $\capacity/2-1$ since we proved above that it will never fit into any of the $n$ matching \bins. So the only option is to move out items of size $\alpha$. If we do not move out all $n-m$ items of size $\alpha$, then the \bin will contain an item of size $\capacity/2 - 1$ and at least one item of size $\alpha \geq 2$, i.e., the volume would be at least $\capacity/2 + 1$, which means that the amount of slack would be at most $\capacity/2 - 1$, which is insufficient. Thus, we conclude that all $n-m$ items of size $\alpha$ must be moved out of the second nonmatching \bin. Since the amount of slack in the first nonmatching \bin is equal to 0, these $n-m$ items of size $\alpha$ must be moved to the matching \bins. To be able to do this, we need to reconfigure the $n$ matching \bins so that the amount of slack in at least $n-m$ of them is exactly $\alpha$ (this is because each matching \bin always contains an item of size $\capacity-\alpha$). Thus, the items with sizes $z_1,\ldots,z_n$ must all be moved into at most $m$ of the $n$ matching \bins, each also containing an item of size $\capacity - \alpha$. But this means that the items with sizes $z_1,\ldots,z_n$ have been partitioned into at most $m$ parts with the item sizes in each part summing to at most $\alpha$, and so $(z_1,\ldots,z_n,m,\alpha)$ is a yes-instance of \textsc{Restricted Bin Packing}, as required.
	\end{proof}
	
	
	\section{Special Case: Instances with Small Items}\label{sect:specialSmallItems}
	Intuitively, if items are relatively small and there is enough slack in the source and target configurations, it seems feasible to reconfigure them. In this section, we formalize and prove this intuition. 
	In particular, we consider a setting where all items are of size at most $\capacity/\alpha$, for some integer $\alpha >1$. 
	We define the \emph{average slack} of \bins in a configuration $\configuration$ by $\frac{1}{|\configuration|} \sum_{\binb \in \configuration}\mathit{slack}(B)$, and we assume that the average slack of $\source$ is at least $\frac{\capacity}{\alpha+1} + \frac{3\alpha\capacity}{(\alpha+1)|\source|}$. Since $|\source|=|\target|$ and the total volume of all \bins is the same in $\source$ and $\target$, this assumption implies that the average slack of \bins of $\target$ is at least $\frac{\capacity}{\alpha+1} + \frac{3\alpha\capacity}{(\alpha+1)|\target|}$. 
	For larger inputs, this means that the average slack of \bins converges to a value at least $\frac{\capacity}{\alpha+1}$. 
	We will describe an algorithm that, under these assumptions, reconfigures  $\source$ to  $\target$ (as formally stated in Theorem~\ref{thm-small-items}).
	
	Consider a configuration resulting from sorting items in the underlying set $U$ in non-increasing order of their sizes and placing them, one by one, in the first \bin with enough slack, and placing them in a new \bin if no such \bin exists.  
	This configuration, which we call the \emph{First-Fit-Decreasing (FFD) configuration}, and denote by $\configurationFFD$, is a canonical configuration for our reconfiguration algorithm. That is, we show how to reconfigure both  $\source$ and  $\target$ to  $\configurationFFD$ and therefore to each other. 
	
	In what follows, we describe how a configuration ${\configuration}$ with \bins of average slack at least $\frac{\capacity}{\alpha+1} + \frac{3\alpha\capacity}{(\alpha+1)|\configuration|}$ can be reconfigured to $\configurationFFD$. This process is applied to reconfigure $\source$ and $\target$ to $\configurationFFD$.  The algorithm works in stages: each stage starts with a ``compression phase" and then an ``FFD-retrieval phase" follows. As we will show, the compression phase always results in at least one empty \bin, which is subsequently packed according to the FFD rule. Next, we formally describe this process. 
	
	At the beginning of stage $i$, we have two types of \bins: FFD \bins and general \bins. At the beginning of the reconfiguration process, all \bins are general \bins. 
	As mentioned earlier, items in FFD \bins are packed according to the FFD rule. 
	
	In the compression phase of stage $i$, we process general \bins one by one in an arbitrary order. Items within the \bin being processed are also processed one by one in an arbitrary order. For each item $x$, we check the \bins that were previously processed in this stage and move $x$ to the first \bin (with respect to the ordering) with slack at least equal to $x$; if no such \bin exists, $x$ remains in its \bin. 
	
	Next, in the FFD-retrieval phase of stage $i$, we first declare any empty general \bin to be an FFD \bin. In Lemma~\ref{lem:smallitems}, we will prove that at least one such \bin exists at the end of the compression phase. We pack these empty FFD \bins according to the FFD rule: we process items in the general \bins in non-increasing order of their sizes and place each item into the first FFD \bin with enough slack to host it. This process continues until no more items are in the general \bins or no FFD \bin has enough slack for the largest remaining item in the general \bins. In the former case, the canonical configuration $\configurationFFD$ is reached and the process ends.
	In the latter case, stage $i$ ends and the process continues with the compression phase of the new stage $i+1$.
	
	We now set out to prove the correctness of the algorithm. The following fact holds at the end of the compression phase of any stage.

	\begin{lemma}\label{lem:comp}
		At the end of the compression phase, at most two non-empty general \bins can have slack of larger than $\capacity/(\alpha+1)$. 
	\end{lemma}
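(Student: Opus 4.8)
The plan is to argue by contradiction: suppose that at the end of the compression phase there are at least three non-empty general \bins with slack larger than $\capacity/(\alpha+1)$, and let $G_1, G_2, G_3$ be any three of them, listed in the order in which they are processed during the stage. The structural fact I would lean on is a monotonicity property: once a \bin has been processed, it never again loses an item (items only ever move to \bins processed earlier in the stage), so after a \bin is processed its slack can only decrease; in particular, a \bin that ends the compression phase with slack $s$ had slack at least $s$ at every moment after it was processed. Using this, I would first establish a trajectory claim: if $A$ is processed before $B$ and an item $w$ lies in $B$ when the compression phase ends, then $w > \mathit{slack}(A)$. Indeed, the last time the algorithm handles $w$ is the moment it moves into $B$ (or, if $w$ never moves, the moment $B$ itself is processed); since an item's sequence of host \bins is strictly decreasing in the processing order, this moment is at or after the processing of $B$, so every \bin preceding $B$ --- in particular $A$ --- has by then been processed in full. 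At that moment the algorithm leaves $w$ in $B$ rather than sending it to any \bin preceding $B$, so every such \bin has slack less than $w$ at that time, and by monotonicity its final slack is also less than $w$.

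With the trajectory claim in hand, I would apply it with $B = G_2$ and $A = G_1$: every item of $G_2$ has size greater than $\mathit{slack}(G_1) > \capacity/(\alpha+1)$. Since $\mathit{slack}(G_2) > \capacity/(\alpha+1)$ means $\mathit{vol}(G_2) < \alpha\capacity/(\alpha+1)$, the \bin $G_2$ can contain at most $\alpha - 1$ items of size exceeding $\capacity/(\alpha+1)$; and because every item has size at most $\capacity/\alpha$, this yields $\mathit{vol}(G_2) \le (\alpha-1)\capacity/\alpha$, that is, $\mathit{slack}(G_2) \ge \capacity/\alpha$. Applying the trajectory claim once more with $B = G_3$ and $A = G_2$: since $G_3$ is non-empty it contains some item $y$, and we obtain $y > \mathit{slack}(G_2) \ge \capacity/\alpha$, contradicting the standing assumption that every item has size at most $\capacity/\alpha$. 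Hence there are at most two non-empty general \bins with slack larger than $\capacity/(\alpha+1)$.

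I expect the delicate part to be the bookkeeping behind the trajectory claim: one has to pin down precisely when an item is handled for the last time --- because an item's sequence of host \bins is strictly decreasing in the processing order, its final host $B$ is reached by a move performed while some \bin appearing after $B$ is being processed, by which point $B$ and everything before it has already been processed --- and then argue that the algorithm's rejection of all \bins preceding $B$ at that instant, combined with monotonicity, transfers to their final slacks. The remaining arithmetic (the item-count bound on $G_2$ and the resulting lower bound on its slack) is routine, though it does rely on $\alpha \ge 2$, which holds since $\alpha > 1$ is an integer.
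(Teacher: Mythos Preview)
Your proof is correct and uses essentially the same mechanism as the paper's: the first-fit rule together with the monotonicity of slack after a \bin has been processed forces every item ending in a later \bin to exceed the final slack of each earlier \bin, and then a count on items of size in $(\capacity/(\alpha+1),\capacity/\alpha]$ finishes the job---the paper phrases this as ``every non-empty \bin after the first large-slack \bin, except possibly the last, holds exactly $\alpha$ large items,'' whereas you take the contrapositive route via three hypothetical large-slack \bins. One remark: your trajectory bookkeeping is more elaborate than needed, since in this compression phase each item is handled exactly once (when its original \bin is processed) and moves at most once, so there is no genuine chain of host \bins to track.
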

	
	\begin{proof}
		Let $\binb$ denote the first processed \bin that is non-empty and has slack at least $\capacity /(\alpha+1)$. Any item that remains in a \bin processed after $\binb$ must have size larger than $\capacity/(\alpha+1)$, and hence in the range $(\capacity/(\alpha+1),\capacity/\alpha]$. Since exactly $\alpha$ items in this range fit into a \bin, any non-empty \bin processed after $\binb$, except possibly the last \bin, includes exactly $\alpha$ items, so will have volume larger than $\capacity \alpha/(\alpha+1)$, and hence has slack at most $\capacity - (\capacity\alpha/(\alpha+1)) = \capacity/(\alpha+1)$.
	\end{proof}
	
	Next, we show that the compression process always results in at least one empty \bin, as needed to carry out the FFD-retrieval phase.
	
	\begin{lemma}\label{lem:smallitems}
		If all items in the underlying set $U$ of the configuration $\configuration$ are of size at most $\capacity/\alpha$ for some integer $\alpha>1$ and the average slack of all \bins in $\configuration$ is at least $\frac{\capacity}{\alpha+1} + \frac{3\alpha\capacity}{(\alpha+1)|\configuration|}$, then at the end of the compression phase of any stage $i\geq 1$, there is at least one empty general \bin. 
	\end{lemma}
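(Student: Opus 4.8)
The plan is to argue by contradiction: suppose that at the end of the compression phase of some stage $i$, no general \bin is empty. I would then derive a lower bound on the total volume of items in the general \bins, which (combined with the upper bound on how many \bins can have large slack, from Lemma~\ref{lem:comp}) contradicts the average-slack hypothesis.

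First I would set up the count. Let $g$ be the number of general \bins at the start of stage $i$ (these are all \bins that are not yet FFD \bins). The key structural input is Lemma~\ref{lem:comp}: at the end of the compression phase, at most two non-empty general \bins have slack exceeding $\capacity/(\alpha+1)$. Under the contradiction hypothesis all $g$ general \bins are non-empty, so at least $g-2$ of them have slack at most $\capacity/(\alpha+1)$, i.e.\ volume at least $\capacity\alpha/(\alpha+1)$. The other two general \bins have volume at least $1$ each (non-empty, items are positive integers), but for a clean bound it suffices to use volume $\ge 0$ there; either way the total volume in general \bins is at least $(g-2)\cdot\frac{\capacity\alpha}{\alpha+1}$. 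Separately, every FFD \bin carries volume as well, but since FFD \bins are only created from empty \bins and then packed, I would instead track things via \emph{slack}: the total slack over all \bins is invariant under item moves (it equals $|\configuration|\capacity$ minus the total volume of $U$, which never changes), so the total slack at the end of the compression phase of stage $i$ equals the total slack in $\configuration$, which by hypothesis is at least $\frac{\capacity|\configuration|}{\alpha+1} + \frac{3\alpha\capacity}{\alpha+1}$.

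Now I would bound the total slack from above under the contradiction hypothesis and show it is too small. The $g-2$ "compressed" general \bins each contribute slack at most $\capacity/(\alpha+1)$. The two exceptional general \bins and all the FFD \bins each contribute slack at most $\capacity$. The number of FFD \bins at this point is $|\configuration| - g$. So the total slack is at most $(g-2)\cdot\frac{\capacity}{\alpha+1} + \bigl(|\configuration|-g+2\bigr)\cdot\capacity$. Hmm — this is an \emph{upper} bound on slack, whereas I need a \emph{lower} bound from the hypothesis, so the inequality goes the wrong way; I must instead lower-bound the slack contribution impossible, i.e.\ argue the hypothesis forces some \bin empty. The correct move: because average slack is large, total volume of $U$ is small, namely $\mathit{vol}(U) = |\configuration|\capacity - \text{(total slack)} \le |\configuration|\capacity\bigl(1-\tfrac1{\alpha+1}\bigr) - \tfrac{3\alpha\capacity}{\alpha+1} = \frac{\alpha\capacity}{\alpha+1}\bigl(|\configuration| - 3\bigr)$. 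On the other hand, under the contradiction hypothesis the $g$ general \bins alone hold volume at least $(g-2)\cdot\frac{\alpha\capacity}{\alpha+1}$, and I claim $g \ge |\configuration| - (\text{something small})$; in fact I would argue $g$ cannot be too small because FFD \bins, once created and packed, also contain lots of volume, so either way $\mathit{vol}(U)$ is at least roughly $\frac{\alpha\capacity}{\alpha+1}(|\configuration|-2)$ — contradicting the bound $\frac{\alpha\capacity}{\alpha+1}(|\configuration|-3)$. The slack $\tfrac{3\alpha\capacity}{\alpha+1}$ buffer in the hypothesis is exactly what makes the "$-3$ versus $-2$" gap work, absorbing the two exceptional \bins plus a bit of rounding.

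**The main obstacle** will be handling the FFD \bins correctly across stages: I need that every FFD \bin is "well-packed" in the sense that it cannot have slack larger than $\capacity/(\alpha+1)$ unless it is one of very few special \bins — otherwise the FFD-retrieval step would have placed more items into it (since all items have size $\le \capacity/\alpha \le \capacity/(\alpha+1)\cdot\frac{\alpha+1}{\alpha}$, any \bin with slack $\ge \capacity/(\alpha+1)$ has room for any remaining item when $\alpha \ge 1$... actually slack $\ge \capacity/\alpha$ is what's needed, so I must be careful here). Concretely, after the FFD-retrieval phase of the previous stage ended, retrieval stopped because no FFD \bin had slack $\ge$ the largest remaining item's size; since items are $\le \capacity/\alpha$, each FFD \bin then has slack $< \capacity/\alpha$, hence volume $> \capacity(\alpha-1)/\alpha$. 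I would carry this as an invariant into stage $i$ and note the compression phase does not touch FFD \bins, so it still holds. Then the FFD \bins contribute total volume $> (|\configuration|-g)\cdot\frac{\capacity(\alpha-1)}{\alpha}$, and combined with the general-\bin volume bound this again overshoots $\mathit{vol}(U)$. Getting the arithmetic to close with the precise constant $\frac{3\alpha\capacity}{(\alpha+1)|\configuration|}$ in the average-slack hypothesis — and checking the boundary case $i=1$ where there are no FFD \bins and $g=|\configuration|$, which is the simplest case — is the routine but delicate part I would finish carefully.
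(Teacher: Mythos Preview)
Your overall plan---contradiction via a volume count, using Lemma~\ref{lem:comp} to control the general \bins and a separate bound on the FFD \bins---is sound in spirit, but there is a genuine gap in the FFD-\bin step that prevents the arithmetic from closing.

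The problem is your bound on FFD \bins. You argue that when FFD-retrieval halts, every FFD \bin has slack strictly less than the largest remaining item, hence slack $< \capacity/\alpha$ and volume $> \capacity(\alpha-1)/\alpha$. This is correct but too weak. Plugging it in, your total-volume lower bound under the contradiction hypothesis is
\[
(g-2)\cdot\frac{\capacity\alpha}{\alpha+1} \;+\; (|\configuration|-g)\cdot\frac{\capacity(\alpha-1)}{\alpha},
\]
and comparing with the upper bound $\mathit{vol}(U) \le \frac{\capacity\alpha}{\alpha+1}(|\configuration|-3)$ from the hypothesis reduces, after simplification, to $|\configuration|-g < \alpha^2$. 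So you only get a contradiction when fewer than $\alpha^2$ \bins have already become FFD \bins. In later stages this fails, and the argument gives nothing; the ``routine but delicate'' arithmetic you allude to simply does not close with this bound. The constant $\frac{3\alpha\capacity}{(\alpha+1)|\configuration|}$ in the hypothesis is calibrated to per-\bin volume $\frac{\capacity\alpha}{\alpha+1}$, and the shortfall $\frac{\capacity\alpha}{\alpha+1}-\frac{\capacity(\alpha-1)}{\alpha}=\frac{\capacity}{\alpha(\alpha+1)}$ per FFD \bin accumulates.

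What is actually needed is the stronger bound that every FFD \bin has volume $> \capacity\alpha/(\alpha+1)$, and this does \emph{not} follow from ``slack $<$ largest remaining item $\le \capacity/\alpha$'' alone. One must split on the size $x$ of the largest item remaining in general \bins when FFD-retrieval stops: if $x \le \capacity/(\alpha+1)$ then indeed every FFD \bin has slack $< x \le \capacity/(\alpha+1)$; but if $x > \capacity/(\alpha+1)$, then every item placed into a newly declared FFD \bin has size in $(\capacity/(\alpha+1),\capacity/\alpha]$, exactly $\alpha$ of them fit, and the volume exceeds $\capacity\alpha/(\alpha+1)$. With this sharper bound your direct counting does go through. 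The paper packages the same idea differently: it maintains the invariant $\beta_i \ge \frac{(\alpha+1)s_i}{\capacity\alpha}+3$ on the general \bins only, by induction on the stage, and the case split above is exactly what shows that removing the critical \bins (each carrying volume $> \capacity\alpha/(\alpha+1)$) preserves the invariant.
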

	
	\begin{proof}
		Let $\beta_i$ denote the number of general \bins, and $s_i$ denote the total size of items in general \bins at the beginning of stage $i$ of the reconfiguration process. We will prove that the average slack of general \bins at the beginning of each stage $i \geq 1$ of the reconfiguration process is at least $\frac{\capacity}{\alpha+1} + \frac{3\alpha\capacity}{(\alpha+1)\beta_i}$. That is, $\capacity - \frac{s_i}{\beta_i} \geq \frac{\capacity}{\alpha+1} + \frac{3\alpha \capacity}{(\alpha+1)\beta_i}$, or, equivalently,
		\begin{align}\label{ineq:small}
			\beta_i \geq \frac{(\alpha+1)s_i}{\capacity\alpha } + 3    
		\end{align}
		Since $\beta_i$ is an integer, it follows that $\beta_i \geq \left\lceil\frac{(\alpha+1)s_i}{\capacity\alpha }\right\rceil + 3$. Note that our desired result follows directly from Inequality~\ref{ineq:small} and Lemma~\ref{lem:comp}: by Lemma~\ref{lem:comp}, at the end of the compression phase of stage $i$, all non-empty general \bins except possibly two have a volume of at least $\alpha\capacity/(\alpha+1)$. 
		So, a total volume of $s_i$ is distributed between non-empty general \bins that, excluding at most two of them, are each of volume at least $\alpha \capacity/(\alpha+1)$. Therefore, the number of non-empty \bins will be at most $\left\lceil \frac{s_i}{\alpha \capacity/(\alpha+1)} \right\rceil+2 = \left\lceil\frac{(\alpha+1)s_i}{\capacity\alpha}\right\rceil + 2$.
		In other words, at least $\beta_i-\left(\left\lceil\frac{(\alpha+1)s_i}{\capacity\alpha}\right\rceil +2\right) \geq \left(\left\lceil\frac{(\alpha+1)s_i}{\capacity\alpha }\right\rceil + 3    \right) - \left(\left\lceil\frac{(\alpha+1)s_i}{\capacity\alpha}\right\rceil +2\right) \geq 1$ general \bins will be empty by the end of the compression phase of stage $i$.
		
		It remains to prove Inequality~\ref{ineq:small}, for which we will use induction.
		Initially, for $i=1$, the inequality holds due to the lemma statement's assumption about the average slack: note that $\beta_1 = |{\configuration}|$ since all \bins are initially general \bins. 
		
		Assuming that Inequality~\ref{ineq:small} holds for some $i\geq 1$, we prove that it holds for $i+1$. 
		For that, we consider the empty general \bins after the compression phase of stage $i$. Let's call these \bins\ {\em critical \bins} and let $c_i$ denote the number of these \bins. Given that Inequality~\ref{ineq:small} holds for $i$, we have $c_i \geq 1$, as discussed above. Recall that the critical \bins are declared FFD \bins and are packed accordingly during the FFD-retrieval phase of stage $i$.
		Let $x$ denote the size of the largest item in a general \bin at the end of the FFD-retrieval phase of stage $i$. We consider the following two possibilities:
		
		\begin{itemize}
			\item Suppose that $x > \capacity/(\alpha+1)$. Given that we  move items in non-increasing order of their sizes, all items moved to critical \bins are of size at least equal to $x$ and thus larger than $\capacity/(\alpha+1)$. Since all items are of size at most $\capacity/\alpha$, each critical \bin has received exactly $\alpha$ items in the range $(\capacity/(\alpha+1), \capacity/\alpha]$.
			Therefore, its volume is larger than $\capacity \alpha/(\alpha+1)$.
			\item Otherwise, suppose that $x \leq \capacity/(\alpha+1)$. Since $x$ was not packed in any critical \bin, the slack of each critical \bin is less than $x$. But this means that the volume of each \bin is larger than $\capacity - x \geq \capacity - \capacity/(\alpha+1) = \capacity\alpha/(\alpha+1)$.  
		\end{itemize}
		In both cases, all critical \bins have a volume larger $\capacity \alpha/(\alpha+1)$ at the end of stage $i$. 
		Thus, the total size of items moved from general \bins to FFD \bins at stage $i$ is larger than $\capacity \alpha c_i/(\alpha+1)$. Consequently, at the end of stage $i$, the total size of items in general \bins is $s_{i+1} \leq s_i - \capacity \alpha c_i/(\alpha+1)$, while the number of general \bins is $\beta_{i+1} = \beta_i - c_i$. To prove that Inequality~\ref{ineq:small} holds, we see that:
		\begin{align*}
			\beta_{i+1} & = \beta_i - c_i\\
			& \geq \left( \frac{(\alpha+1)s_i}{\capacity \alpha} + 3 \right) - c_i \tag{by the induction hypothesis}\\
			& = \frac{(\alpha+1)(s_i-\capacity\alpha c_i/(\alpha+1))}{\capacity\alpha}+3\\
			& \geq \frac{(\alpha+1)s_{i+1}}{\capacity\alpha}+3
		\end{align*}%
  \end{proof}
	
	Provided with Lemma~\ref{lem:smallitems}, we are now ready to prove that the algorithm correctly produces a reconfiguration sequence from $\source$ to $\target$.

	\begin{theorem}\label{thm-small-items}
		Suppose the underlying set $U$ of $\source$ and $\target$ is formed by items of size at most $\capacity/\alpha$ for some integer $\alpha>1$ and the average slack of \bins in $\source$ and $\target$ is at least $\frac{\capacity}{\alpha+1} + \frac{3\alpha\capacity}{(\alpha+1)|\source|}$. 
		Then, it is always possible to reconfigure both $\source$ and $\target$ to the FFD configuration and, hence, to each other.
	\end{theorem}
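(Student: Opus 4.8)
The plan is to apply the staged algorithm described above separately to each of $\source$ and $\target$, producing reconfiguration sequences $\source = \configuration_0, \configuration_1, \ldots, \configurationFFD$ and $\target = \configurationtwo_0, \configurationtwo_1, \ldots, \configurationFFD$. Since the adjacency relation on legal configurations is symmetric (a single-item move from $B_d$ to $B_r$ is undone by moving that item from $B_r$ back to $B_d$, and both endpoints are legal), the reverse of the second sequence is itself a reconfiguration sequence; concatenating the first sequence with the reversal of the second gives a reconfiguration sequence from $\source$ to $\target$. This works because $\configurationFFD$ depends only on the underlying set $U$, which is common to $\source$ and $\target$ (items are conserved), so the two sequences genuinely meet. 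It therefore suffices to prove: for any legal configuration $\configuration$ for capacity $\capacity$ whose items all have size at most $\capacity/\alpha$ and whose average slack is at least $\frac{\capacity}{\alpha+1} + \frac{3\alpha\capacity}{(\alpha+1)|\configuration|}$, the algorithm yields a valid reconfiguration sequence from $\configuration$ to $\configurationFFD$; applying this to $\configuration = \source$ and $\configuration = \target$ is immediate from the hypotheses and $|\source| = |\target|$.

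First I would check that every configuration generated by the algorithm is legal and that consecutive configurations are adjacent. Each elementary action of the algorithm moves exactly one item from one \bin to another, so consecutive configurations differ by a single-item move; it remains to verify that no \bin exceeds capacity. In the compression phase an item $x$ is moved into a previously-processed \bin only when that \bin has slack at least $x$, and in the FFD-retrieval phase an item is moved into an FFD \bin only when that \bin has enough slack to host it, so legality is preserved throughout both phases. Hence every pair of consecutive configurations in the produced sequence is adjacent in the sense of \textsc{Repacking}.

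Next I would argue termination and correctness of the output. By Lemma~\ref{lem:smallitems}, whose inductive proof shows that Inequality~\ref{ineq:small} (equivalently, the average-slack bound on general \bins, which in particular forces at least three general \bins at the start of every stage the algorithm reaches) is maintained from stage to stage, each compression phase ends with at least one empty general \bin, which is then reclassified as an FFD \bin. Thus the number of general \bins strictly decreases with each new stage, so the algorithm runs for only finitely many stages, and since $\beta_i \ge 3$ while it is running, it cannot halt by exhausting general \bins; it must halt in the alternative case, namely with no items remaining in any general \bin. The substantive point is that this terminal state is exactly the canonical $\configurationFFD$ and that the algorithm never uses more than $|\configuration|$ \bins. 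For this I would induct on the items in non-increasing size order across all stages, showing that the staged FFD-retrieval is indistinguishable from a single First-Fit-Decreasing pass over $U$: the moment a stage's retrieval phase halts — the largest remaining item fitting no current FFD \bin — is precisely the moment a single FFD pass would open a fresh \bin, and the empty general \bin that the next compression phase produces plays the role of that fresh \bin; since the algorithm only ever redistributes the $|\configuration|$ \bins already present and succeeds, the FFD packing of $U$ fits into $|\configuration|$ \bins, and the terminal configuration equals $\configurationFFD$ (with empty \bins making up the count).

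I expect the identification of the staged output with the canonical FFD configuration to be the main obstacle: one must be careful that the order in which empty general \bins are declared FFD \bins, together with the interleaving of compression and retrieval across stages, never disturbs the first-fit invariant, so that the \bin contents produced stage-by-stage coincide exactly with those of one monolithic FFD run on $U$ (and so, in particular, that no more than $|\configuration|$ \bins are ever needed). The legality and single-move verifications are routine, and the bound on the number of stages is an immediate consequence of Lemma~\ref{lem:smallitems}.
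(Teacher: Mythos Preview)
Your proposal is correct and follows essentially the same route as the paper: run the staged algorithm from each of $\source$ and $\target$, invoke Lemma~\ref{lem:smallitems} to guarantee progress and hence termination, and concatenate the first sequence with the reversal of the second. You are in fact more careful than the paper's own proof, which simply asserts that once all \bins are FFD \bins the configuration \emph{is} $\configurationFFD$; your plan to verify this by showing the staged retrieval mimics a monolithic FFD pass (and hence that the FFD packing fits in the available $|\configuration|$ \bins) fills a detail the paper leaves implicit, and your identification of it as the main obstacle is apt.
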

	
	\begin{proof}
		We use the reconfiguration process described in this section starting with configuration $\source$. By applying Lemma~\ref{lem:smallitems} with $\configuration=\source$, we are guaranteed the presence of at least one empty \bin at the end of the compression phase of each stage, so the number of FFD \bins strictly increases during each stage. Eventually, at the end of some stage, all \bins become FFD \bins, and we obtain the canonical configuration $\configurationFFD$. This gives a reconfiguration sequence $\mathbb{S}_1$ from $\source$ to $\configurationFFD$. Similarly, using the reconfiguration process described in this section starting with configuration $\target$, we obtain a reconfiguration sequence from $\target$ to $\configurationFFD$, and we reverse the moves made in this sequence to obtain a reconfiguration sequence $\mathbb{S}_2$ from $\configurationFFD$ to $\target$. Concatenating the sequence $\mathbb{S}_2$ to the end of sequence $\mathbb{S}_1$ gives a reconfiguration sequence from $\source$ to~$\target$. 
	\end{proof}
	
	
	\section{Special Case: Items and Capacities are Powers of 2}\label{Sect:SpecialPowers}
	Consider a setting where the capacity $\capacity$ and all item sizes are powers of 2. We characterize the input instances for which reconfiguration is possible.
	In particular, we provide an algorithm that reconfigures $\source$ to $\target$, as long as the total slack across all \bins is at least equal to the size of the largest item that must be moved. Since this is the minimal requirement for a feasible reconfiguration, in essence our results show that powers of 2 provide an ideal setting for reconfiguration. 
	
Our reconfiguration algorithm, called \AlgPowOfTwo, processes items in stages by non-increasing order of size, such that once items in the current stage have been ``settled'', they remain in place during all subsequent stages. Each stage entails repeatedly identifying two \bins, $B_s$ and $B_d$, with a surplus and a deficit, respectively, of items of the current size with respect to the target configuration, and then moving an item from $B_s$ to $B_d$. Each move takes place in two phases: in the {\em compression phase}, slack equal to the size of the item is accumulated in a \bin $B_{temp}$, and in the {\em transfer phase}, $B_{temp}$ is used as temporary storage in the move of an item from $B_s$ to $B_d$. The remainder of this section formalizes what it means for an item to be ``settled''.
	
	\noindent\textbf{Settling.} 
	Our results depend on the ability to look at an item size $a$ in a configuration $\configuration$ and determine: is there a reconfiguration sequence that reconfigures $\configuration$ to $\target$ such that no step involves moving an item with size at least $a$? If we ask this question about the item sizes in $\configuration$ from largest to smallest, the first size for which the answer is `no' is called \emph{the size of the largest item in $\configuration$ that must be moved} and denoted by $\largest$. For example, if $\configuration = \multiset{\multiset{{32, 16}}, \multiset{{4,4,2}}}$ and $\target = \multiset{\multiset{{32,4,4,2}}, \multiset{{16}}}$, then $\largest = 16$. 
	
	To enable us to determine $\largest$, we introduce the idea of an item size being ``settled". Informally, an item of size $s$ is settled in its current \bin $\binb$ if there is a \bin in the target configuration that contains the same multiset of items of size at least $s$ as $B$ does, i.e., none of these items need to be moved from their current positions. The largest item in $\binb$ whose size is not settled would be ``out of place" if all larger items in $B$ stay where they are for the remainder of the reconfiguration sequence, i.e., its size is a candidate for the size of the largest item in the configuration that must be moved. Formally, for any configuration $\configurationtwo$, any \bin $\binb \in \configurationtwo$, and any item size $s$, we define $\AtLeast{s}{\binb}{\configurationtwo}$ to be the multiset of items in $\binb$ that have size at least $s$. For any $s$, we say that item size $s$ is \emph{settled in $\configurationtwo$} if there is a bijection $\varphi$ from the multiset of \bins containing an item of size at least $s$ in $\configurationtwo$ to the multiset of \bins containing an item of size at least $s$ in $\target$, and $\AtLeast{s}{B}{\configurationtwo} = \AtLeast{s}{\varphi(B)}{\target}$ for each \bin $\binb$ in the domain of $\varphi$. Such a bijection $\varphi$ is called \emph{a settling bijection for size $s$}.
	
	\begin{definition}
		For any configuration $\configuration$, define $\largest$ to be largest item size that is not settled in $\configuration$. This is also referred to as \emph{the size of the largest item in $\configuration$ that must be moved}.
	\end{definition}
	
	\noindent\textbf{Computing $\largest$.} 
	For any configuration $\configuration$, we can compute $\largest$ by checking the item sizes in $\configuration$, in decreasing order, until we find one that is not settled. So it remains to describe, for any item size $s$, how to check whether or not item size $s$ is settled in $\configuration$. To this end, we provide the following procedure \textit{Check-Settled$(s,\configuration)$}, which will also produce a settling bijection $\varphi$ for size $s$ if item size $s$ is settled in $\configuration$.
	
	Start with a mapping $\varphi$ with an empty domain and range, and in what follows, we write $\configuration\setminus\varphi$ to denote the \bins in $\configuration$ that are not in the domain of $\varphi$, and we write $\target\setminus\varphi$ to denote the \bins in $\target$ that are not in the range of $\varphi$. In increasing order, perform the following for each item size $u$ in $\configuration$ such that $u \geq s$: find a bijection $\varphi_u$ from the multiset of \bins $B$ in $\configuration\setminus\varphi$ that contain an item of size $u$ to the multiset of \bins in $\target\setminus\varphi$ that contain an item of size $u$ in such a way that $\AtLeast{u}{B}{\configuration} = \AtLeast{u}{\varphi_u(B)}{\target}$, and if such a $\varphi_u$ exists, then extend $\varphi$ by adding $\varphi_u$ to it. The procedure returns ``settled" if the above process was successful for all $u \geq s$ (i.e., it was possible to find the described bijection $\varphi_u$ for each item size $u \geq s$ in $\configuration$) and returns ``not settled" otherwise. The following result establishes that the \textit{Check-Settled} procedure correctly determines whether or not $s$ is settled in $\configuration$, and, moreover, if $s$ is settled in $\configuration$, then the $\varphi$ produced by the procedure is a settling bijection for size $s$. The proof is provided in Appendix \ref{App:CheckSettled}.
	\begin{restatable}{lemma}{CheckSettledCorrect}
		The procedure \textit{Check-Settled$(s,\configuration)$} correctly determines whether or not item size $s$ is settled in $\configuration$, and, produces a settling bijection for size $s$ if $s$ is settled in $\configuration$.
	\end{restatable}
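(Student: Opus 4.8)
The plan is to reduce both halves of the statement to a single stage-by-stage description of what \textit{Check-Settled} does. Throughout I use that the multiset of items of size at least $s$ is the same in $\configuration$ and in $\target$ (equivalently, $U(\configuration)=U(\target)$), which holds whenever the procedure is applied within a \textsc{Repacking} instance; in particular a size $u\ge s$ occurs in $\configuration$ iff it occurs in $\target$. For a \bin $B$ containing an item of size $\ge s$, write $\sigma(B):=\AtLeast{s}{B}{\cdot}$ for its \emph{$s$-signature} and let $m(B)$ be the smallest size $\ge s$ of an item in $B$; thus $\sigma(B)$ is a multiset of sizes all $\ge m(B)$ with minimum exactly $m(B)$, and $\AtLeast{u}{C}{\cdot}=\sigma(C)$ whenever $m(C)=u$. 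Unwinding the definition, $s$ is settled in $\configuration$ iff the multiset $\{\sigma(B): B\in\configuration,\ B\text{ has an item of size}\ge s\}$ equals the analogous multiset for $\target$ (a signature-preserving bijection between two finite multisets of \bins exists iff their signature multisets agree), and every signature-preserving bijection of this kind is a settling bijection for $s$.

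First I would establish, by induction over the processed sizes $s\le u_1<u_2<\cdots$ (the sizes $\ge s$ occurring in $\configuration$, taken in increasing order), the invariant: so long as the procedure has not yet failed, after the stages for all $u_j<u$ the partial map $\varphi$ is a signature-preserving bijection from $\{B\in\configuration: m(B)<u\}$ onto $\{B'\in\target: m(B')<u\}$. The crux of the inductive step is identifying the two multisets between which $\varphi_u$ is sought. A \bin of $\configuration\setminus\varphi$ containing an item of size $u$ must satisfy $m(B)=u$: if $m(B)<u$ then $m(B)$ occurs in $\configuration$, hence was a processed size, hence $B$ is already in the domain of $\varphi$; conversely every $B$ with $m(B)=u$ still lies in $\configuration\setminus\varphi$ and has an item of size $u$. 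The same holds on the target side, and this is exactly where $U(\configuration)=U(\target)$ enters: for $B'\in\target$ with $m(B')<u$, the size $m(B')$ occurs in $\target$, hence in $\configuration$, hence was processed, so $B'$ is already in the range of $\varphi$. Therefore stage $u$ asks precisely for a bijection $\varphi_u$ from $D_u:=\{B\in\configuration: m(B)=u\}$ to $R_u:=\{B'\in\target: m(B')=u\}$ preserving $\AtLeast{u}{\cdot}{\cdot}$; since $\AtLeast{u}{C}{\cdot}=\sigma(C)$ for every \bin $C$ with $m(C)=u$, this is the same as asking for a signature-preserving bijection $D_u\to R_u$, and if one exists the procedure finds one (any one it finds is then signature-preserving, which maintains the invariant).

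Both directions follow quickly. If $s$ is settled, then restricting the global equality of signature multisets to those signatures whose minimum element equals $u$ shows that the signature multisets of $D_u$ and $R_u$ agree for every $u$; hence a signature-preserving bijection $D_u\to R_u$ exists at each stage, so the procedure never fails and returns ``settled'', and by the invariant the final $\varphi$ is a signature-preserving bijection from all \bins of $\configuration$ with an item of size $\ge s$ onto all such \bins of $\target$ — that is, a settling bijection for $s$. Conversely, if the procedure returns ``settled'' then no stage failed, so the invariant at the last stage again exhibits $\varphi$ as a settling bijection, and $s$ is settled; and if the procedure returns ``not settled'' then some stage failed, which by the forward direction cannot happen when $s$ is settled, so $s$ is not settled.

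I expect the main obstacle to be the bookkeeping in the invariant: proving that the greedy, size-by-size matching never strands a \bin, i.e.\ that each relevant \bin on either side is first examined exactly at the stage equal to its smallest relevant item size and is matched there. On the $\target$ side this genuinely requires $U(\configuration)=U(\target)$ — without it the procedure can report ``settled'' while $\target$ has \bins all of whose relevant item sizes are absent from $\configuration$. Everything else reduces, through the identity $\AtLeast{u}{C}{\cdot}=\sigma(C)$ for $m(C)=u$, to the elementary fact that two finite multisets of \bins admit a signature-preserving bijection iff their signature multisets coincide.
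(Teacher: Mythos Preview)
Your proof is correct and follows essentially the same approach as the paper: both characterize the domain and range at stage $u$ as the \bins whose smallest item of size at least $s$ equals $u$, and both reduce the $\AtLeast{u}{\cdot}{\cdot}$-preservation condition to $\AtLeast{s}{\cdot}{\cdot}$-preservation on such \bins. Your signature abstraction and single stage-by-stage invariant streamline what the paper does via a longer case-by-case verification, and you correctly make explicit the reliance on $U(\configuration)=U(\target)$, which the paper's argument also needs but does not isolate.
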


	We are now ready to describe our reconfiguration algorithm, which works as long as the total slack across all \bins is at least equal to the size of the largest item that must be moved, i.e., at least $\largestSource$.
 
	\subsection{Description of \AlgPowOfTwo}\label{sec:pow2alg}
	Our reconfiguration algorithm works in stages. For an arbitrary $i \geq 1$, let $\configuration_i$ denote the configuration at the start of the $i$th stage, and denote by $\maxi$ the $i$th largest item size in the underlying set $U$. During the $i$th stage, the algorithm proceeds by moving items of size at most $\maxi$ between \bins to ensure that, at the end of the stage, item size $\maxi$ is settled. After a finite number of stages, all item sizes are settled, which means that the current configuration is $\target$. 
	
	So it remains to describe how the algorithm settles all items of size $\maxi$ during the $i$th stage. There are two main steps to accomplish this. In the first step, called the \emph{Mapping Step}, the algorithm considers the configuration $\configuration_i$ at the start of the $i$th stage, and specifies a mapping from \bins in $\configuration_i$ to \bins in $\target$, and, in a sense, this mapping serves as a ``reconfiguration plan" (i.e., it specifies which \bins in $\configuration_i$ will be reconfigured to which \bins in $\target$). In the second step, called the \emph{Action Step}, the algorithm has to carry out this ``plan". 
 
 More specifically, at the start of the $i$th stage, the algorithm chooses a bijection $\varphi_i$ between a subset of the \bins of $\configuration_i$ and those of $\target$. Initially, before the start of the first stage, the domain of this bijection is empty. During the $i$th stage, the bijection $\varphi_i$ is chosen so that its domain and range include all \bins whose largest item has size at least $\maxi$. Intuitively, for any \bin $B\in \configuration_i$ whose largest item has size at least $\maxi$, $\varphi_i(B)$ indicates the \bin in $\target$ that $B$ must reconfigure to.
	Our goal is that, at the end of the final stage, the domain of $\varphi_i$ will include all \bins of $\configuration_i$, and for any $B\in \configuration_i$, \bins $B$ and $\varphi_i(B)$ contain the same multiset of items, i.e., $\configuration_i$ is $\target$.\\
	
	\noindent\textbf{Mapping Step of the $i$th stage: Choosing $\varphi_i$}\\
	Compute $\ell(\configuration_i,\target)$, i.e., the largest item size in $\configuration_i$ that is not settled. If $\ell(\configuration_i,\target) < \maxi$, then the $i$th stage terminates here and the algorithm moves on to the $(i+1)$st stage. Otherwise, we initialize $\varphi_i$ to be a settling bijection for item size $u_{i-1}$, and then extend $\varphi_i$ as follows. Let $\configuration_i'$ to be the multiset consisting of all \bins in $\configuration_i$ whose largest item has size $\maxi$, and, let $\target'$ be the multiset consisting of all \bins in $\target$ whose largest item has size $\maxi$. If the numbers of \bins in the multisets $\configuration_i'$ and $\target'$ are not equal, we make them equal by taking the smaller of $\configuration_i'$ and $\target'$ and adding sufficiently many \bins whose items are all smaller than $\maxi$. We extend $\varphi_i$ by adding to it an arbitrary one-to-one mapping from $\configuration_i'$ to $\target'$. 
	
	\begin{example}
		Suppose that $\configuration_2 = \multiset{\multiset{{32,8}},\multiset{{8,8,4,4}},\multiset{{8,8,4,2,2}},\multiset{{8, 8}}, \multiset{{4,4,1}}}$ and suppose that $\target = \multiset{\multiset{{32,8,1}}, \multiset{{8, 8, 8, 8}}, \multiset{{8,8,4,2}},\multiset{{4,4,2}},\multiset{{4,4}}}$. At this point, we are at the start of the second stage of the algorithm, only $32$ is settled, and $\varphi_i$ initially maps $\multiset{{32,8}}$ to $\multiset{{32,8,1}}$.
		Therefore, $u_2 = 8$, $\configuration_2'  = \multiset{\multiset{{8,8,4,4}}, \multiset{{8,8,4,2,2}},\multiset{{8, 8}}}$, and $\configuration_{\mathcal{T}}' = \multiset{\multiset{{8, 8, 8, 8}}, \multiset{{8,8,4,2}},\multiset{{4,4,2}}}$ (the last \bin is included in $\configuration_{\mathcal{T}}'$ to ensure $|\configuration_{2}'| = |\configuration_{\mathcal{T}}'|$). An arbitrary one-to-one mapping between \bins in $\configuration_{2}'$ to \bins in $\configuration_{\mathcal{T}}'$ extends the domain of $\varphi_i$ to \bins that contain $8$. For example, $\multiset{{8,8,4,4}} \leftrightarrow \multiset{{8, 8, 8, 8}}$, $\multiset{{8,8,4,2,2}} \leftrightarrow \multiset{{8,8,4,2}}$, and $\multiset{{8, 8}}\leftrightarrow \multiset{{4,4,2}}$.
	\end{example}
	
	\noindent\textbf{Action Step of the $i$th stage: Moving items until item size $\maxi$ is settled}\\
	This step is repeated until the following termination condition is met: for each \bin $B$ in the domain of $\varphi_i$, the number of items of size $\maxi$ is the same in \bins $B$ and $\varphi_i(B)$. If the termination condition is not met, we consider two \bins: a \emph{surplus \bin} 
	$B_s$ containing more items of size $\maxi$ than in $\varphi_i(B_s)$ and a \emph{deficit \bin} $B_d$ containing fewer items of size $\maxi$ than in $\varphi_i(B_d)$. In the example above, $B_s=\multiset{{8, 8}}$ and $B_d = \multiset{{8,8,4,4}}$. We now describe a procedure for moving an item of size $\maxi$ from $B_s$ to $B_d$. 
	
	We move an item of size $\maxi$ from $B_s$ to $B_d$ in two phases: 
	in the \emph{compression phase}, items of size at most $\maxi$ are moved between \bins to accumulate enough slack  in at least one \bin $B_{temp}$ to host an item of size $\maxi$, and in the \emph{transfer phase}, an item of size $\maxi$ is moved from $B_s$ to $B_d$ by using $B_{temp}$  as a temporary host for the item. We now provide terminology and a subroutine called \bundleformrm{B}{p} to aid in formalizing the two phases.
	
	We partition the slack of each \bin $B$ into \emph{slack items} that are maximal powers of 2; for example, when $\mathit{slack}(B) = 14$, $B$ contains slack items of sizes ${2},{4},$ and ${8}$. For clarity, we refer to items in the configuration as ``actual items",  
	using $\mathit{\All}(B)$ to denote the multiset formed by the actual and slack items in $B$, all powers of 2, which sum to $\capacity$. We use the term \emph{bundle} to refer to any multiset of items from $\mathit{\All}(B)$, the sum of which (actual items and slack items) will be called its \emph{bsum}.
	
	\noindent\textbf{Procedure \bundleformbf{B}{p}.}
	Given a \bin $B$ and an integer $p < \log_2 \capacity$ such that $\mathit{\All}(B)$ contains at least one element of size at most $2^p$, the procedure returns two disjoint bundles, each with bsum equal to $2^p$. To accomplish this, the procedure starts with bundles consisting of single actual or slack items, and repeatedly merges any two bundles with bsum $2^i$ to form a new bundle with bsum $2^{i+1}$ until there is no pair of bundles each with bsum $2^i$ for any $i < p$. Finally, two bundles with bsum $2^p$ are chosen arbitrarily and returned as output.
	
	\begin{example}\label{exam:bu}
		Suppose $\capacity = 64$, and $B= \{ 32, 4, 4, 4, 4, 2 \}$. Then $\mathit{slack}(B) = 14$, and slack items are $\underbar{8}, \underbar{4},\underbar{2}$ (underbars represent slack items). Therefore, $\mathit{\All}(B) = \{ 32,\underbar{8},\underbar{4},4,4,4,4,\underbar{2},2 \}$.
		Suppose $p = 3$, and note that there are elements less than or equal to $2^p$ in $\mathit{\All}(B)$.
		The execution of \bundleformrm{B}{3} repeatedly merges bundles of $\mathit{\All}(B)$ that have bsum less than 8. After all merges, the bundles will be \linebreak $\{32\}, \{\underbar{8}\}, \{\underbar{4},4\}, \{4,4\}, \{4,\underbar{2},2\}$. Note that there are four bundles with bsum equal to 8. \bundleformrm{B}{3} returns any two of these bundles, say the first two, $\{\underbar{8}\}$ and $\{\underbar 4, 4\}$, as its output.
	\end{example}

	\noindent\textbf{The Compression Phase.} This phase reduces the number of slack items in the configuration by repeatedly merging pairs of slack items of the same size, resulting in enough space to make a transfer from $B_s$ to $B_d$ possible. Merges take place on any two \bins that contain slack items of equal size less than $\maxi$. More formally, while there exist at least two \bins $B_1,B_2$ that each have a slack item of size $2^q$ with $q < \log_2(u_i)$, we execute a procedure, called \emph{\ms}, that operates as follows.

 \noindent\textbf{Procedure \ms.} Apply {\bundleformrm{B_1}{q}} to obtain two disjoint bundles consisting of elements of $\mathit{\All}(B_1)$, each with bsum $2^q$, then, move the actual items of one of these \bins from $B_1$ to $B_2$.
	
	At the end of the compression phase, we pick one \bin containing a slack item of size at least $\maxi$. In the remainder of the algorithm's description, we will refer to this \bin as $B_{\mathit{temp}}$.
	
	\begin{example}\label{exam:compress}
		Suppose $\capacity = 64$ and let $B_1 = \{ 32, 4, 4, 4, 4, 2\}$ (with slack items $\{\underbar{8},\underbar{4},\underbar{2}\}$) and $B_2 = \{32,16,8\}$ (with slack item $\{\underbar{8}\}$). Each \bin contains a slack item of size $8$; thus $q=3$. \ms calls \bundleformrm{B_1}{3} which, as discussed in Example~\ref{exam:bu}, returns two multisets $\{\underbar{8}\}$ and $\{ \underbar{4},4 \}$. The  actual item, of size 4, in the second multiset is moved to $B_2$. After the merge, \bins become $B_1 = \{ 32, 4, 4, 4, 2\}$ and $B_2 = \{32,16,8,4\}$. Note that the set of slack items of $B_1$ is changed to $\{\underbar{16},\underbar{2}\}$.
	\end{example}

	\noindent\textbf{The Transfer Phase.} We transfer an item of size $\maxi$ from $B_s$ to $B_d$ in $\configuration_i$ in at most three steps. If $B_d$ has slack at least $\maxi$, then we transfer an item of size $\maxi$ from $B_s$ to $B_d$ directly. Otherwise, we perform the following three steps:
	(1) move an item size $\maxi$ from $B_s$ to $B_{\mathit{temp}}$, then (2) apply \bundleformrm{B_d}{\log_2(\maxi/2)} to obtain two bundles from $\All({B_d})$, each with bsum equal to $\maxi/2$, and move all actual items from these two bundles to $B_s$, then (3) move an item of size $\maxi$ from $B_{\mathit{temp}}$ to $B_d$. 
	
	This concludes the description of \AlgPowOfTwo, whose pseudocode is provided in Algorithms \ref{alg:reconfpow2}, \ref{alg:bundles}, and \ref{alg:mergeslack}. The proof of correctness is provided in Section~\ref{sec:pow2correct}.

\linespread{1.0}\selectfont
 \begin{algorithm}[h!]
    \footnotesize
		\caption{Reconfiguration when the total slack is at least $\largestSource$}\label{alg:reconfpow2}
		\begin{algorithmic}[1]
            \Statex {\color{gray} \% The set of item sizes is $\{u_1,\ldots,u_k\}$, where $u_i$ denotes the $i$th largest item size}
            \State $k \leftarrow$ size of the underlying set $U$ \Comment{$k$ is the number of item sizes}
            \For{$i \leftarrow 1,\ldots,k$} \Comment{{\bf Start of the $i$th stage}}\label{stageloop}
            \State $\configuration_i \leftarrow$ configuration at start of the $i$th stage
            \State Compute $\ell(\configuration_i,\target)$ \Comment{{\bf Mapping Step}}
            \If{$\ell(\configuration_i,\target) < \maxi$}
                \State continue to the next iteration of \textbf{for} loop at line \ref{stageloop}
            \Else
                \State $\varphi_i \leftarrow $ a settling bijection for size $u_{i-1}$
                \State $\configuration_i' \leftarrow $ multiset of all \bins in $\configuration_i$ with largest item size $\maxi$
                \State $\target' \leftarrow $ multiset of all \bins in $\target$ with largest item size $\maxi$
                \If{$|\configuration_i'| < |\target'|$}
                    \State Add $|\target'|-|\configuration_i'|$ \bins to $\configuration_i'$ from $\configuration_i$, each with largest item size $< \maxi$
                \ElsIf{$|\configuration_i'| > |\target'|$}
                    \State Add $|\configuration_i'|-|\target'|$ \bins to $\target'$ from $\target$, each with largest item size $< \maxi$
                \EndIf
                \State $\varphi_i' \leftarrow$ an arbitrary bijection from $\configuration_i'$ to $\target'$
                \State extend $\varphi_i$ by including $\varphi_i'$
            \EndIf
			\While{$\exists B$ in domain of $\varphi_i$ such that
   
   \hspace*{8mm} $B$ and $\varphi_i(B)$ have different \# of items of size $\maxi$} \Comment{{\bf Action Step}}
                \State $B_s \leftarrow$ \bin containing more items of size $\maxi$ than $\varphi_i(B_s)$
                \State $B_d \leftarrow$ \bin containing fewer items of size $\maxi$ than $\varphi_i(B_d)$
                \While{$\exists B_1,B_2$ and $q < \log_2(\maxi)$ such that
                
                \hspace*{13mm} each of $B_1,B_2$ has a slack item of size $2^q$} \Comment{{\bf Compression Phase}}
                    \State Merge-Slack($B_1,B_2,q$)
                \EndWhile
                \State $B_{\mathit{temp}} \leftarrow$ a \bin containing a slack item of size at least $\maxi$
                \If{$B_d$ has slack at least $\maxi$} \Comment{{\bf Transfer Phase}}
                    \State move an actual item of size $\maxi$ from $B_s$ to $B_d$
                \Else
                    \State move an actual item of size $\maxi$ from $B_s$ to $B_{\mathit{temp}}$
                    \State $(\mathit{bundle}_1,\mathit{bundle}_2) \leftarrow \bundleformrm{B_d}{\log_2(\maxi/2)}$
                    \State move all actual items in $\mathit{bundle}_1$ and $\mathit{bundle}_2$ from $B_d$ to $B_s$
                    \State move an actual item of size $\maxi$ from $B_{\mathit{temp}}$ to $B_d$
                \EndIf
            \EndWhile
        \EndFor
		\end{algorithmic}
\end{algorithm}

 \begin{algorithm}[h!]
    \footnotesize
		\caption{\bundleformrm{B}{p}}\label{alg:bundles}
		\begin{algorithmic}[1]
            \State $\mathit{BundleList} \leftarrow $ list of bundles, each bundle consisting of one element of $\All(B)$
            \For{$i \leftarrow 0,\ldots,p-1$}
                \While{$\exists b_1,b_2 \in \mathit{BundleList}$, each with bsum equal to $2^i$}
                    \State remove $b_1,b_2$ from $\mathit{BundleList}$
                    \State create a new bundle by merging $b_1,b_2$, then append it to $\mathit{BundleList}$
                \EndWhile
            \EndFor
            \State Choose any two bundles $b_1,b_2$ from $\mathit{BundleList}$ that each have bsum equal to $2^p$
            \State \Return $(b_1,b_2)$
        \end{algorithmic}
\end{algorithm}

 \begin{algorithm}[h!]
    \footnotesize
		\caption{Merge-Slack($B_1,B_2,q$)}\label{alg:mergeslack}
		\begin{algorithmic}[1]
            \State $(\mathit{bundle}_1,\mathit{bundle}_2) \leftarrow \bundleformrm{B_1}{q}$
                    \If{$\mathit{bundle}_1$ contains an actual item}
                        \State move all actual items in $\mathit{bundle}_1$ from $B_1$ to $B_2$
                    \Else
                        \State move all actual items in $\mathit{bundle}_2$ from $B_1$ to $B_2$
                    \EndIf
        \end{algorithmic}
\end{algorithm}

  \linespread{1.1}\selectfont

	\subsection{Correctness of \AlgPowOfTwo}\label{sec:pow2correct}
	In what follows, denote by $k$ the number of distinct item sizes in the underlying set $U$, and recall that the size of each item in each configuration belongs to the set $\{u_1,\ldots,u_k\}$ where each $u_i$ denotes the $i$th largest item size in $U$. To prove the correctness of \AlgPowOfTwo, as described in Section \ref{sec:pow2alg}, we depend on the following invariant that we will prove holds for all stages $i \geq 1$ under the assumption that the total slack across all \bins is at least $\largestSource$.
	\begin{invariant}\label{inv:settled}
		At the start of the $i$th stage of \AlgPowOfTwo, each item size $u_j$, for $1 \leq j < i$, is already settled.
	\end{invariant}
	First, we show why Invariant \ref{inv:settled} implies the correctness of the algorithm. At the end of the $k$th stage of the algorithm, i.e., in configuration $\configuration_{k+1}$ at the start of the $(k+1)$st stage of the algorithm, we know from Invariant \ref{inv:settled} that item size $u_k$ is settled, which, by definition, means that there exists a bijection $\varphi$ from the multiset of \bins containing an item of size at least $u_k$ in $\configuration_{k+1}$ to the multiset of \bins containing an item of size at least $u_k$ in $\target$, and $\AtLeast{u_k}{B}{\configuration_{k+1}} = \AtLeast{u_k}{\varphi(B)}{\target}$ for each \bin $\binb$ in the domain of $\varphi$. Since $u_k$ is the smallest item size, it follows that all \bins in $\configuration_{k+1}$ are in the domain of $\varphi$, the bijection $\varphi$ maps each \bin $B \in \configuration_{k+1}$ to a \bin $\varphi(B) \in \target$ such that $B = \varphi(B)$, and thus $\configuration_{k+1} = \target$. Therefore, at the end of the $k$th stage of the algorithm, the target configuration has been reached, as desired.
	
	So, it remains to show that Invariant \ref{inv:settled} holds for all stages $i \geq 1$, which we prove in the remainder of this section.
	
	The invariant vacuously holds for $i=1$ (i.e., at the start of the algorithm) since the range for $j$ is empty.
	
	Next, for any $i \geq 1$ for which the invariant holds, we prove that the invariant holds at the end of the $i$th stage (i.e., at the start of the $(i+1)$st stage).

	In the Mapping Step of the $i$th stage, the algorithm starts by determining $\ell(\configuration_i,\target)$ (the largest item size in $\configuration_i$ that is not settled). If $\ell(\configuration_i,\target) < \maxi$, then the algorithm stops executing the $i$th stage, and we see that Invariant \ref{inv:settled} holds: $\ell(\configuration_i,\target) < \maxi$ implies that $\maxi$ is settled, and, since Invariant \ref{inv:settled} holds at the start of the $i$th stage, we also know that item sizes $u_1,\ldots,u_{i-1}$ are settled. 
	
	So, in the remainder of the proof, we assume that the largest unsettled item size computed in the Mapping Step satisfies $\ell(\configuration_i,\target) \geq \maxi$. Since Invariant \ref{inv:settled} holds at the start of the $i$th stage, we know that item sizes $u_1,\ldots,u_{i-1}$ are settled, so it follows that $\ell(\configuration_i,\target) = \maxi$.
	
	In the following result, we prove that the mapping $\varphi_i$ produced by the Mapping Step satisfies three properties that will be crucial for the correctness of the $i$th stage.
	
	\begin{lemma}\label{lem:mappingproperties}
		At the end of the Mapping Step of the $i$th stage, the following three properties hold:
		\begin{enumerate}[label=(\alph*)]
			\item the domain and range of $\varphi_i$ have the same size,\label{prop:domrange}
			\item the domain of $\varphi_i$ includes all \bins in $\configuration_i$ that contain an item of size at least $\maxi$, and, \label{prop:domincludes}
			\item the range of $\varphi_i$ includes all \bins in $\target$ that contain an item of size at least $\maxi$.\label{prop:rangeincludes}
		\end{enumerate}
	\end{lemma}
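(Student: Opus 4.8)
The plan is to unpack the two pieces of the definition of $\varphi_i$ given in the Mapping Step and to track exactly which \bins end up in its domain and range. We are in the case $\ell(\configuration_i,\target) = \maxi$, and Invariant~\ref{inv:settled} holds at the start of the $i$th stage, so item size $u_{i-1}$ is settled in $\configuration_i$ (for $i=1$ this is read as the empty settling bijection for the virtual size $u_0=+\infty$, and everything below degenerates harmlessly). Hence $\varphi_i$ is first set to some settling bijection $\psi$ for size $u_{i-1}$. By the definition of ``settled'', for any such $\psi$ the domain of $\psi$ is exactly the multiset of \bins of $\configuration_i$ containing an item of size at least $u_{i-1}$, the range of $\psi$ is exactly the multiset of \bins of $\target$ with that property, and, since $\psi$ is a bijection, $|\mathrm{dom}(\psi)| = |\mathrm{range}(\psi)|$. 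Because every item size lies in $\{u_1,\dots,u_k\}$, ``$B$ contains an item of size at least $u_{i-1}$'' is the same as ``the largest item of $B$ has size at least $u_{i-1}$''.

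Next I would argue that the extension of $\varphi_i$ by a bijection $\configuration_i' \to \target'$ is disjoint from $\psi$ and therefore keeps $\varphi_i$ a bijection. Since $u_1 > \dots > u_k$ strictly we have $\maxi < u_{i-1}$, and every \bin of $\configuration_i'$ is either a \bin whose largest item has size exactly $\maxi$ or a padding \bin all of whose items have size $< \maxi$; in either case it contains no item of size at least $u_{i-1}$, so it is not in $\mathrm{dom}(\psi)$, and symmetrically $\target'$ is disjoint from $\mathrm{range}(\psi)$. The padding step is always feasible: from $|\configuration_i| = |\source| = |\target|$ and $|\mathrm{dom}(\psi)| = |\mathrm{range}(\psi)|$, the numbers of \bins with largest item size at most $\maxi$ agree in $\configuration_i$ and $\target$, so whichever of $\configuration_i', \target'$ is smaller has, in its own configuration, exactly enough \bins with largest item size $< \maxi$ to pad it up. After padding $|\configuration_i'| = |\target'|$, so the added map is a genuine bijection, and the final $\varphi_i$ is a bijection with $\mathrm{dom}(\varphi_i) = \mathrm{dom}(\psi) \uplus \configuration_i'$ and $\mathrm{range}(\varphi_i) = \mathrm{range}(\psi) \uplus \target'$. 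Property~\ref{prop:domrange} follows immediately by counting: $|\mathrm{dom}(\varphi_i)| = |\mathrm{dom}(\psi)| + |\configuration_i'| = |\mathrm{range}(\psi)| + |\target'| = |\mathrm{range}(\varphi_i)|$.

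For Property~\ref{prop:domincludes}, take any \bin $B \in \configuration_i$ containing an item of size at least $\maxi$; its largest item then has size in $\{u_1,\dots,u_i\}$. If that size is one of $u_1,\dots,u_{i-1}$, then $B$ has an item of size at least $u_{i-1}$, so $B \in \mathrm{dom}(\psi) \subseteq \mathrm{dom}(\varphi_i)$; if that size equals $\maxi$, then $B \in \configuration_i'$ by the definition of $\configuration_i'$, hence $B \in \mathrm{dom}(\varphi_i)$. Property~\ref{prop:rangeincludes} is proved by the same argument with $\target$, $\mathrm{range}(\psi)$, and $\target'$ replacing $\configuration_i$, $\mathrm{dom}(\psi)$, and $\configuration_i'$. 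I do not expect a real obstacle here; the only points requiring attention are the strict inequality $\maxi < u_{i-1}$ (which is what makes the two pieces of $\varphi_i$ disjoint, hence $\varphi_i$ a well-defined injection), the $i=1$ base case, and the quick feasibility check for the padding step.
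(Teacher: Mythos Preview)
Your proposal is correct and follows essentially the same approach as the paper's proof: initialize $\varphi_i$ as a settling bijection for $u_{i-1}$, verify the padding step is feasible via the counting argument $|\configuration_i|=|\target|$ and $|\mathrm{dom}(\psi)|=|\mathrm{range}(\psi)|$, and then handle properties \ref{prop:domincludes} and \ref{prop:rangeincludes} by the same case split on whether the largest item of $B$ has size $\maxi$ or some $u_j$ with $j<i$. You are actually a bit more careful than the paper in making explicit the disjointness of $\mathrm{dom}(\psi)$ from $\configuration_i'$ (via $u_i < u_{i-1}$) and in flagging the $i=1$ base case, both of which the paper leaves implicit.
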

 	\begin{proof}
 		The Mapping Step initializes $\varphi_i$ to be a settling bijection for item size $u_{i-1}$, which we know exists since item size $u_{i-1}$ is settled. From the definition of settling bijection for size $u_{i-1}$, we know that properties \ref{prop:domrange}-\ref{prop:rangeincludes} initially hold for $\varphi_i$, but with properties \ref{prop:domincludes} and \ref{prop:rangeincludes} referring to items of size at least $u_{i-1}$.
 		
 		Then, $\varphi_i$ is extended (as described in the Mapping Step of the $i$th stage), and we prove that properties \ref{prop:domrange}-\ref{prop:rangeincludes} hold for item size $\maxi$ as well. 
 		
 		To prove that property \ref{prop:domrange} holds for item size $\maxi$ at the end of the Mapping Step, from the description of the Mapping Step, we see that $\varphi_i$ is extended using a one-to-one mapping whose domain $\configuration_i'$ and range $\target'$ have equal size, and, since the domain and range of $\varphi_i$ had equal size initially, it follows that the domain and range of $\varphi_i$ have equal size at the end of the Mapping Step of the $i$th stage. It remains to verify that the claimed construction of $\configuration_i'$ and $\target'$ is always possible. Since $\varphi_i$ is initially a mapping from $\configuration_i$ and $\target$ whose domain and range have equal size, and all configurations have the same number of \bins, it follows that $\configuration_i$ and $\target$ have the same number of \bins that are \emph{not} in the initial domain and range of $\varphi_i$, respectively. Moreover, by properties \ref{prop:domincludes} and \ref{prop:rangeincludes} of $\varphi_i$ initially, any \bins that were not in the initial domain of $\varphi_i$ only contain items of size at most $\maxi$, and, any \bins that were not in the initial range of $\varphi_i$ only contain items of size at most $\maxi$. Thus, during the construction of $\configuration_i'$ and $\target'$ in the Mapping Step, any \bin whose largest item does not have size $\maxi$ has the property that all its items are smaller than $\maxi$, so there are always \bins available to make $|\configuration_i'|=|\target'|$, as required.
 		
 		To prove properties \ref{prop:domincludes} and \ref{prop:rangeincludes} hold for item size $\maxi$ at the end of the Mapping Step, consider any \bin $B \in \configuration_i$ that contains at least one item of size at least $\maxi$. If $\maxi$ is not the size of the largest item in $B$, then there is an item in $B$ that has size $u_j$ for some $j < i$, and we know that the initial domain of $\varphi_i$ already contained $B$. Otherwise, if $\maxi$ is the size of the largest item in $B$, then from the description of the Mapping Step of the $i$th stage, we know that $\varphi_i$ is extended so that its domain includes $B$. In both cases, at the end of the Mapping Step of the $i$th stage, the domain of $\varphi_i$ contains $B$, so we conclude that all \bins in $\configuration_i$ that contain at least one item of size at least $\maxi$ are included in the domain of $\varphi_i$ at the end of the Mapping Step. An identical proof about the range of $\varphi_i$ proves that each \bin in $\target$ that contains at least one item of size at least $\maxi$ is included in the range of $\varphi_i$ at the end of the Mapping Step.
 	\end{proof} 
	
	Next, suppose that the Mapping Step of the $i$th stage has been executed, and that the termination condition of the Action Step has not yet been met. From the description of the Action Step, this means that there is at least one \bin $B$ in the domain of $\varphi_i$ such that the number of items of size $\maxi$ in $B$ is not equal to the number of items of size $\maxi$ in $\varphi_i(B)$. The following result confirms that there must exist a surplus \bin $B_s$ and a deficit \bin $B_d$. 
	
	\begin{lemma}\label{lem:SurplusDeficit}
		At the start of any iteration of the Action Step, suppose that there is at least one \bin $B$ in the domain of $\varphi_i$ such that the number of items of size $\maxi$ in $B$ is not equal to the number of items of size $\maxi$ in $\varphi_i(B)$. Then, there exists a \bin $B_s$ containing more items of size $\maxi$ than in $\varphi_i(B_s)$, and, a \bin $B_d$ containing fewer items of size $\maxi$ than in $\varphi_i(B_d)$.
	\end{lemma}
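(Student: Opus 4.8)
The plan is a conservation (pigeonhole) argument on the items of size $\maxi$: their total number is globally fixed, so any local imbalance at one \bin must be cancelled by an opposite imbalance at another \bin. Let $N$ denote the number of items of size $\maxi$ in the underlying set $U$. Since a reconfiguration step only relocates a single item, every configuration in any reconfiguration sequence contains exactly $N$ items of size $\maxi$; in particular, so does $\target$, and so does the current configuration at the start of the iteration of the Action Step under consideration.

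The first and main step is to verify the following invariant: at the start of every iteration of the Action Step, every item of size $\maxi$ in the current configuration lies in a \bin belonging to the domain of $\varphi_i$, and every item of size $\maxi$ in $\target$ lies in a \bin belonging to the range of $\varphi_i$. The statement about $\target$ is immediate from Lemma~\ref{lem:mappingproperties}\ref{prop:rangeincludes}. For the statement about the current configuration I would argue by induction over the iterations. The base case (the end of the Mapping Step, where the configuration is still $\configuration_i$) is exactly Lemma~\ref{lem:mappingproperties}\ref{prop:domincludes}. For the inductive step I would inspect the two phases of one iteration: in the compression phase, each call to \ms moves only actual items from a bundle of bsum $2^q$ with $q<\log_2(\maxi)$, hence items of size strictly less than $\maxi$; in the transfer phase, the only items of size $\maxi$ that move are the one sent from $B_s$ to $B_{\mathit{temp}}$ and then from $B_{\mathit{temp}}$ to $B_d$ (or the single direct move from $B_s$ to $B_d$), while the items returned from $B_d$ to $B_s$ come from bundles of bsum $\maxi/2<\maxi$. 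Thus the net effect of a full iteration on items of size $\maxi$ is to move exactly one such item from $B_s$ to $B_d$, both of which lie in the domain, with $B_{\mathit{temp}}$ holding a size-$\maxi$ item only transiently inside the iteration; and since $\varphi_i$ is never altered during the stage, no \bin ever leaves the domain. This bookkeeping is the only real obstacle; everything else is a direct count.

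Granting the invariant, let $n_{\maxi}(B)$ denote the number of items of size $\maxi$ in \bin $B$, and sum over the domain of $\varphi_i$. On one hand $\sum_{B \in \mathrm{dom}(\varphi_i)} n_{\maxi}(B) = N$, since the domain captures all size-$\maxi$ items of the current configuration. On the other hand $\sum_{B \in \mathrm{dom}(\varphi_i)} n_{\maxi}(\varphi_i(B)) = N$, since $\varphi_i$ is a bijection onto its range and the range captures all size-$\maxi$ items of $\target$. Subtracting, $\sum_{B \in \mathrm{dom}(\varphi_i)} \bigl( n_{\maxi}(B) - n_{\maxi}(\varphi_i(B)) \bigr) = 0$. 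By hypothesis at least one summand is nonzero, and since the summands total $0$, at least one summand is strictly positive and at least one is strictly negative. Any \bin $B$ with a strictly positive summand serves as a surplus \bin $B_s$, and any \bin $B$ with a strictly negative summand serves as a deficit \bin $B_d$, which is exactly what the lemma asserts.
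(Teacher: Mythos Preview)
Your argument is correct and follows the same counting strategy as the paper: both use that the total number of size-$\maxi$ items is the same in the current configuration and in $\target$, together with Lemma~\ref{lem:mappingproperties} to ensure all such items lie in \bins in the domain (respectively range) of $\varphi_i$, so the differences $n_{\maxi}(B)-n_{\maxi}(\varphi_i(B))$ sum to zero.

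The one place you go beyond the paper is in explicitly proving, by induction over iterations, that at the start of \emph{every} iteration all size-$\maxi$ items still sit in domain \bins. The paper's proof simply appeals to Lemma~\ref{lem:mappingproperties}, which strictly speaking is a statement about $\configuration_i$ (the configuration at the start of the stage), and leaves implicit that nothing that happens during earlier iterations can push a size-$\maxi$ item outside the domain. Your observation that $B_{\mathit{temp}}$ only holds such an item transiently within a single iteration is exactly the missing detail; the paper effectively relies on the later Lemmas~\ref{lem:CompressCorrect} and~\ref{lem:TransferCorrect} for this, but does not spell it out at this point. So your proof is the same approach, just more carefully justified.
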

	\begin{proof}
		Recall from the model that items cannot be added or removed during reconfiguration, so configurations $\configuration_i$ and $\target$ contain the same number of items of size $\maxi$. By Lemma \ref{lem:mappingproperties}, the domain of $\varphi_i$ contains all \bins in $\configuration_i$ that have at least one item of size $\maxi$, and, the range of $\varphi_i$ contains all \bins in $\target$ that have at least one item of size $\maxi$. In other words, the multiset union of all \bins in the domain of $\varphi_i$ contains all the items of size $\maxi$ in configuration $\configuration_i$, and, the multiset union of all \bins in the range of $\varphi_i$ contains all the items of size $\maxi$ in configuration $\target$, and the number of items of size $\maxi$ in these two unions must be equal. Thus, if there exists a \bin $B$ in the domain of $\varphi_i$ such that the number of items of size $\maxi$ in $B$ is strictly greater than the number of items of size $\maxi$ in $\varphi_i(B)$, then there must exist another \bin $B'$ in the domain of $\varphi_i$ such that the number of items of size $\maxi$ in $B'$ is strictly less than the number of items of size $\maxi$ in $\varphi_i(B')$, and vice versa.
	\end{proof}

	Our next goal is to prove that our algorithm carries out a sequence of moves that results in an item of size $\maxi$ being taken from $B_s$ and put into $B_d$. If the amount of slack in $B_d$ is at least $\maxi$, then this can be carried out directly in one move. However, more generally, there is not necessarily enough slack in $B_d$. We prove that, during the compression phase of the Action Step, our algorithm is able to reach a configuration where at least one \bin $B_{\mathit{temp}}$ has slack at least $\maxi$. We start by proving the correctness of the procedure \bundleformrm{B}{p}.
	
	\begin{lemma}\label{lem:BundlesCorrect}
		Let $B$ be an arbitrary \bin, and, let $p$ be an integer such that $p < \log_2 \capacity$ and there is at least one element of size at most $2^p$ in $\mathit{\All}(B)$. Then, \bundleformit{B}{p} returns two disjoint bundles consisting of elements of $\mathit{\All}(B)$, each with bsum equal to $2^p$. Moreover, at least one of the two returned bundles contains an actual item.
	\end{lemma}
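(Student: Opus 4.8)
The plan is to track, through the merging process carried out by $\bundleformit{B}{p}$, a small set of invariants together with a ``binary counter'' accounting of bundle sizes. Throughout the execution, the current list of bundles is a partition of the multiset $\mathit{\All}(B)$ into sub-multisets, and every bundle has bsum equal to a power of two; the latter holds because the starting bundles are single elements of $\mathit{\All}(B)$ (each a power of two) and the procedure only ever merges two bundles of \emph{equal} bsum. In particular, once we know that a valid final choice exists, the two returned bundles are automatically disjoint and consist of elements of $\mathit{\All}(B)$, so those parts of the statement come for free. For each integer $i \ge 0$, let $f(i)$ be the current number of bundles of bsum $2^i$; then $\sum_{i \ge 0} f(i)\,2^i = \capacity$ at all times.

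Next I would analyze the end state. During the iteration of the \textbf{for} loop with index $i$, each merge decreases $f(i)$ by $2$ and increases $f(i{+}1)$ by $1$; since $f(i) \ge 0$, the inner \textbf{while} loop terminates with $f(i) \le 1$, and no later iteration changes $f(i)$. Hence $\bundleformit{B}{p}$ terminates and, at the end, $f(i) \in \{0,1\}$ for every $i < p$. Moreover, a bundle of bsum larger than $2^p$ is never produced by a merge (a merge at level $i < p$ gives bsum $2^{i+1} \le 2^p$) and is never merged away (the loop index is $< p$), so at the end the bundles of bsum larger than $2^p$ are exactly the singleton starting bundles consisting of one element of $\mathit{\All}(B)$ of size larger than $2^p$; writing $S$ for their total bsum, $S$ is a multiple of $2^{p+1}$, and $\capacity - S$ equals the total size of all elements of $\mathit{\All}(B)$ of size at most $2^p$, which is strictly positive by hypothesis.

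The heart of the proof is showing $f(p) \ge 2$ at the end. Since $\capacity$ is a power of two and $p < \log_2 \capacity$, we have $2^{p+1} \mid \capacity$, so $f(p)\,2^p + \sum_{i<p} f(i)\,2^i = \capacity - S$ is a multiple of $2^{p+1}$. But $0 \le \sum_{i<p} f(i)\,2^i \le \sum_{i=0}^{p-1} 2^i = 2^p - 1 < 2^{p+1}$, while $f(p)\,2^p$ is a multiple of $2^p$; reducing modulo $2^p$ forces $\sum_{i<p} f(i)\,2^i = 0$, and then reducing $f(p)\,2^p = \capacity - S$ modulo $2^{p+1}$ forces $f(p)$ to be even. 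Since $\capacity - S > 0$, $f(p)$ is a positive even integer, hence $f(p) \ge 2$, so the final ``choose any two bundles of bsum $2^p$'' is well-defined and returns two disjoint bundles of $\mathit{\All}(B)$ of bsum $2^p$ each. For the last claim, I would show that at most one end-state bundle of bsum $2^p$ is all-slack: the slack items of $B$ are pairwise distinct powers of two (they realize the binary expansion of $\mathit{slack}(B)$), so slack items of size less than $2^p$ sum to at most $2^p - 1$; therefore an all-slack bundle of bsum exactly $2^p$ must contain a slack item of size at least $2^p$ and, its bsum being exactly $2^p$, must be precisely the singleton consisting of the (unique, if present) slack item of size $2^p$. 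Since $f(p) \ge 2$, any two bundles of bsum $2^p$ include at least one containing an actual item.

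The routine ingredients are the partition/power-of-two invariants, termination, and the all-slack count. The one genuinely non-obvious step is the bound $f(p) \ge 2$: the right idea is to exploit divisibility by $2^{p+1}$ (which is exactly where $p < \log_2 \capacity$ is used) to kill all residual bundles of bsum below $2^p$ \emph{and} force the number of $2^p$-bundles to be even, upgrading ``at least one'' to ``at least two''; the hypothesis that $\mathit{\All}(B)$ contains an element of size at most $2^p$ is precisely what rules out this even count being zero.
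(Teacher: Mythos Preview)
Your proof is correct and follows essentially the same binary-counting approach as the paper: both use that $f(i)\le 1$ for $i<p$ at termination together with $2^{p+1}\mid\capacity$ to force $f(p)\ge 2$, and both use the distinctness of slack-item sizes to bound the number of all-slack bundles of bsum $2^p$. Your presentation is, if anything, a bit sharper: you argue directly that $f(p)$ is even and positive (the paper proceeds by contradiction assuming $f(p)\le 1$), and your all-slack step pins down that such a bundle must be the singleton slack item of size $2^p$, whereas the paper derives a contradiction by merging two hypothetical all-slack bundles into a slack block of size $2^{p+1}$.
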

	\begin{proof}
		First, note that \bundleformrm{B}{p} terminates after a finite number of iterations: an item in $\mathit{\All}(B)$ with initial size $2^j$ can be involved in at most $p-j$ merges, since more merges would require the merging of two bundles with bsum $2^p$.
		
		Next, we show that when \bundleformrm{B}{p} terminates, there are at least two bundles that each have bsum equal to $2^p$.
		When \bundleformrm{B}{p} terminates, the following conditions hold:
		
		\begin{itemize}
			\item[(i)] For any $i< p$, there is at most one bundle with bsum $2^i$, since, otherwise, two such bundles would have been merged before \bundleformrm{B}{p} terminates. 
			\item[(ii)] There exists at least one bundle of size exactly $2^i$ for some $i \in \{0,\ldots,p\}$. This is because, by assumption, $\mathit{\All}(B)$ contains at least one element $y$ of size $2^j$ for some $j \leq p$, and, when \bundleformrm{B}{p} terminates, the final bundle containing $y$ will have bsum at least $2^j$ and at most $2^p$: otherwise, if it has a bsum more than $2^p$, two bundles each with bsum $2^p$ must have been merged, but the algorithm does not merge bundles with bsum $2^p$ or larger.
		\end{itemize} 
		
		For the sake of contradiction, assume that after \bundleformrm{B}{p} terminates, the following holds: (iii) there are 0 or 1 bundles with bsum $2^p$. Then, from (i), (iii), and the fact that the sum of elements in $\mathit{\All}(B)$ sum to $\capacity$, a bundle with bsum $2^i$, for $i \leq p$, represents a `1' as the $(i+1)$st least significant digits in the binary encoding of $\capacity$, and from (ii), at least one such bundle exists. We conclude that there is at least one `1' among the $p+1$ least significant digits in the binary encoding of $\capacity$.
		This contradicts the fact that $\capacity$ is a power of 2 greater than or equal to $2^{p+1}$ (since $p$ is an integer strictly less than $\log_2 \capacity$, by assumption), because the $p+1$ least significant digits in the binary representation of such a power of 2 are all 0.
		
		Finally, we prove that, when \bundleformrm{B}{p} terminates, there is at most one bundle with bsum equal to $2^p$ that consists entirely of slack items. To obtain a contradiction, assume that there are two bundles $b_1,b_2$ that each have bsum equal to $2^p$ and that consist entirely of slack items (with each such slack item having size at most $2^p$). Then, merging these two bundles gives a bundle of slack items with bsum equal to exactly $2^{p+1}$. But this gives a contradiction: recall that the slack items in a \bin are the maximal powers of 2 whose sum adds up to the total slack in the \bin $B$, so the the slack items in $b_1,b_2$ cannot exist (i.e., they could all be replaced with a single slack item of size $2^{p+1}$). Therefore, by way of contradiction, we proved that there is at most one bundle with bsum equal to $2^p$ that consists entirely of slack items, so we conclude that each other bundle with bsum equal to $2^p$ contains at least one actual item.
		
		In conclusion, we have shown that, when \bundleformrm{B}{p} terminates, there are at least two bundles that each have bsum equal to $2^p$ that can be returned, and, for any two returned bundles, at least one of them contains an actual item, as desired.
	\end{proof}

	The compression phase consists of repeated executions of the \ms procedure. When proving the correctness of the compression phase, we will use the following property of \ms.

	\begin{lemma}\label{lem:MergeslackCorrect}
		For any two \bins $B_1$ and $B_2$, each having exactly one slack item of size $2^q$ with $q < \log_2(u_i)$, the execution of \ms on these two \bins results in a configuration with at least one \bin having a slack item of size $2^{q+1}$, and the total number of slack items of size $2^q$ across all \bins is reduced. Moreover, each actual item moved during the execution of \ms has size strictly less than $\maxi$.
	\end{lemma}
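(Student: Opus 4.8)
I would break the statement into four claims and verify them in order: (a) the procedure is well defined and only passes through legal configurations; (b) every actual item moved has size strictly less than $\maxi$; (c) after the execution $B_1$ holds a slack item of size at least $2^{q+1}$ (and loses its size-$2^q$ slack item); and (d) the total number of size-$2^q$ slack items across all \bins strictly drops. Claims (a), (b), (d) are bookkeeping; claim (c) is the heart of the matter.

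First I would check that \ms is well defined. It calls $\bundleformrm{B_1}{q}$, whose hypotheses hold: $q < \log_2(\maxi) \le \log_2 \capacity$, and the size-$2^q$ slack item of $B_1$ is an element of $\mathit{\All}(B_1)$ of size at most $2^q$. By Lemma~\ref{lem:BundlesCorrect} this returns two disjoint bundles of bsum $2^q$, at least one of which, call it $b$, contains an actual item; \ms moves the actual items of $b$ from $B_1$ to $B_2$. Let $a$ be their total size, so $1 \le a \le 2^q$ (each actual item is a power of $2$ and the bsum of $b$ is $2^q$). Since $B_2$ has a slack item of size $2^q$, its slack is at least $2^q \ge a$, so moving these items one at a time into $B_2$ never overflows, and removing them from $B_1$ only raises $\mathit{slack}(B_1)$; hence every intermediate configuration is legal. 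Claim (b) is then immediate: every element of $b$ has size at most $2^q$, and $q < \log_2(\maxi)$ gives $2^q < \maxi$.

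For claim (c) I would argue that returning the volume $a$ to the slack of $B_1$ forces a carry out of bit position $q$. The key observation is that every slack item inside $b$ has size strictly less than $2^q$: if $b$ contained a slack item of size $2^q$ it would equal that singleton and hence contain no actual item, contradicting the choice of $b$. Since the slack items of a single \bin are distinct powers of $2$, the slack items of $b$ are distinct powers of $2$ each below $2^q$, hence they are among the powers making up the low-order part of $\mathit{slack}(B_1)$. Writing $\mathit{slack}(B_1) = H + 2^q + L$ with $H$ a nonnegative multiple of $2^{q+1}$ and $0 \le L < 2^q$ (the $2^q$ term is present because $B_1$ has exactly one size-$2^q$ slack item), the slack items of $b$ sum to $2^q - a$ and lie within $L$, so $L \ge 2^q - a$ and therefore $2^q \le L + a < 2^{q+1}$. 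Hence $L + a = 2^q + L'$ with $0 \le L' < 2^q$, and after the move $\mathit{slack}(B_1) = (H + 2^{q+1}) + L'$: bit $q$ is now $0$, so $B_1$ no longer has a size-$2^q$ slack item, and $H + 2^{q+1}$ is a positive multiple of $2^{q+1}$, so its binary decomposition contains a power of $2$ of size at least $2^{q+1}$ — i.e., $B_1$ has a slack item of size at least $2^{q+1}$. (I would note in the write-up that the literal phrasing ``a slack item of size $2^{q+1}$'' should be read as ``at least $2^{q+1}$'', which is all that is needed; one gets size exactly $2^{q+1}$ unless $B_1$ already had slack of size $\ge 2^{q+1}$, in which case its slack has only grown.) Finally, claim (d): only $B_1$ and $B_2$ change; $B_1$ loses its unique size-$2^q$ slack item by (c), while $\mathit{slack}(B_2)$ decreases by $a \ge 1$ and any integer has at most one $2^q$ in its binary decomposition, so $B_2$ holds no more size-$2^q$ slack items than before; hence the global count drops by at least one.

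The hard part will be claim (c) — specifically, the clean identification that the slack items carried along inside the moved bundle $b$ occupy precisely the portion of $\mathit{slack}(B_1)$ below $2^q$, which is exactly what guarantees $L + a \ge 2^q$ and hence the carry past bit $q$. Everything else is routine verification from the definitions of slack items, bsum, and the procedures $\bundleformrm{B_1}{q}$ and \ms.
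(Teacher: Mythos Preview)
Your proposal is correct and follows essentially the same approach as the paper's proof: invoke Lemma~\ref{lem:BundlesCorrect}, observe that the bundle whose actual items get moved cannot contain the size-$2^q$ slack item $y$ of $B_1$ (so all its slack items lie strictly below $2^q$), and then analyze the resulting slack of $B_1$. The main difference is presentational: the paper argues informally that after the move $B_1$ conceptually contains ``two slack items of size $2^q$'' (namely $y$ and the now-fully-slack region that was the moved bundle) which combine into one of size $2^{q+1}$, whereas you make this precise via the binary decomposition $\mathit{slack}(B_1)=H+2^q+L$ and the carry computation --- and you are right to note that the conclusion is really ``at least $2^{q+1}$'', which is all that Lemma~\ref{lem:CompressCorrect} actually uses.
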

	\begin{proof}
		Consider any two \bins $B_1$ and $B_2$, each having exactly one slack item of size $2^q$ with $q < \log_2(u_i)$. In the execution of \ms on these two \bins, the procedure \bundleformrm{B_1}{q} is executed, and by Lemma \ref{lem:BundlesCorrect}, this produces two disjoint bundles consisting of elements of $\mathit{\All}(B_1)$, each with bsum equal to $2^q$. Since these two bundles are disjoint, one of the bundles (call it $\mathit{Move}$) does not contain the slack item (call it $y$) of $B_1$ that has size $2^q$. Since slack item sizes are defined to be maximal powers of 2, and each bundle in the output of {\bundleformrm{B_1}{q}} has bsum $2^{q}$, $\mathit{Move}$ cannot be entirely formed by slack items (otherwise, its slack items, together with $y$, should have formed a larger slack item in $B_1$ of size $2^{q+1}$). \ms moves the actual items of the $\mathit{Move}$ bundle from $B_1$ to $B_2$. This is possible because the sum of the actual items in such a bundle is at most $2^q$, which is no more than the available slack in $B_2$ since $B_2$ is assumed to contain a slack item of size $2^q$. As a result of this move, there will be two slack items of size $2^q$ in $B_1$, which are subsequently combined into one slack item of size $2^{q+1}$. Overall, there are fewer slack items of size $2^q$ at the end of this execution of \ms than at the start, since: the slack item $y$ in $B_1$ disappears as a result of the final merge of two slack items of size $2^q$ in $B_1$, and, the number of slack items of size $2^q$ in $B_2$ did not increase since no \bin can ever have more than one slack item of a given size (slack item sizes are defined to be maximal powers of 2). Also, note that each actual item moved during this execution of \ms has size strictly less than $\maxi$ since the bsum of all items in $\mathit{Move}$ is $2^{q} < \maxi$.
	\end{proof}
	
	Using Lemma \ref{lem:MergeslackCorrect}, we now prove that, after a finite number of \ms executions, we end up in a configuration where there is at most one slack item of size $2^i$ for each $i < \log_2(\maxi)$. Moreover, in such a configuration, there exists at least one \bin $B_{\mathit{temp}}$ that has slack at least $\maxi$.
	
	\begin{lemma}\label{lem:CompressCorrect}
		Suppose that, at the start of the compression phase in the $i$th stage, there is at least one \bin $B$ in the domain of $\varphi_i$ such that the number of items of size $\maxi$ in $B$ is not equal to the number of items of size $\maxi$ in $\varphi_i(B)$. Then, during the compression phase of the $i$th stage, a configuration is reached that contains a \bin $B_{\mathit{temp}}$ with slack at least $\maxi$. Moreover, each actual item moved during the execution of the compression phase has size strictly less than $\maxi$.
	\end{lemma}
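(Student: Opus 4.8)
The plan is to verify three things: the compression while-loop terminates; when it exits, some bin has slack at least $\maxi$; and no actual item of size at least $\maxi$ is ever moved during the phase. The last point is immediate, since the compression phase consists only of executions of \ms, and Lemma~\ref{lem:MergeslackCorrect} guarantees that each such execution moves only actual items of size strictly less than $\maxi$. The hypotheses of Lemma~\ref{lem:MergeslackCorrect} are met at every invocation: the loop guard supplies two bins each having a slack item of size $2^q$ for some $q<\log_2(\maxi)$, and a bin never holds two slack items of the same size because slack items are maximal powers of two.

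For the second point, recall that whenever the Action Step of stage $i$ runs we have already established $\ell(\configuration_i,\target)=\maxi$; since the total slack of every configuration equals that of $\source$ and is therefore at least $\largestSource\ge\maxi$, the total slack $T$ of the current configuration satisfies $T\ge\maxi$. Now suppose the compression loop has exited. Then for each $q$ with $0\le q<\log_2(\maxi)$ at most one bin has a slack item of size $2^q$, so the total slack residing in slack items of size less than $\maxi$ is at most $\sum_{q=0}^{\log_2(\maxi)-1}2^q=\maxi-1$. The remaining slack lies entirely in slack items that are powers of two of size at least $\maxi$, so its total is either $0$ or at least $\maxi$; since the two parts sum to $T\ge\maxi>\maxi-1$, the second part is nonzero, hence at least $\maxi$, so some bin contains a slack item of size at least $\maxi$, and any such bin may be taken as $B_{\mathit{temp}}$.

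For termination I would argue with a potential function. By Lemma~\ref{lem:MergeslackCorrect}, a single \ms at level $q$ strictly reduces the number of slack items of size $2^q$ across all bins; the complication, as in the proof of that lemma, is that the recipient bin may have its size-$2^q$ slack item fragmented into as many as $q$ smaller slack items. I would take as potential the total number of slack items across all bins and show it strictly decreases at each step: to assemble the transferred bundle (bsum $2^q$, at least one actual item by Lemma~\ref{lem:BundlesCorrect}), the donor bin must hold a set of slack items summing to $2^q$ minus the size of the moved actual items, so its slack-below-$2^{q+1}$ is at least $2^{q+1}-m$ where $m$ is that moved size; once the donor's slack increases by $m$ this overflows bit $q$ and carries upward, and the collapse removes from the donor strictly more slack items — by at least one — than the recipient can acquire by fragmentation. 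An alternative that avoids the bit-level bookkeeping is to use as primary potential the total slack held in items of size less than $\maxi$: since the recipient already has a set bit at position $q$ and receives at most $2^q$ units, no borrow propagates past bit $q$, so this quantity is non-increasing, and a secondary argument (e.g.\ the multiset of small slack-item sizes under a lexicographic order) shows it cannot remain constant forever. Either way the potential is a bounded nonnegative integer, so the loop halts.

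The step I expect to be the real obstacle is precisely the strict decrease in the termination argument: because the recipient bin genuinely can have its slack shattered, one must pin down where the net progress comes from, and the cleanest accounting I see is the bit-level one above, establishing that the donor's forced consolidation beats the recipient's fragmentation by at least one slack item per \ms execution.
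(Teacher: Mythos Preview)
Your proof has one real gap: the inequality $\largestSource\ge\maxi$. You write it as if it follows from $\ell(\configuration_i,\target)=\maxi$, but these refer to different configurations---items may have been moved in stages $1,\ldots,i-1$, so $\configuration_i$ need not equal $\source$, and there is no general monotonicity relating $\ell(\source,\target)$ to $\ell(\configuration_i,\target)$. The paper closes this gap by a short contradiction argument: assume the total slack is strictly less than $\maxi$; by the standing hypothesis (total slack $\ge\largestSource$) this forces $\largestSource<\maxi$, hence $\largestSource<u_j$ for every $j\le i$. One then shows by induction on $j$ that $\configuration_j=\source$ and that stage $j$ terminates in the Mapping Step (since $\ell(\configuration_j,\target)=\ell(\source,\target)=\largestSource<u_j$). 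This contradicts ever reaching the compression phase of stage $i$. Equivalently, reaching the Action Step of stage $i$ forces $\largestSource\ge\maxi$, after which your chain ``total slack $=$ slack of $\source\ge\largestSource\ge\maxi$'' is valid. You need to supply this argument.

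On termination, you are actually more careful than the paper, which simply cites Lemma~\ref{lem:MergeslackCorrect} (each \ms at level $q$ reduces the global count of size-$2^q$ slack items) and asserts finiteness without addressing the fragmentation you worry about. Your approach (a) is correct and can be made precise as follows. Let $|S|$ be the number of slack items in the transferred bundle (so $|S|=\mathrm{bits}(2^q-a)$, with $|S|=0$ iff $a=2^q$) and let $k\ge 0$ be the length of the donor's carry chain above bit $q$. A direct count shows the donor loses exactly $|S|+k$ slack items. For the recipient, writing $t$ for its low-$(q{+}1)$ slack, $t'=t-2^q$, and $u=2^q-a$, its low slack becomes $t-a=t'+u$, and the elementary inequality $\mathrm{bits}(m+n)\le\mathrm{bits}(m)+\mathrm{bits}(n)$ gives a gain of at most $\mathrm{bits}(t'+u)-\mathrm{bits}(t)\le(\mathrm{bits}(t')+\mathrm{bits}(u))-(\mathrm{bits}(t')+1)=|S|-1$. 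Hence each \ms strictly decreases the total number of slack items, and the loop halts. So your instinct about where the work lies is right, and the bit-level accounting you sketch does go through; the piece you are missing is the justification of $\largestSource\ge\maxi$.
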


	\begin{proof} 
		First, we prove that the total slack across all \bins is at least $\maxi$. Suppose that the start of the compression phase in the Action Step of the $i$th stage has been reached. To obtain a contradiction, assume that the total slack across all \bins is strictly less than $\maxi$, which implies that $\largestSource < \maxi$ (since we are in the process of proving Invariant \ref{inv:settled} under the assumption that the total slack across all \bins is at least $\largestSource$). But, $\largestSource < \maxi$ implies that $\largestSource < u_j$ for each $j \leq i$, since $u_j$ is defined as the $j$th largest item size. We prove, by induction, that for each $j \leq i$, it is the case that $\configuration_j = \source$ and the Action Step of the $j$th stage is skipped. When $j=1$, we know that $\configuration_j = \source$ since $\configuration_1$ is the configuration at the start of the first stage, and, the first stage terminates during the Mapping Step since we showed above that $\largestSource < u_1$. Under the induction assumption that $\configuration_{j-1} = \source$ and the Action Step of the $(j-1)$th stage is skipped for some $2 \leq j \leq i$, we see that $\configuration_{j} = \configuration_{j-1} = \source$ since no items were moved during the $(j-1)$th stage (since the Action Step of the $(j-1)$th stage was skipped), and we see that the Action Step of the $j$th stage will be skipped since we showed above that $\largestSource < u_{j}$, which concludes the inductive step. Thus, we have shown that the Action Step will be skipped in the $i$th stage, which contradicts the fact that the start of the compression phase in the Action Step of the $i$th stage has been reached. By reaching a contradiction, we confirm that the total slack across all \bins cannot be strictly less than $\maxi$, as desired.

		Now, suppose that $\maxi = 2^x$ for some $x \geq 0$. The compression phase consists of repeatedly executing \ms on two \bins that each have a slack item of size $2^q$ with $q < \log_2(\maxi)=x$. By Lemma \ref{lem:MergeslackCorrect}, each actual item moved during each such execution of \ms has size strictly less than $\maxi$, and, each such execution results in an overall decrease in the number of slack items of size $2^q$. So, after a finite number of executions of \ms, we reach a configuration where there is at most one slack item of each size $2^q$ for $0 \leq q < x$. We claim that, when this configuration is reached, there is a slack item of size at least $2^x = \maxi$, and we let $B_{\mathit{temp}}$ be the \bin containing that slack item. To prove this claim, assume otherwise, and notice that since there is at most one slack item of any size $2^q$ for $q < x$, the total size of slack items across all \bins will be at most $\sum_{q=0}^{x-1} 2^q = 2^x-1 = \maxi-1$, which contradicts the fact, established above, that the total slack across all \bins is at least $\maxi$. 
	\end{proof}

	Next, we prove that during the transfer phase of the Action Step in the $i$th stage, an item from $B_s$ of size $\maxi$ is moved to $B_d$.
	
	\begin{lemma}\label{lem:TransferCorrect}
		The transfer phase of the Action Step consists of a sequence of moves such that the first move takes out an item of size $\maxi$ from $B_s$, the final move puts an item of size $\maxi$ into $B_d$, and all other moved items have size less than $\maxi$.
	\end{lemma}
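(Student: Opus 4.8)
The plan is to prove the statement by a case split on whether $\mathit{slack}(B_d)\geq\maxi$ when the transfer phase begins. Here $B_s$ and $B_d$ are the surplus and deficit \bins guaranteed by Lemma~\ref{lem:SurplusDeficit} (so $B_s$ contains at least one item of size $\maxi$), and $B_{\mathit{temp}}$ is the \bin with slack at least $\maxi$ guaranteed by Lemma~\ref{lem:CompressCorrect}. If $\mathit{slack}(B_d)\geq\maxi$, the phase is the single move of an item of size $\maxi$ from $B_s$ to $B_d$; this is legal by the case hypothesis, and the claim holds because this move is at once the first and the final move and there are no moves strictly between them.

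For the main case $\mathit{slack}(B_d)<\maxi$ I would first establish two facts. First, $\maxi\geq2$: otherwise $\mathit{slack}(B_d)=0$ and $\mathit{vol}(B_d)=\capacity$, but $B_d$ and $\varphi_i(B_d)$ hold the same items of size larger than $\maxi$ (since $\varphi_i$ extends a settling bijection for $u_{i-1}$; see the Mapping Step and Lemma~\ref{lem:mappingproperties}) while $\varphi_i(B_d)$ has strictly more items of size $\maxi$ than $B_d$, forcing $\mathit{vol}(\varphi_i(B_d))\geq\capacity+\maxi>\capacity$, a contradiction; hence $\maxi=2^x$ with $x\geq1$, so $\maxi/2$ is a power of two with $0\le\log_2(\maxi/2)<\log_2\capacity$. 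Second, $\mathit{\All}(B_d)$ contains an element of size at most $\maxi/2$: if $\mathit{slack}(B_d)>0$ then $B_d$ has a slack item, necessarily of size smaller than $\maxi$ and hence at most $\maxi/2$; and if $\mathit{slack}(B_d)=0$ then not all actual items of $B_d$ can have size at least $\maxi$, by the same volume contradiction. This second fact is exactly the hypothesis needed to invoke Lemma~\ref{lem:BundlesCorrect} on $\bundleformrm{B_d}{\log_2(\maxi/2)}$ in step~(2).

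It then remains to walk through the three moves and track slacks and item sizes. Step~(1) moves an item of size $\maxi$ out of $B_s$ into $B_{\mathit{temp}}$; it is legal since $B_s$ has such an item and $\mathit{slack}(B_{\mathit{temp}})\geq\maxi$, and it is the required first move. After step~(1) the slack of $B_s$ has increased by $\maxi$, so $B_s$ can absorb all the actual items of the two bundles returned in step~(2) — their sizes sum to at most the combined bsum $\maxi$ — and each such item has size at most $\maxi/2<\maxi$; thus step~(2) is legal and moves only items of size less than $\maxi$. Removing those actual items from $B_d$ raises $\mathit{slack}(B_d)$ by $\maxi$ minus the total size of the slack items that lay inside the two bundles, and the latter sum is at most $\mathit{slack}(B_d)$, so afterwards $\mathit{slack}(B_d)\geq\maxi$. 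Finally step~(3) moves the item of size $\maxi$ from $B_{\mathit{temp}}$ to $B_d$: $B_{\mathit{temp}}$ still holds it (step~(2) touched only $B_d$ and $B_s$) and $B_d$ now has enough slack, and this is the required final move; every move strictly between the first and last belongs to step~(2) and so has size less than $\maxi$.

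The point requiring the most care is checking that $B_{\mathit{temp}}$ is distinct from $B_s$ and $B_d$, so that steps~(1) and~(3) are genuine moves between two different \bins. Here $B_{\mathit{temp}}\neq B_d$ is immediate because $\mathit{slack}(B_{\mathit{temp}})\geq\maxi>\mathit{slack}(B_d)$, but $B_{\mathit{temp}}\neq B_s$ is not automatic: if after the compression phase $B_s$ happened to be the only \bin with slack at least $\maxi$, one should instead do the transfer in two moves — empty a bundle of total size $\maxi$ out of $B_d$ into $B_s$, then move an item of size $\maxi$ from $B_s$ to $B_d$ — which I would fold into the case analysis (so that, in the genuine three-move case, $B_{\mathit{temp}}$ can be chosen different from $B_s$ and the argument above applies verbatim). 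Everything else is the slack bookkeeping of the previous paragraph, which is routine once the two preliminary facts are in hand.
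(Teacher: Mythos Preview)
Your proof takes essentially the same approach as the paper --- invoke Lemma~\ref{lem:CompressCorrect} for $B_{\mathit{temp}}$, invoke Lemma~\ref{lem:BundlesCorrect} to clear space in $B_d$, and track slack through the three steps --- but organises the argument differently and more carefully in places.

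The paper splits not on $\mathit{slack}(B_d)$ but on whether $B_d$ contains an actual item strictly smaller than $\maxi$. In its ``no'' case it uses the settling bijection (comparing $B_d$ with $\varphi_i(B_d)$) to conclude $\mathit{slack}(B_d) \geq \maxi$; this is precisely your volume-comparison argument, applied at size $\maxi$ rather than at size~$1$. In its ``yes'' case the existence of an element of size at most $\maxi/2$ in $\mathit{\All}(B_d)$ is immediate from the case hypothesis, so the precondition of Lemma~\ref{lem:BundlesCorrect} is met. Your two preliminary facts accomplish the same things with the case split aligned to the algorithm's branch on $\mathit{slack}(B_d)$; either decomposition works.

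Your point about $B_{\mathit{temp}} = B_s$ is a genuine subtlety that the paper's proof does not address: it simply writes ``we first move an item of size $\maxi$ from $B_s$ to $B_{\mathit{temp}}$'' without checking these are distinct \bins. Your two-move workaround (bundles from $B_d$ into $B_s$, then the size-$\maxi$ item from $B_s$ to $B_d$) is correct for the algorithm's overall purpose --- one size-$\maxi$ item leaves $B_s$, one enters $B_d$, and all other moves are smaller, which is exactly what the downstream invariant argument uses. Note, however, that this variant does not literally satisfy the lemma's phrasing that the \emph{first} move takes a size-$\maxi$ item out of $B_s$, since in your workaround the first move is a small item leaving $B_d$. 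So in this edge case you are proving a slightly weaker statement than the lemma claims, though one that is fully adequate for its subsequent use.
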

	\begin{proof}
		By Lemma~\ref{lem:CompressCorrect}, after the compression phase, there is a \bin $B_{\mathit{temp}}$ that has slack at least $u_i$. In the transfer phase, we first move an item of size $\maxi$ from $B_s$ to $B_\mathit{temp}$. Then we attempt to create enough slack in $B_d$ to allow an item of size $\maxi$ to be moved from $B_\mathit{temp}$ to $B_d$, and there are two cases to consider:
		\begin{itemize}
			\item \textbf{$B_d$ does not contain an item of size less than $\maxi$.}\\
			Since $\varphi_i$ was initialized as a settling bijection for item size $u_{i-1}$, and no items with size greater than $u_i$ have been moved in the $i$th stage, it follows that, for each $j < i$, the \bin $\varphi_i(B_d)$ has the same number of items of size $u_j$ as the \bin $B_d$. Moreover, by the definition of deficit \bin $B_d$, the \bin $B_d$ has fewer items of size $\maxi$ than the \bin $\varphi_i(B_d)$, and by assumption, $B_d$ has no items of size less than $\maxi$. Since $B$ and $\varphi_i(B_d)$ have the same capacity, it follows that $B_d$ must already have slack at least $\maxi$, as desired.
			\item \textbf{$B_d$ contains at least one item of size less than $\maxi$.}\\
			We apply \bundleformrm{B_d}{\log_2(\maxi/2)} to obtain two bundles from $\mathit{\All}(B_d)$, and we know that, by Lemma \ref{lem:BundlesCorrect}, each of these bundles has bsum equal to $\maxi/2$ and at least one of them contains an actual item. We move all actual items from these two bundles to $B_s$, and from the fact that the total bsum of these two bundles is exactly $\maxi$, we get the following two facts: (1) moving the actual items from the two bundles to $B_s$ is possible since the first thing we did in the transfer phase was to move out an item of size $\maxi$ from $B_s$ to $B_{\mathit{temp}}$, and, (2) after moving the actual items from the two bundles to $B_s$, the amount of slack in $B_d$ is at least $\maxi$, as desired.
		\end{itemize}
		
		The final move of the transfer phase moves an item of size $\maxi$ from $B_{\mathit{temp}}$ to $B_d$, which is possible because we showed above that the amount of slack in $B_d$ is at least $\maxi$. 
		
		Finally, we confirm that all moved items, other than the first and last, have size strictly less than $\maxi$. In particular, all other moves occur in the second case above (i.e., when $B_d$ contains at least one item of size less than $\maxi$), and note that each actual item that is moved comes from a bundle whose bsum is $\maxi/2$, so the size of the item is bounded above by $\maxi/2$.
	\end{proof}

	By Lemma \ref{lem:TransferCorrect}, the result of the transfer phase in one iteration of the Action Step is that the number of items of size $\maxi$ in the surplus \bin $B_s$ decreases by exactly one, the number of items of size $\maxi$ in the deficit \bin $B_d$ increases by exactly one, and the number of items of size $\maxi$ in any other \bin is left unchanged. Thus, after finitely many iterations of the Action Step, there are no remaining surplus and deficit \bins, so Lemma \ref{lem:SurplusDeficit} implies that, for each \bin $B$ in the domain of $\varphi_i$, the number of items of size $\maxi$ is the same in $B$ and $\varphi_i(B)$ at the end of the $i$th stage (and at the start of the $(i+1)$st stage). Moreover, for each \bin $B$ in the domain of $\varphi_i$, we can conclude that the number of items of size strictly greater than $\maxi$ is the same in $B$ and $\varphi_i(B)$ at the end of the $i$th stage: this was true at the start of the $i$th stage since $\varphi_i$ was initialized to be a settling bijection for item size $u_{i-1}$, and, Lemmas \ref{lem:CompressCorrect} and \ref{lem:TransferCorrect} tell us that no items of size greater than $\maxi$ were moved during the $i$th stage. Altogether, this proves that for each \bin $B$ containing an item of size at least $u_i$, we know $\AtLeast{u_i}{B}{\configuration_{i+1}} = \AtLeast{u_i}{\varphi_i(B)}{\target}$. It follows that $\varphi_i$ is a settling bijection from $\configuration_{i+1}$ to $\target$ for size $\maxi$, which completes the proof of Invariant \ref{inv:settled} at the start of the $(i+1)$st stage, which completes the induction proof of Invariant \ref{inv:settled}. As discussed at the start of this section, this implies that our reconfiguration algorithm from Section \ref{sec:pow2alg} correctly reconfigures $\source$ to $\target$, which implies the following feasibility result.
	
	\begin{lemma}\label{pow2feasible}
		If the total slack across all \bins is at least $\largestSource$, then \AlgPowOfTwo reconfigures $\source$ to $\target$.
	\end{lemma}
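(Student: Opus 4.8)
The plan is to assemble the lemmas of Section~\ref{sec:pow2correct} into a single termination-and-legality argument, since the heart of the work is already finished: Invariant~\ref{inv:settled} has been established for every stage $i \ge 1$ under the hypothesis that the total slack across all \bins is at least $\largestSource$. The first step is to invoke the observation made at the start of Section~\ref{sec:pow2correct}: applying Invariant~\ref{inv:settled} with $i = k+1$ shows that item size $u_k$ is settled in $\configuration_{k+1}$, and since $u_k$ is the smallest item size, every \bin lies in the domain of the associated settling bijection $\varphi$, so $\AtLeast{u_k}{B}{\configuration_{k+1}} = \AtLeast{u_k}{\varphi(B)}{\target}$ forces $B = \varphi(B)$ for every \bin; hence $\configuration_{k+1} = \target$. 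Thus, if \AlgPowOfTwo{} runs to completion, it ends in $\target$.

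Next I would argue termination, so that the algorithm actually outputs a finite sequence of moves. There are only $k$ stages. Within a stage, the compression phase makes finitely many \Ms{} calls: by Lemma~\ref{lem:MergeslackCorrect}, each call strictly reduces the number of slack items of some size $2^q$ with $q < \log_2(\maxi)$ and never increases the number of slack items of any size, so this finite multiset strictly decreases. The transfer phase is a bounded number of moves by construction. The Action Step loop terminates because, as noted just after Lemma~\ref{lem:TransferCorrect}, each iteration decreases by exactly one the number of items of size $\maxi$ in $B_s$, increases by exactly one the number in $B_d$, and leaves every other \bin unchanged; combined with Lemma~\ref{lem:SurplusDeficit}, this drives the number of misplaced items of size $\maxi$ monotonically to zero.

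The remaining point — and the one I expect to require the most care — is legality: every intermediate configuration must have every \bin of volume at most $\capacity$. This is not a new computation but a bookkeeping of where each move sends its item. Every move is one of three kinds: a move inside \Ms{}, a move in the transfer phase, or a direct transfer of an item of size $\maxi$ when $B_d$ already has slack at least $\maxi$. For \Ms{}, Lemma~\ref{lem:MergeslackCorrect} shows the moved actual items have total bsum $2^q$ and are pushed into $B_2$, which has a slack item of size $2^q$, hence enough slack. In the transfer phase: step (1) removes an item of size $\maxi$ from $B_s$, creating slack there of at least $\maxi$; step (2) then moves actual items of total bsum exactly $\maxi$ from two bundles of $B_d$ into $B_s$, which now fits; and step (3) moves an item of size $\maxi$ from $B_{\mathit{temp}}$ into $B_d$, which by Lemma~\ref{lem:TransferCorrect} has slack at least $\maxi$ at that moment — and $B_{\mathit{temp}}$ itself is guaranteed by Lemma~\ref{lem:CompressCorrect} to have slack at least $\maxi$ before the transfer phase begins. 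The direct-transfer case is immediate. So no move ever overfills a \bin.

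Putting the three pieces together — the sequence is finite, every configuration along it is legal, and the final configuration is $\target$ — gives a reconfiguration sequence for capacity $\capacity$ from $\source$ to $\target$, which is precisely the claim. The only genuinely delicate spot beyond citing the earlier lemmas is respecting the order of the three moves in the transfer phase when checking legality (step (2) is safe only because step (1) has already happened); everything else reduces to a direct appeal to Lemmas~\ref{lem:mappingproperties}--\ref{lem:TransferCorrect}.
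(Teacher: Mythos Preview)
Your proposal is correct and takes essentially the same route as the paper: the paper's argument for Lemma~\ref{pow2feasible} is exactly the assembly of Invariant~\ref{inv:settled} with Lemmas~\ref{lem:mappingproperties}--\ref{lem:TransferCorrect} that you outline, and your explicit legality and termination bookkeeping matches what the paper establishes inside those lemma proofs. One small inaccuracy to fix: Lemma~\ref{lem:MergeslackCorrect} does \emph{not} guarantee that a \Ms{} call ``never increases the number of slack items of any size'' (the new $2^{q+1}$ slack item in $B_1$ may raise that count, and the slack decomposition of $B_2$ can change arbitrarily below $2^q$), so your potential-function sentence as written does not quite work; but you can simply cite Lemma~\ref{lem:CompressCorrect} for termination of the compression phase, as you do later anyway.
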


	Using Lemma \ref{pow2feasible}, we can now complete the characterization of feasible instances of \textsc{Repacking} when the item sizes and capacity are powers of 2.
	\begin{theorem}\label{thm:pow2correct}
		Suppose that all items in the underlying set $U$ of the configuration are powers of 2, and assume that \bin capacity $\kappa$ is also a power of 2. 
		Let $\largestSource$ denote the size of the largest item that must be moved.
		It is possible to reconfigure $\source$ to $\target$ if and only if the total slack across all \bins is at least $\largestSource$. Moreover, our algorithm \AlgPowOfTwo solves all feasible instances.
	\end{theorem}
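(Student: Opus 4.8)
The plan is to combine the feasibility direction already established in Lemma~\ref{pow2feasible} with a matching necessity argument. Lemma~\ref{pow2feasible} gives the ``if'' direction of the characterization and, at the same time, shows that \AlgPowOfTwo outputs a valid reconfiguration sequence whenever the total slack is at least $\largestSource$; so the ``moreover'' claim will follow immediately once we know that \emph{every} feasible instance has total slack at least $\largestSource$. Thus the only remaining work is the ``only if'' direction: if $\source$ can be reconfigured to $\target$ then the total slack across all \bins is at least $\largestSource$. We may assume some item size is unsettled in $\source$ — otherwise $\source = \target$ (by the paragraph following Invariant~\ref{inv:settled}, since the smallest item size being settled forces the two configurations to coincide) and both directions are immediate — and we let $a = \largestSource$ be the largest unsettled size.

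First I would isolate the following key claim: every reconfiguration sequence from $\source$ to $\target$ contains a step that moves an item of size at least $a$. I would prove the contrapositive. Suppose some reconfiguration sequence moves only items of size strictly less than $a$. For a \bin $B$, let $\pi(B)$ denote the sub-multiset of those items of $B$ whose size is at least $a$. A step that moves an item $u$ of size less than $a$ from a donor \bin $B_d$ to a recipient \bin $B_r$ replaces the pair of old \bins $B_d, B_r$ by the pair of new \bins $B_d \setminus \multiset{u}$ and $B_r \cup \multiset{u}$, and leaves $\pi$ unchanged on each of the four \bins involved; hence the multiset $\multiset{\pi(B) : B \in \configuration}$ is preserved by every step and so coincides for $\source$ and $\target$. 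Equality of these multisets means that, for every nonempty pattern of items of size at least $a$, $\source$ and $\target$ contain the same number of \bins realizing it, which supplies a bijection $\varphi$ from the \bins of $\source$ that contain an item of size at least $a$ onto the \bins of $\target$ that contain an item of size at least $a$, with $\AtLeast{a}{B}{\source} = \AtLeast{a}{\varphi(B)}{\target}$ for each such \bin $B$. This $\varphi$ is precisely a settling bijection for size $a$, so $a$ is settled in $\source$, contradicting the choice of $a$; the claim follows.

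Given the claim, I would fix any reconfiguration sequence from $\source$ to $\target$ and examine its first step that moves an item of size $b \geq a$, say from a donor \bin $B_d$ to a recipient \bin $B_r$, and let $\configuration$ be the configuration immediately before this step. Because the configuration reached after the step is legal, $B_r$ must have had slack at least $b \geq a$ in $\configuration$, so the total slack over all \bins of $\configuration$ is at least $a$. A reconfiguration step neither creates nor destroys items, and every \bin keeps the fixed capacity $\capacity$, so the total slack is the same in all configurations of the sequence; in particular it equals the total slack of $\source$, which is therefore at least $a = \largestSource$. This finishes the ``only if'' direction, hence the ``if and only if'' characterization; and since \AlgPowOfTwo succeeds on exactly the instances whose total slack is at least $\largestSource$ (Lemma~\ref{pow2feasible}), which we have now shown are exactly the feasible ones, \AlgPowOfTwo solves all feasible instances.

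The main obstacle is the key claim — more precisely, the step that turns ``no item of size at least $a$ is ever moved'' into the existence of a settling bijection for size $a$ in $\source$. The crux is to notice that restricting a \bin to its items of size at least $a$ is untouched by any move of a smaller item, making the multiset of these restrictions a reconfiguration invariant, and then to recover from the equality of this invariant (for $\source$ and $\target$) an explicit \bin-to-\bin bijection matching identical large-item profiles, which is exactly what ``settled'' demands. The remaining ingredients (invariance of total slack, and the observation that the recipient of the first large move has slack at least $b$) are routine.
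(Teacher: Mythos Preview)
Your proof is correct and follows essentially the same approach as the paper: both directions hinge on the observation that if no item of size at least $a=\largestSource$ is ever moved, then the large-item profile of each \bin is invariant along the sequence, yielding a settling bijection for size $a$ and contradicting the definition of $\largestSource$. The paper phrases the necessity direction contrapositively (total slack $<\largestSource$ forces no large move, hence $a$ settled), while you argue directly (some step must move a large item, and the recipient's slack at that step---hence the invariant total slack---is at least $a$); the underlying argument is the same.
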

	\begin{proof} 
		By Lemma \ref{pow2feasible}, \AlgPowOfTwo reconfigures $\source$ to $\target$ if the total slack across all \bins is at least $\largestSource$.
		
		Conversely, suppose that the total slack across all \bins is less than $\largestSource$, and consider any item with size $u = \largestSource$. To obtain a contradiction, assume that it is possible to reconfigure $\source$ to $\target$. Due to the limited slack, no item with size at least $u$ can be moved in any reconfiguration step, so, there must be a bijection $\varphi$ between the multiset of \bins containing an item of size at least $u$ in $\source$ and the multiset of \bins containing an item of size at least $u$ in $\target$ such that $\AtLeast{u}{B}{\source} = \AtLeast{u}{\varphi(B)}{\target}$ for each \bin $B$ in $\source$ that contains an item of size at least $u$. By definition, this means that item size $u$ is settled in $\source$, which contradicts the definition of $\largestSource$.
	\end{proof}

	
	\section{Special Case: Reconfiguration within Partitions}\label{sect:specialStructure}
	
	In order to efficiently parallelize the process of repacking, we wish to determine whether the source and target configurations can be each split into smaller pieces, each containing at most $\beta$ \bins for some fixed constant $\beta$, such that each piece can be reconfigured independently. In this section, we prove that determining the existence of such splits can be formulated as a variant of a transshipment problem~\cite{Cook1997,Korte2018}, that is, a network flow problem that in turn can be solved using integer linear programming (ILP) in linear time~\cite{Eisenbrand2003}. To do that, we represent an instance of repacking as a directed graph for a problem that we call $\partitionILP$ (defined below) by creating a vertex for each way of assigning the items in each piece to a bounded number of \bins, and adding edges between vertices that capture the movement of a single item from one \bin to another. We will show that if the reconfiguration is possible, we can decompose the total flow into a set of paths, {\em path flows}, such that each corresponds to the reconfiguration of a single piece.
	
	Below, we define the repacking problem \textsc{$\partitionbound$-Repacking-$\capacity$} (Section~\ref{sec-partition-problem}) and  $\partitionILP$ (Section~\ref{sec-partition-ILP}). We solve \textsc{$\partitionbound$-Repacking-$\capacity$} (Theorem~\ref{thm-partition}) by proving that it is equivalent to $\partitionILP$ (Lemma~\ref{lem:partition_reduction}). The proof of Lemma~\ref{lem:partition_reduction} is provided in Sections \ref{sec:appendix_partition_forward_new} and \ref{sec:appendix_partition_backward_new}. The proof of Theorem~\ref{thm-partition} is provided in Section \ref{sec:appendix_thm}.
	
	\begin{lemma}\label{lem:partition_reduction}
		$\partitionILP$ has a feasible solution if and only if the corresponding instance $(\source, \target)$ for \textsc{$\partitionbound$-Repacking-$\capacity$} is reconfigurable.
	\end{lemma}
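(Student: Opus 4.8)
The plan is to prove the biconditional by exhibiting, in each direction, an explicit translation between a feasible solution of $\partitionILP$ and a witness that the \brepackingk{} instance $(\source,\target)$ is reconfigurable. Recall that such a witness consists of a partition $\source = \subconfig_1 \uplus \cdots \uplus \subconfig_t$ and a partition $\target = \subconfigtwo_1 \uplus \cdots \uplus \subconfigtwo_t$, each part having at most $\partitionbound$ \bins, together with, for every $\ell$, a reconfiguration sequence $\areconfsequence_\ell$ for capacity $\capacity$ from $\subconfig_\ell$ to $\subconfigtwo_\ell$ that moves items only among the \bins of $\subconfig_\ell$; and that a feasible solution of $\partitionILP$ supplies a demand assignment $\hat{\demand}$ (nonzero only on input nodes of $\graph$, where it is nonpositive, and on output nodes, where it is nonnegative) together with a nonnegative integer flow $\flow$ in $\graph$ that routes $\hat{\demand}$. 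Most of the genuine design work lives in the definitions of $\graph$ and $\partitionILP$ (Section~\ref{sec-partition-ILP}); the proof of this lemma is then largely a matter of verifying that those definitions do what they should. The two implications are handled in Sections~\ref{sec:appendix_partition_forward_new} and~\ref{sec:appendix_partition_backward_new}.

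For the direction ``$(\source,\target)$ reconfigurable $\Rightarrow$ $\partitionILP$ feasible,'' I would start from a witness as above. Because a reconfiguration step moves a single item and never changes the number of \bins, every configuration occurring in $\areconfsequence_\ell$ is a legal configuration on at most $\partitionbound$ \bins, hence a vertex of $\graph$, and each consecutive pair of configurations in $\areconfsequence_\ell$ is joined by the edge of $\graph$ encoding that single-item move; thus $\areconfsequence_\ell$ is a walk in $\graph$ from the input node $\subconfig_\ell$ to the output node $\subconfigtwo_\ell$. Let $\flow$ be the sum, over all $\ell$, of the edge-multiplicity vectors of these walks, and let $\hat{\demand}$ record (with the sign convention of $\partitionILP$) how many of the parts $\subconfig_\ell$ coincide with each input node and how many of the parts $\subconfigtwo_\ell$ coincide with each output node. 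Then $(\hat{\demand},\flow)$ is feasible for $\partitionILP$: flow conservation holds at every node because a single walk from a node $s$ to a node $t$ contributes net out-flow $+1$ at $s$, $-1$ at $t$, and $0$ at every other node, and summing these contributions over $\ell$ gives exactly $\hat{\demand}$; the constraints asserting that the chosen input sub-configurations reassemble $\source$ and the chosen output sub-configurations reassemble $\target$ are exactly the statements $\subconfig_1 \uplus \cdots \uplus \subconfig_t = \source$ and $\subconfigtwo_1 \uplus \cdots \uplus \subconfigtwo_t = \target$; and integrality and nonnegativity are immediate.

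For the direction ``$\partitionILP$ feasible $\Rightarrow$ $(\source,\target)$ reconfigurable,'' I would take a feasible integer solution $(\hat{\demand},\flow)$ and apply the standard flow-decomposition theorem to write $\flow$ as a nonnegative integer combination of path flows and cycle flows in $\graph$. The cycle flows can be discarded, since subtracting a cycle from $\flow$ changes the net flow at no node and hence still routes $\hat{\demand}$; and because $\hat{\demand}$ is negative only at input nodes and positive only at output nodes, every surviving path flow $\pathflow$ runs from an input node $\subconfig$ to an output node $\subconfigtwo$. Reading off the edges along $\pathflow$ --- which is what the procedure $\buildseq$ does --- yields a reconfiguration sequence for capacity $\capacity$ from $\subconfig$ to $\subconfigtwo$ that moves items only among the (at most $\partitionbound$) \bins of $\subconfig$, because every intermediate vertex of $\graph$ is itself a legal sub-configuration on those \bins. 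Taking the start sub-configuration of each path flow with multiplicity equal to its flow value yields, via the $\source$-reassembly constraint evaluated against the demands induced by the decomposition, a partition of $\source$ into parts of at most $\partitionbound$ \bins; the end sub-configurations likewise partition $\target$; and the path flows pair these parts up and provide the required per-part reconfiguration sequences. This is precisely a reconfigurability witness for $(\source,\target)$.

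I expect the main obstacle to be the bookkeeping at these two interfaces rather than any single deep argument. On the $\partitionILP$-to-reconfiguration side, the delicate points are confirming that cycle flows may be dropped without breaking the demand constraints, that every surviving path flow genuinely begins at an input node and ends at an output node (so that it reconfigures an honest source part into an honest target part), and that degenerate parts that already equal their target --- needing no item moves --- are accommodated by the fixed edges that link input and output nodes to the intermediate layer of $\graph$. On the reconfiguration-to-$\partitionILP$ side, the delicate point is checking that the reassembly and demand constraints of $\partitionILP$ assert neither more nor less than ``the chosen parts form a partition of $\source$ (respectively $\target$).'' A secondary, reassuring, point is that legality for capacity $\capacity$ requires no separate argument: it is enforced purely by the choice of vertex set of $\graph$, which contains only legal sub-configurations.
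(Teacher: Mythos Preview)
Your proposal is correct and follows essentially the same approach as the paper: for the ``reconfigurable $\Rightarrow$ feasible'' direction both you and the paper encode each per-part reconfiguration sequence as a walk in $\graph$ and sum the resulting edge vectors to obtain $\flow$ while setting $\hat{\demand}$ from portion multiplicities; for the ``feasible $\Rightarrow$ reconfigurable'' direction both decompose $\hat{\flow}$ into $(\inputnodes,\outputnodes)$-path flows (the paper via its Lemma~\ref{lem:flow_decomp}, you via the standard path-plus-cycle decomposition with cycles discarded) and read each path as a reconfiguration sequence for one part, with Constraints~\ref{con:four} and~\ref{con:five} guaranteeing that the path endpoints reassemble $\source$ and $\target$. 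The only cosmetic difference is that the paper works with unit path flows and explicit algorithms (Algorithms~\ref{alg:create_partition_sequence} and~\ref{alg:reconstruct_sequence}) where you work with path flows carrying multiplicity; the bookkeeping you flag as delicate is precisely what the paper's Lemmas~\ref{lem:number_of_unit_path_flows} and~\ref{lem:algo_flow_bin_mult} handle.
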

	
	\begin{restatable}{theorem}{thmpartition}\label{thm-partition}
		For arbitrary positive integer constants $\beta,\kappa$, each instance of \textsc{$\partitionbound$-Repacking-$\capacity$} can be solved in time that is linear in the length of the input's binary encoding.
	\end{restatable}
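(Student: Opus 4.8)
The plan is to use Lemma~\ref{lem:partition_reduction} to replace the decision problem by a test of feasibility of the integer program \partitionILP, so that the algorithm consists of just two parts: (1)~construct, from the input $(\source,\target)$, the flow network and the integer program \partitionILP\ associated with it, and (2)~decide whether that integer program is feasible. Everything rests on the fact that, with $\beta$ and $\capacity$ held fixed, \partitionILP\ has a number of variables and a number of constraints that are bounded by a constant depending only on $\beta$ and $\capacity$; the only part of \partitionILP\ that grows with the input is the vector of right-hand sides, and those numbers, though possibly large, are written in binary.

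First I would bound the size of the network. Every item is a positive integer at most $\capacity$, so there are at most $\capacity$ distinct item sizes; every \bin has volume at most $\capacity$ and hence contains at most $\capacity$ items, so there are only constantly many distinct \bins; and every piece is a multiset of at most $\beta$ \bins, so there are only constantly many distinct pieces --- all of these counts bounded in terms of $\beta$ and $\capacity$ alone. A single reconfiguration step inside a piece moves one item between two of its \bins, so the number of directed edges joining the corresponding vertices is again a constant. Thus the network $\graph$ has $|\vertexset| = O(1)$ and $|\edgeset| = O(1)$, and \partitionILP\ has $O(1)$ flow variables together with $O(1)$ flow-conservation and boundary constraints all of whose coefficients are $O(1)$; the only instance-dependent data are the right-hand sides of the constraints that force the selected pieces to partition $\source$ and $\target$, which record, for each \bin, how many times it occurs in $\source$ and in $\target$.

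Next I would check that \partitionILP\ is assembled in linear time. The network $\graph$ and the fixed part of the constraint matrix depend only on $\beta$ and $\capacity$ and so take $O(1)$ time to write down. The right-hand side vector is obtained by scanning $\source$ and $\target$ once and counting, for each of the constantly many \bins, its multiplicity; these counts in binary occupy space linear in the length of the input, and the scan is linear as well. By Lemma~\ref{lem:partition_reduction}, the \partitionILP\ instance so constructed is feasible if and only if $(\source,\target)$ is a yes-instance of \brepackingk, so the remaining task is to test feasibility of the integer program.

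Finally I would invoke the fact that an integer linear program of fixed dimension can be solved in time linear in its binary encoding length~\cite{Eisenbrand2003}: here the dimension is the constant $|\edgeset|$, the number of constraints is constant, and the nonzero coefficients are $O(1)$, so the encoding length of \partitionILP\ is dominated by the encoding length of its right-hand side vector and is therefore linear in the length of the input. Composing the linear-time construction with the linear-time feasibility test, and appealing once more to Lemma~\ref{lem:partition_reduction}, yields a linear-time decision procedure for \brepackingk. I expect the main obstacle to be the first step: one has to verify carefully that fixing $\capacity$ (together with item sizes being positive integers) really does cap the number of distinct \bins, the number of distinct pieces with at most $\beta$ \bins, and the number of single-item moves among them by a constant independent of $|\source|$ --- this is exactly what keeps the dimension of \partitionILP\ fixed and makes the fixed-dimension ILP algorithm, whose running time is linear in the \emph{binary} (not the unary) size of the right-hand sides, applicable here.
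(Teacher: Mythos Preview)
Your proposal is correct and follows essentially the same route as the paper: reduce via Lemma~\ref{lem:partition_reduction} to feasibility of \partitionILP, observe that with $\beta,\kappa$ fixed the number of subconfigurations, vertices, edges, variables, and constraints are all $O(1)$ while only the right-hand sides (the \bin multiplicities in $\source$ and $\target$) depend on the instance, and then invoke Eisenbrand's fixed-dimension ILP algorithm~\cite{Eisenbrand2003}. One small slip: the dimension of \partitionILP\ is $|\edgeset|+2|\bsubs|$ rather than $|\edgeset|$, since the demand values $\demand(\innode_\subconfig),\demand(\outnode_\subconfig)$ are also variables---but this is still a constant, so the argument goes through unchanged.
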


	\subsection{Defining repacking with partitions}\label{sec-partition-problem}
	
	To decompose our source and target (and hence, all intermediate) configurations into pieces, we split the underlying set $U$ into ``underlying subsets'' and allocate a specific number of \bins to each ``underlying subset''. Then, each configuration of $U$ can be viewed as the disjoint union of ``pieces'', where each piece is a configuration of an ``underlying subset''. By restricting moves to occur within pieces, each piece can be reconfigured independently. Below, we formalize these definitions, and extend notations of adjacency and reconfiguration sequences to capture the idea of repacking with partitions.
	
	To formalize ``underlying subsets'', we define a \emph{partition} $\partition = \sequence{\partitionpart_1, \partitionpart_2, \ldots, \partitionpart_{|\partition|}}$ of underlying set $\universe$ to be a sequence of parts, where a {\em part} is a pair $\partitionpart_i = \left(\universe_i, \binnumber_i\right)$, $\universe_i$ is a multiset of items, $\binnumber_i$ is a positive integer number of \bins, and the multiset union of all $\universe_i$'s is equal to $\universe$; it is  a {\em $\boundedpartition$} if $\binnumber_i \leq \partitionbound$ for all values of $i$.
	
	Given a partition $\partition$, we define an \emph{assignment of $\partition$ for capacity $\capacity$}
	as a sequence of multisets of \bins $\assignment = \sequence{\assignmentpart_1, \assignmentpart_2, \ldots, \assignmentpart_{|\assignment|}}$ with $|\assignment| = |\partition|$, such that for each part $\partitionpart_i = \left(\universe_i, \binnumber_i\right)$ in $\partition$, 
	the multiset union of items in $\assignmentpart_i$ is equal to $\universe_i$, 
	the number of \bins in $\assignmentpart_i$ is $\binnumber_i$, and 
	each \bin in $\assignmentpart_i$ has volume at most $\capacity$. The multiset $\assignmentpart_i$ in $\assignment$ is called the $i$th {\em portion}, sometimes written as $\portion{\assignment}{i}$. An assignment $\assignment$ is \emph{consistent with a configuration $\configuration$} if the disjoint multiset union of all $\assignmentpart_i$'s is equal to $\configuration$; in this case, the underlying set of $\configuration$ will be the union of the $\universe_i$'s. 
	
	In a reconfiguration sequence that ensures that each move occurs within a part, each configuration can be expressed as an assignment, where two consecutive assignments differ only in a single portion. Two assignments $\assignment$ and $\assignment'$ of a partition $\partition$ for capacity $\capacity$ are {\em adjacent} if there exists exactly one index $j$ such that $\portion{\assignment}{j}$ and $\portion{\assignment'}{j}$ are adjacent (viewed as configurations with underlying set $\universe_j$) and, for all $i \ne j$, 
	$\portion{\assignment}{i} = \portion{\assignment'}{i}$. 
	Accordingly, a
	{\em $\partition$-conforming reconfiguration sequence} for capacity $\capacity$ from a source configuration $\source$ to a target configuration $\target$, where $\partition$ is a partition for the underlying set of $\source$ (and $\target$), is a sequence of assignments of $\partition$ for capacity $\capacity$
	such that: the first and last assignments are consistent with $\source$ and $\target$, respectively, and, for each pair of consecutive assignments in the sequence, the two assignments are adjacent.

	For any positive integer constants $\partitionbound, \kappa$, we define the following problem as  \textsc{$\partitionbound$-Repacking-$\capacity$}, with instance $(\source, \target)$:
	
	\begin{description}
		\item[Input:] Legal source and target configurations $\source$ and $\target$ for capacity $\capacity$.
		\item[Question:] Is there a $\boundedpartition$ $\partition$ of the underlying set of $\source$ such that there exists a $\partition$-conforming reconfiguration sequence for capacity $\capacity$ from $\source$ to $\target$?
	\end{description}
	
	\begin{example}\label{example:partition}
		Let $\capacity = 9$ and $\beta = 2$. Suppose that the input instance to the 2-\textsc{Repacking}-9 problem is $(\source,\target)$ with $\source = \multiset{\multiset{1,2,5},\multiset{2,2,3},\multiset{1,1,3,3},\multiset{4,4}}$ and $\target = \multiset{\multiset{1,5},\multiset{2,2,2,3},\multiset{1,3,3},\multiset{1,4,4}}$. To show that this is a yes-instance, we first define a 2-bounded partition $\partition = ((\multiset{1,2,2,2,3,5},2),(\multiset{1,1,3,3,4,4},2))$. To create an assignment $\assignment_{\mathcal{S}}$ of $\partition$ that is consistent with $\source$, we let $\assignment_{\mathcal{S}} = (S_1,S_2)$ where the first portion $S_1$ is the multiset consisting of the first and second \bins of $\source$, and the second portion $S_2$ is the multiset consisting of the third and fourth \bins of $\source$. To create an assignment of $\partition$ that is consistent with $\target$, we let $\assignment_{\mathcal{T}} = (T_1,T_2)$ where the first portion $T_1$ is the multiset consisting of the first and second \bins of $\target$, and the second portion $T_2$ is the multiset consisting of the third and fourth \bins of $\target$. The source and target configurations, along with these two assignments, are illustrated in Figure \ref{fig:Assignments}. Finally, we see that there is a $\partition$-conforming reconfiguration sequence from $\source$ to $\target$ consisting of 3 assignments of $\partition$: $(\assignment_{\mathcal{S}},(T_1,S_2),\assignment_{\mathcal{T}})$, where the first move is the item of size 2 in the first \bin of $S_1$ to the second \bin of $S_1$, and the second move is an item of size 1 from the first \bin of $S_2$ to the second \bin of $S_2$.
	\end{example}
	\begin{figure}[h!]
		\centering
		\includegraphics[scale=0.4]{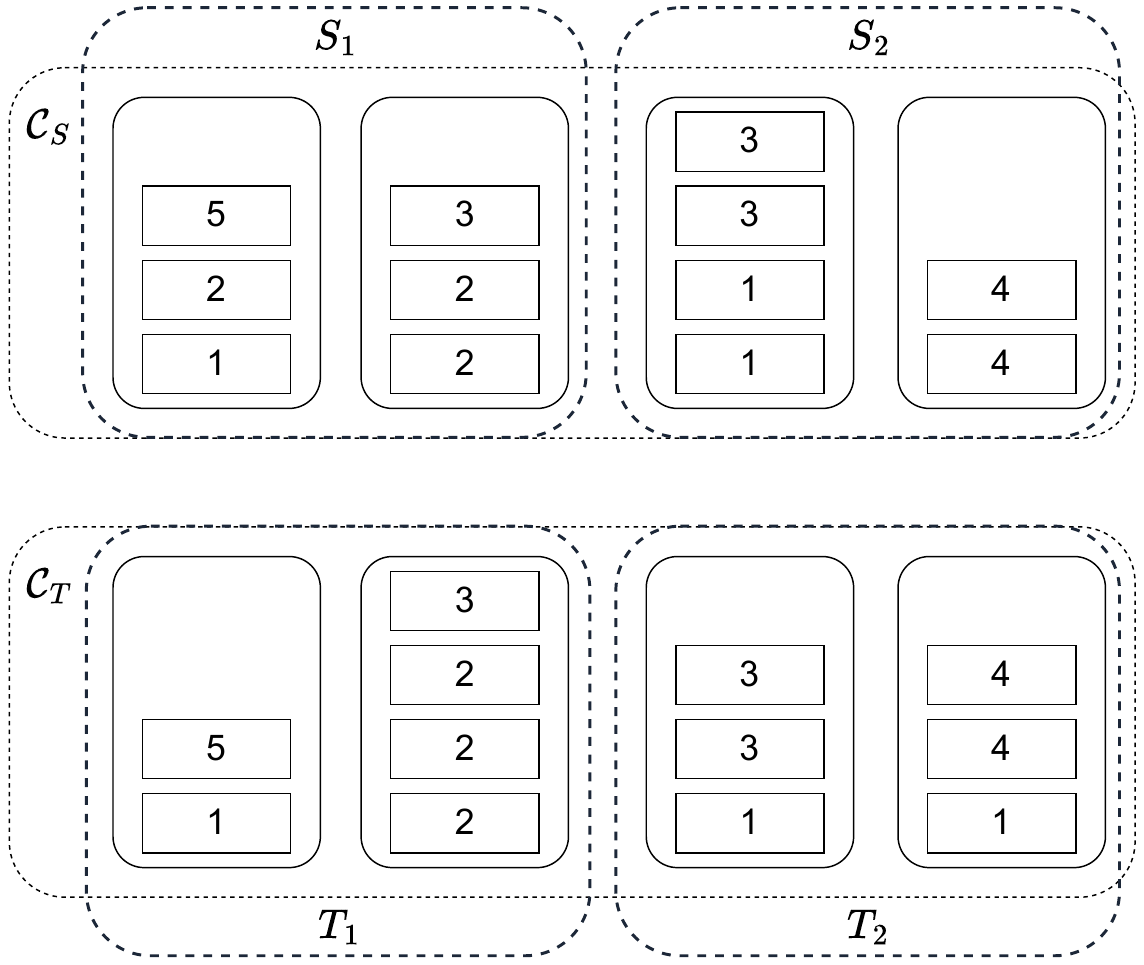}
		\caption{An illustration of $\source$ and $\target$ in Example \ref{example:partition}. The portions $S_1,S_2$ of $\assignment_{\mathcal{S}}$ and the portions $T_1,T_2$ of $\assignment_{\mathcal{T}}$ are circled with a dashed line and labeled.}
		\label{fig:Assignments}
	\end{figure}
	
	\subsection{Defining \texorpdfstring{$\partitionILP$}{Partition ILP}}\label{sec-partition-ILP}
	
	A transshipment problem is a situation in which there are multiple ``origins" that supply a certain good, multiple ``destinations" to which the good must be delivered, and multiple ``intermediate" locations that can be used to transfer the good along the way (and potentially switching from one form of transportation to another). The locations also have numeric ``demand" values: the demand value at an origin represents how much of the good needs to be shipped out from that location (usually a non-positive value to represent the desired net decrease in the amount of good at that location), the demand value at a destination represents how much of the good must be delivered to that location (usually a non-negative value to represent the desired net increase in the amount of good at that location), and the intermediate locations each have demand equal to zero (as they are considered only temporary stops for the goods). Given a transshipment problem, the usual goal is to determine a transportation plan that optimizes the cost and/or time of delivering the good in a way that satisfies all the demands. In order to represent our reconfiguration problem as a directed graph, we view it as a transshipment problem where instead of transporting a good from origin nodes to destination nodes through intermediate nodes, the network represents the reconfiguration of a packing. Since our goal is to determine feasibility, not cost, our network does not have edge weights (which are used in transshipment problems to represent the time or cost incurred by each transportation segment) and we do not set an objective function to optimize. The demands at nodes are set in such a way to ensure that our network represents a reconfiguration problem that uses exactly the \bins in the source configuration $\source$ and target configuration $\target$: the demands of the origin nodes induce an assignment $\sourceassignment$ that is consistent with the source configuration, and the demands of the destination nodes induce an assignment $\targetassignment$ that is consistent with the target configuration.
	
	To determine whether it is possible to partition the source and target configurations, we try all possibilities. A configuration with underlying set $\universe_i \subseteq U$ is a {\em subconfiguration} of a configuration $\configuration$ with underlying set $\universe$, namely a multiset of \bins that is formed by deleting zero or more items and removing zero or more empty \bins from $\configuration$. Two subconfigurations $\subconfig,\subconfigtwo$ are \emph{adjacent} if it is possible to form $\subconfigtwo$ from $\subconfig$ by the move of a single item.
	Given a underlying set $\universe$ and a capacity $\capacity$, we define $\boundedsubs{\universe}{\capacity}$ to be the set of non-empty subconfigurations using subsets of items in $\universe$ and at most $\binnumber$ \bins of capacity $\capacity$ (written as $\bsubs$ when $\universe$ and $\capacity$ are clear from context). 
	
	To form our network, we create three nodes for each subconfiguration $\subconfig$ in $\bsubs$: an origin node $\innode_{\subconfig}$ with non-positive demand, an intermediate node $\internode_{\subconfig}$ with zero demand, and a destination node $\outnode_{\subconfig}$ with non-negative demand.
	Using $\inputnodes$,   $\internodes$, and $\outputnodes$ to denote the sets of origin, intermediate, and destination nodes, respectively, the graph $\graph$ is defined as
	$\vertexset = \inputnodes \cup \internodes \cup \outputnodes$ and directed edges 
	$\edgeset = \{\innode_\subconfig \internode_\subconfig: \subconfig \in \bsubs \} 
	\cup \{\internode_{\subconfig} \internode_{\subconfigtwo}: \subconfig, \subconfigtwo \in \bsubs \text{, $\subconfig$ and $\subconfigtwo$ are adjacent} \}
	\cup \{\internode_\subconfig \outnode_\subconfig: \subconfig \in \bsubs \}$.
	
	Next, we formally define $\partitionILP$ and then give an intuitive interpretation of its constraints. In the formal definition below, $\demand: \vertexset \rightarrow \mathbb{Z}$ and $\flow: \edgeset \rightarrow \mathbb{R}_{\geq 0}$ are functions with $\flow$ extended to arbitrary subsets $\edgesetgeneric$ of $\edgeset$ by defining $\flow(\edgesetgeneric) = \sum_{\euv \in \edgesetgeneric} \flow(\euv)$. For any vertex $v \in V(G)$, we use $\inedges(v)$ and $\outedges(v)$ to denote the sets of in-edges to and out-edges from $v$, respectively. We use $\binmult{B}{\configuration}$ to denote the number of \bins that are equal to $B$ in the configuration $\configuration$, namely, the {\em multiplicity} of $B$ in $\configuration$,
	and denote as $\binmult{\subconfig}{\assignment}$ the number of elements of the assignment $\assignment$ that are equal to the subconfiguration $\subconfig$.
	We use $\bintypesnoparameters$ to denote the set of all possible \bins formed of items taken from $\universe$ such that the volume of the \bin is at most $\capacity$.

	\begin{definition}[\partitionILP]\label{def-part}
		\begin{align}
			\nonumber \text{minimize} ~ 0^T \begin{bmatrix} \flow \\ \demand \end{bmatrix} & ~ \\
			- \flow(\outedges(\innode_\subconfig)) - \demand(\innode_\subconfig) &= 0 & ~ \forall \subconfig \in \bsubs \label{con:one} \\
			\flow(\inedges(\internode_{\subconfig})) - \flow(\outedges(\internode_{\subconfig})) &= 0 & ~ \forall \subconfig \in \bsubs \label{con:two} \\
			\flow(\inedges(\outnode_\subconfig)) - \demand(\outnode_\subconfig) &= 0 & ~ \forall \subconfig \in \bsubs \label{con:three} \\
			\sum_{\subconfig \in \bsubs} \binmult{B}{\subconfig} \cdot \demand(\innode_\subconfig) &= -\binmult{B}{\source} & ~ \forall B \in \bintypesnoparameters \label{con:four} \\
			\sum_{\subconfig \in \bsubs} \binmult{B}{\subconfig} \cdot \demand(\outnode_\subconfig) &= \binmult{B}{\target} & ~ \forall B \in \bintypesnoparameters \label{con:five} \\
			\flow(uv) &\in \mathbb{Z}_{\geq 0} & ~ \forall uv \in \edgeset \label{con:six} \\
			\demand(\innode_\subconfig) \in \mathbb{Z}_{\leq 0}, ~
			\demand(\outnode_\subconfig) &\in \mathbb{Z}_{\geq 0} & ~ \forall \subconfig \in \bsubs \label{con:seven} 
		\end{align}
	\end{definition}
	
	We designed $\partitionILP$ in such a way that, intuitively, given a solution $\begin{bmatrix} \flow & \demand \end{bmatrix}$, we can think of $\flow$ as the flow vector and $\demand$ as the demand vector. 
	We wish to ensure that the total flow out of a particular node is equal to the demands for the node (Constraints \ref{con:one}, \ref{con:two}, \ref{con:three}), and that origin, intermediate, and destination nodes have demands in the appropriate ranges (Constraint \ref{con:seven}).
	We want each unit of flow to correspond to a single reconfiguration step, so we require that a feasible flow be non-negative and integral (Constraint \ref{con:six}).
	Constraints \ref{con:four} and \ref{con:five} relate the assigned demands for each node in $X$ and $Z$, respectively, to the multiplicity of \bins in the source and target configurations, respectively. In particular, Constraint \ref{con:four} ensures that the demands $\demand$ from a feasible solution $\begin{bmatrix} \flow & \demand \end{bmatrix}$ can be directly related to a $\boundedpartition$ $\partition$ of the underlying set $\universe$ and an assignment $\sourceassignment$ of $\partition$ that is consistent with $\source$: each unit of demand on each $\innode_\subconfig$ corresponds to a portion in $\sourceassignment$ that is equivalent to $\subconfig$, and, corresponds to a part in $\partition$ with the same items and number of \bins as $\subconfig$. In a similar way, Constraint \ref{con:five} ensures that the demands $\demand$ can be directly related to an assignment $\targetassignment$ that is consistent with $\target$.
	
	\subsection{Setup}~\label{sec:appendix_defs}
	We use the definitions of $\partitionILP$ and the directed graph $\graph$ that were provided in Section \ref{sec-partition-ILP}. From the definition of $\graph$, we see that it is a directed graph with source vertices $\inputnodes$ and sink vertices $\outputnodes$, where $\inputnodes \cap \outputnodes = \emptyset$. A directed path from any $\innode \in \inputnodes$ to any $\outnode \in \outputnodes$ is called an \emph{$(\inputnodes,\outputnodes)$-path}. 
	
	A \emph{flow} on $\graph$ is a function $\flow: \edgeset \rightarrow \mathbb{R}_{\geq 0}$ satisfying the \emph{conservation constraint}: $\flow(\inedges(v)) - \flow(\outedges(v)) = 0$ for all $v \in \vertexset \setminus (\inputnodes \cup \outputnodes)$ (where $\flow$ is applied to arbitrary subsets $\edgesetgeneric$ of $\edgeset$ by $\flow(\edgesetgeneric) = \sum_{\euv \in \edgesetgeneric} \flow(\euv)$). The \emph{value of $\flow$} is defined as $\abs{\flow} = \sum_{\innode \in \inputnodes} \flow(\outedges(\innode))$. A flow is said to be \emph{integral} if $\flow(uv) \in \mathbb{Z}_{\geq 0}$ for all $uv \in \edgeset$. A flow is a \emph{path flow} if it is positive for all edges $uv$ on a $(\inputnodes,\outputnodes)$-path, but zero on all other edges, and we refer to the underlying $(\inputnodes, \outputnodes)$-path as the \emph{path of the flow}. A path flow whose value is one is called a \emph{unit path flow}. 
	
	\subsection{Flow Preliminaries}~\label{sec:appendix_flow}
	This section provides some fundamental results and properties of flows on $\graph$. As the proofs are technical and uninteresting, they have been relegated to Appendix~\ref{app:flows}. 
	
	We first introduce some operations that can be viewed as combining or splitting flows. Given two flows $\flow_1$ and $\flow_2$ on $G$, the \emph{sum of $\flow_1$ and $\flow_2$}, denoted by $\flow = \flow_1 \flowsum \flow_2$, is defined as $\flow(uv) = \flow_1(uv) + \flow_2(uv)$ for all $uv \in \edgeset$. For $k>2$ flows, we use $\flowsummation_{i=1}^{k} \flow_{i}$ to denote the sum of $k$ flows $\flow_1, \ldots ,\flow_k$. A flow $\flow'$ on $\graph$ is said to be a \emph{subflow of $\flow$} if $\flow'(uv) \leq \flow(uv)$ for all $uv \in \edgeset$.
	For a subflow $\flow_2$ of $\flow_1$, the \emph{subtraction of $\flow_2$ from $\flow_1$}, denoted by $\flow = \flow_1 \flowsub \flow_2$, is defined as $\flow(uv) = \flow_1(uv) - \flow_2(uv)$ for all $uv \in \edgeset$.
	
	The following lemma shows how the above operations can be used to form new flows.
	
	\begin{restatable}{lemma}{flowsumsub}\label{lem:flow_sum_sub}
		Let $\flow_1$ and $\flow_2$ be two flows on $\graph$. Then, the following conditions hold:
		\begin{enumerate}
			\item $\flow = \flow_1 \flowsum \flow_2$ is a flow on $G$ with $\abs{\flow} = \abs{\flow_1} + \abs{\flow_2}$, and
			\item if $\flow_2$ is a subflow of $\flow_1$, then $\flow =\flow_1 \flowsub \flow_2$ is a flow on $G$ with $\abs{\flow} = \abs{\flow_1} - \abs{\flow_2}$.
		\end{enumerate}
	\end{restatable}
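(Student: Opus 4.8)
The plan is to verify each of the two claims directly from the definitions of flow, conservation constraint, and value. The whole argument is a matter of checking that linearity of summation is compatible with the three conditions a flow must satisfy, so I will organize it as: (i) the sum/difference is a well-defined function into $\mathbb{R}_{\geq 0}$; (ii) it satisfies the conservation constraint at every internal vertex; (iii) its value is the sum/difference of the two values.

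For part~1, given flows $\flow_1$ and $\flow_2$, I define $\flow(\euv) = \flow_1(\euv) + \flow_2(\euv)$. Non-negativity is immediate since each summand is non-negative. For conservation at an internal vertex $\vv \in \vertexset \setminus (\inputnodes \cup \outputnodes)$, I would note that $\flow$ applied to a set of edges distributes over the sum, so $\flow(\inedges(\vv)) - \flow(\outedges(\vv)) = \bigl(\flow_1(\inedges(\vv)) - \flow_1(\outedges(\vv))\bigr) + \bigl(\flow_2(\inedges(\vv)) - \flow_2(\outedges(\vv))\bigr) = 0 + 0 = 0$, using that both $\flow_1$ and $\flow_2$ satisfy the conservation constraint. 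For the value, I would similarly expand $\abs{\flow} = \sum_{\innode \in \inputnodes} \flow(\outedges(\innode)) = \sum_{\innode \in \inputnodes}\bigl(\flow_1(\outedges(\innode)) + \flow_2(\outedges(\innode))\bigr) = \abs{\flow_1} + \abs{\flow_2}$, again by linearity of the finite sums involved.

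Part~2 is the same computation with subtraction in place of addition, with one extra point to check: that $\flow = \flow_1 \flowsub \flow_2$ is a genuine flow, i.e., that it is non-negative. This is exactly where the hypothesis that $\flow_2$ is a subflow of $\flow_1$ is used: by definition $\flow_2(\euv) \leq \flow_1(\euv)$ for every edge, so $\flow(\euv) = \flow_1(\euv) - \flow_2(\euv) \geq 0$. The conservation constraint and the value computation then go through verbatim as in part~1, with every ``$+$'' replaced by ``$-$''.

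The only mild subtlety — and the closest thing to an obstacle — is making sure the notational convention $\flow(\edgesetgeneric) = \sum_{\euv \in \edgesetgeneric}\flow(\euv)$ really does distribute over sums and differences of flows on an arbitrary edge subset; but this is immediate from the fact that all sums are finite and the underlying operations on real numbers are linear. No properties of $\graph$ beyond having disjoint source set $\inputnodes$ and sink set $\outputnodes$ are needed. Consequently the proof is entirely routine, which is why (as the excerpt notes) it is relegated to Appendix~\ref{app:flows}.
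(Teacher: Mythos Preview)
Your proposal is correct and follows essentially the same approach as the paper's proof: verify non-negativity (trivial for the sum, requiring the subflow hypothesis for the difference), check the conservation constraint at internal vertices by linearity, and compute the value by linearity of the sum over outgoing edges from $\inputnodes$. The paper's appendix proof is organized identically and even cites the same source of the idea (Lemma~2.19 in \cite{Williamson2019}).
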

	
	The following lemma demonstrates how to extract a collection of subflows from an arbitrary flow $\flow$ on $G$ such that the sum of all the subflows in the collection is also a subflow of $\flow$.
	
	\begin{restatable}{lemma}{flowdecomp}\label{lem:flow_decomp}
		For any integral flow $\flow$ on $\graph$, there exist $\abs{f}$ unit path flows on $\graph$, denoted by $\flow_1, \ldots, \flow_{\abs{f}}$, such that:
		\begin{enumerate}
			\item $\flow_i$ is a subflow of $\flow$ for each $i \in \{1,\ldots,\abs{\flow}\}$, and,
			\item $\flowsummation_{i=1}^{\abs{\flow}} \flow_{i}$ is a subflow of $\flow$.
		\end{enumerate}
	\end{restatable}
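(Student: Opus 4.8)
The statement to prove is Lemma~\ref{lem:flow_decomp}: from any integral flow $\flow$ on $\graph$, one can extract $\abs{\flow}$ unit path flows $\flow_1,\ldots,\flow_{\abs{\flow}}$ that are each subflows of $\flow$ and whose sum $\flowsummation_{i=1}^{\abs{\flow}}\flow_i$ is also a subflow of $\flow$.

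\textbf{Plan.} The proof is by induction on the integer value $\abs{\flow}$, using a standard flow-decomposition argument adapted to the fact that $\graph$ is acyclic from $\inputnodes$ to $\outputnodes$ (every edge goes $\innode\to\internode$, $\internode\to\internode$ between adjacent subconfigurations, or $\internode\to\outnode$, and the second kind of move changes the multiset of items, so there are no directed cycles). If $\abs{\flow}=0$, the empty collection works. For the inductive step with $\abs{\flow}\ge 1$, I would first locate a single unit path flow inside $\flow$: since $\abs{\flow}=\sum_{\innode\in\inputnodes}\flow(\outedges(\innode))\ge 1$ and $\flow$ is integral, some origin node $\innode$ has $\flow(\outedges(\innode))\ge 1$; starting from $\innode$ and repeatedly following any out-edge carrying at least one unit of flow, the conservation constraint at every intermediate node guarantees we can always continue, and acyclicity guarantees the walk terminates — necessarily at a destination node $\outnode\in\outputnodes$, since those are the only sinks. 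This traces out an $(\inputnodes,\outputnodes)$-path $\pathflow$; let $\flow_1$ be the unit path flow on $\pathflow$ (value one on each edge of $\pathflow$, zero elsewhere). By construction $\flow_1$ is a subflow of $\flow$.

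\textbf{Key steps.} Next I would verify that $\flow' = \flow \flowsub \flow_1$ is again an integral flow on $\graph$: by Lemma~\ref{lem:flow_sum_sub}(2) it is a flow with $\abs{\flow'} = \abs{\flow} - \abs{\flow_1} = \abs{\flow} - 1$, and it is integral since it is a difference of integral functions and remains nonnegative because $\flow_1$ is a subflow. Applying the induction hypothesis to $\flow'$ yields unit path flows $\flow_2,\ldots,\flow_{\abs{\flow}}$, each a subflow of $\flow'$, whose sum $\flowsummation_{i=2}^{\abs{\flow}}\flow_i$ is a subflow of $\flow'$. It then remains to check the two conclusions for the full collection $\flow_1,\ldots,\flow_{\abs{\flow}}$. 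Each $\flow_i$ with $i\ge 2$ is a subflow of $\flow' = \flow\flowsub\flow_1 \le \flow$, hence a subflow of $\flow$; and $\flow_1$ is a subflow of $\flow$ directly. For the sum, $\flowsummation_{i=1}^{\abs{\flow}}\flow_i = \flow_1 \flowsum \left(\flowsummation_{i=2}^{\abs{\flow}}\flow_i\right)$, and since $\flowsummation_{i=2}^{\abs{\flow}}\flow_i$ is a subflow of $\flow' = \flow\flowsub\flow_1$, adding back $\flow_1$ edge-by-edge gives a function bounded above by $\flow$ on every edge, i.e.\ a subflow of $\flow$ (that it is a flow follows from Lemma~\ref{lem:flow_sum_sub}(1)).

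\textbf{Main obstacle.} The only non-routine point is justifying that the greedy walk from $\innode$ is well-defined and must end at a node in $\outputnodes$: I need conservation to supply a positive-flow out-edge at each intermediate node entered (total in-flow $\ge 1$ forces total out-flow $\ge 1$, hence some out-edge with flow $\ge 1$), and I need acyclicity of $\graph$ restricted to edges carrying positive flow to ensure termination without revisiting a vertex — acyclicity holds because edges $\internode_{\subconfig}\internode_{\subconfigtwo}$ only connect adjacent subconfigurations (a single item move), which cannot form a directed cycle while strictly never returning, combined with the one-way structure $\inputnodes\to\internodes\to\outputnodes$. Once the walk terminates, it can only be at a vertex with no positive out-flow, and among $\inputnodes\cup\internodes\cup\outputnodes$ only $\outputnodes$ vertices have out-degree zero in $\graph$, so the path is a genuine $(\inputnodes,\outputnodes)$-path. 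Everything else is bookkeeping with the $\flowsum$ and $\flowsub$ operations and direct appeals to Lemma~\ref{lem:flow_sum_sub}.
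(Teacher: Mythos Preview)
Your argument has a genuine gap: the graph $\graph$ is \emph{not} acyclic. Adjacency of subconfigurations is a symmetric relation (if $\subconfigtwo$ is obtained from $\subconfig$ by moving one item, then $\subconfig$ is obtained from $\subconfigtwo$ by moving that item back), so whenever $\subconfig$ and $\subconfigtwo$ are adjacent, both directed edges $\internode_{\subconfig}\internode_{\subconfigtwo}$ and $\internode_{\subconfigtwo}\internode_{\subconfig}$ belong to $\edgeset$. Thus $\graph$ contains directed $2$-cycles inside $\internodes$, and an integral flow $\flow$ can carry positive flow around such a cycle. Your greedy walk from an origin node, following positive-flow out-edges, can therefore revisit a vertex rather than terminate at a sink in $\outputnodes$, and your termination argument breaks down. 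The sentence ``a single item move \ldots\ cannot form a directed cycle while strictly never returning'' is simply false.

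The paper's proof avoids this by inducting on the number of positive edges rather than on $\abs{\flow}$. Starting from any positive edge, it extends forward and backward along positive edges until it obtains either an $(\inputnodes,\outputnodes)$-path or a directed cycle. In the cycle case it subtracts the cycle flow (which has value zero, so contributes no unit path flows) and observes that this strictly decreases the number of positive edges, allowing the induction to proceed. Your induction parameter $\abs{\flow}$ does not decrease under cycle subtraction, so even if you detected a cycle, your scheme as written could not make progress. A fix along your lines would require a secondary induction (or well-ordering) on the number of positive edges to eliminate cycles before each path extraction---at which point you have essentially reproduced the paper's argument.
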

	
	The following result gives us a useful property of any flow on $G$: the sum of a flow over all edges outgoing from the origin nodes is equal to the sum of the flow over all edges incoming to the destination nodes.
	\begin{restatable}{lemma}{flowconserve}\label{lem:conserve}
		For any flow $\flow$ on $G$, we have
		$$ 
		\sum_{\subconfig \in \bsubs} {\flow}(\outedges(\innode_\subconfig)) = \sum_{\subconfig \in \bsubs} {\flow}(\inedges(\outnode_\subconfig)).$$
	\end{restatable}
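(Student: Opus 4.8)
The plan is to derive the identity by summing the flow-conservation constraint over the intermediate nodes and observing a cancellation. Recall that a flow $\flow$ on $\graph$ satisfies $\flow(\inedges(\vv)) = \flow(\outedges(\vv))$ for every $\vv \in \vertexset \setminus (\inputnodes \cup \outputnodes)$, which by the construction of $\graph$ is exactly the set $\internodes$ of intermediate nodes. Hence, summing over all $\subconfig \in \bsubs$,
\[
\sum_{\subconfig \in \bsubs} \flow(\inedges(\internode_\subconfig)) \;=\; \sum_{\subconfig \in \bsubs} \flow(\outedges(\internode_\subconfig)).
\]

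First I would describe the edges incident to each $\internode_\subconfig$. By the definition of $\edgeset$, the in-edges of $\internode_\subconfig$ are the edge $\innode_\subconfig\internode_\subconfig$ together with all edges of the form $\internode_\subconfigtwo\internode_\subconfig$, and the out-edges of $\internode_\subconfig$ are the edge $\internode_\subconfig\outnode_\subconfig$ together with all edges of the form $\internode_\subconfig\internode_\subconfigtwo$. Let $\edgesetgeneric_{\internodes} = \set{\internode_\subconfig\internode_\subconfigtwo \in \edgeset : \subconfig, \subconfigtwo \in \bsubs}$ denote the set of edges of $\graph$ with both endpoints in $\internodes$. Each edge of $\edgesetgeneric_{\internodes}$ leaves exactly one intermediate node and enters exactly one intermediate node, so
\[
\sum_{\subconfig \in \bsubs} \flow(\inedges(\internode_\subconfig)) = \sum_{\subconfig \in \bsubs}\flow(\innode_\subconfig\internode_\subconfig) + \flow(\edgesetgeneric_{\internodes}), \qquad \sum_{\subconfig \in \bsubs} \flow(\outedges(\internode_\subconfig)) = \sum_{\subconfig \in \bsubs}\flow(\internode_\subconfig\outnode_\subconfig) + \flow(\edgesetgeneric_{\internodes}).
\]
Combining these with the displayed equality above and cancelling the common term $\flow(\edgesetgeneric_{\internodes})$ gives $\sum_{\subconfig \in \bsubs}\flow(\innode_\subconfig\internode_\subconfig) = \sum_{\subconfig \in \bsubs}\flow(\internode_\subconfig\outnode_\subconfig)$.

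Finally I would note that, again from the definition of $\edgeset$, the unique out-edge of $\innode_\subconfig$ is $\innode_\subconfig\internode_\subconfig$ and the unique in-edge of $\outnode_\subconfig$ is $\internode_\subconfig\outnode_\subconfig$ (the nodes in $\inputnodes$ receive no in-edges and those in $\outputnodes$ emit no out-edges), so $\flow(\outedges(\innode_\subconfig)) = \flow(\innode_\subconfig\internode_\subconfig)$ and $\flow(\inedges(\outnode_\subconfig)) = \flow(\internode_\subconfig\outnode_\subconfig)$ for every $\subconfig \in \bsubs$. Substituting these into the equality from the previous step yields the statement. There is no genuine obstacle here; the only point needing care is the edge bookkeeping — that the edges internal to $\internodes$ contribute the same quantity $\flow(\edgesetgeneric_{\internodes})$ to both the in-sum and the out-sum and hence drop out — which relies precisely on the fact that $\inputnodes$ has no in-edges and $\outputnodes$ has no out-edges, so conservation holds at all and only the intermediate nodes.
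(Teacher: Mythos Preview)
Your proposal is correct and follows essentially the same approach as the paper: sum the conservation constraint over all intermediate nodes $\internode_\subconfig$, split the in- and out-edge sums into the $\innode_\subconfig\internode_\subconfig$ / $\internode_\subconfig\outnode_\subconfig$ contributions and the internal $\internodes$-$\internodes$ contributions, observe that the latter appear identically on both sides and cancel, and then identify the remaining terms with $\flow(\outedges(\innode_\subconfig))$ and $\flow(\inedges(\outnode_\subconfig))$ using the structure of $\edgeset$. The only cosmetic difference is the order of presentation.
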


	\subsection{Proof of Lemma \ref{lem:partition_reduction} - forward direction}\label{sec:appendix_partition_forward_new}
	
	The goal of this section is to prove the following lemma. 
	\begin{restatable}{lemma}{partitionreductionforward}\label{lem:partition_reduction_forward}
		Consider any instance $(\source, \target)$ for \textsc{$\partitionbound$-Repacking-$\capacity$}. If the corresponding $\partitionILP$ has a feasible solution, then $(\source, \target)$ is a yes-instance.
	\end{restatable}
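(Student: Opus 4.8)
The plan is to convert a feasible solution of $\partitionILP$ into the three ingredients that a yes-instance of \textsc{$\partitionbound$-Repacking-$\capacity$} requires: a $\boundedpartition$ $\partition$ of the underlying set $\universe$, assignments $\sourceassignment,\targetassignment$ of $\partition$ consistent with $\source$ and $\target$, and a $\partition$-conforming reconfiguration sequence for capacity $\capacity$ from $\source$ to $\target$. Given a feasible solution, its $\flow$-component is an integral flow on $\graph$ (by Constraints~\ref{con:two} and~\ref{con:six}), so Lemma~\ref{lem:flow_decomp} yields $\abs{\flow}$ unit path flows $\flow_1,\ldots,\flow_{\abs{\flow}}$, each a subflow of $\flow$, whose sum $g=\flowsummation_{i=1}^{\abs{\flow}}\flow_i$ is a subflow of $\flow$ with $\abs{g}=\abs{\flow}$. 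The path underlying $\flow_i$ has the shape $\innode_{\subconfig^{(i)}}\to\internode_{\subconfig^{(i)}}\to\internode_{\mathcal R_1^{(i)}}\to\cdots\to\internode_{\subconfigtwo^{(i)}}\to\outnode_{\subconfigtwo^{(i)}}$, and by the definition of $\edgeset$ its intermediate part is a sequence of pairwise-adjacent subconfigurations in $\bsubs$, i.e.\ a reconfiguration sequence on at most $\beta$ \bins from a ``source subconfiguration'' $\subconfig^{(i)}$ to a ``target subconfiguration'' $\subconfigtwo^{(i)}$. Intuitively each $\flow_i$ prescribes how to reconfigure one part, from its appearance in $\sourceassignment$ to its appearance in $\targetassignment$.

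First I would pin down how many paths begin and end at each node. Since $g$ is a subflow of $\flow$ with $\abs{g}=\abs{\flow}$, and $\abs{g}=\sum_{\subconfig}g(\outedges(\innode_\subconfig))\le\sum_{\subconfig}\flow(\outedges(\innode_\subconfig))=\abs{\flow}$, equality must hold termwise; because $\innode_\subconfig$ has the single out-edge $\innode_\subconfig\internode_\subconfig$, Constraint~\ref{con:one} then gives that exactly $-\demand(\innode_\subconfig)$ of the $\flow_i$ start at $\innode_\subconfig$. The mirror argument, through Lemma~\ref{lem:conserve} and Constraint~\ref{con:three}, gives that exactly $\demand(\outnode_\subconfig)$ of them end at $\outnode_\subconfig$; in particular $\abs{\flow}=\sum_\subconfig(-\demand(\innode_\subconfig))=\sum_\subconfig \demand(\outnode_\subconfig)$, the common value being $\abs{\partition}$.

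Next I would assemble the objects. Ordering the paths arbitrarily, for $i=1,\ldots,\abs{\flow}$ I set $\partitionpart_i=(\universe_i,\binnumber_i)$, where $\universe_i$ is the underlying set of $\subconfig^{(i)}$ and $\binnumber_i\le\beta$ is its number of \bins (every member of $\bsubs$ uses at most $\beta$ \bins, so $\partition$ is $\beta$-bounded), and put $\portion{\sourceassignment}{i}=\subconfig^{(i)}$ and $\portion{\targetassignment}{i}=\subconfigtwo^{(i)}$. Using the multiplicity counts above together with the per-\bin-type identity of Constraint~\ref{con:four}, the disjoint multiset union of the $\subconfig^{(i)}$ (each $\subconfig$ occurring $-\demand(\innode_\subconfig)$ times) is exactly $\source$ as a multiset of \bins; taking underlying sets shows $\biguplus_i\universe_i=\universe$, so $\partition$ is a partition of $\universe$ and $\sourceassignment$ is consistent with $\source$, while the analogous computation with Constraint~\ref{con:five} makes $\targetassignment$ consistent with $\target$. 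Finally $\sourceassignment$ is a legal assignment of $\partition$ since each portion $\subconfig^{(i)}\in\bsubs$ has underlying set $\universe_i$, exactly $\binnumber_i$ \bins, and all \bins of volume at most $\capacity$; and $\targetassignment$ is too, because a ``move of a single item'' preserves both the underlying set and the \bin count, so along the $i$-th path $\subconfigtwo^{(i)}$ inherits underlying set $\universe_i$ and exactly $\binnumber_i$ \bins from $\subconfig^{(i)}$.

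It remains to stitch everything into one $\partition$-conforming sequence. Starting at $\sourceassignment$, I would process the portions in order: to handle portion $i$, replay the reconfiguration sequence $\subconfig^{(i)}=\mathcal R_0^{(i)},\mathcal R_1^{(i)},\ldots,\subconfigtwo^{(i)}$ read off the $i$-th path, emitting one assignment per single-item move, each agreeing with its predecessor outside portion $i$ and changing portion $i$ from $\mathcal R_j^{(i)}$ to the adjacent $\mathcal R_{j+1}^{(i)}$; since every $\mathcal R_j^{(i)}\in\bsubs$, each emitted assignment is a legal assignment of $\partition$ and consecutive assignments are adjacent. Once all portions are handled the sequence has reached $\targetassignment$, giving a $\partition$-conforming reconfiguration sequence for capacity $\capacity$ from $\source$ to $\target$, so $(\source,\target)$ is a yes-instance. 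I expect the main obstacle to be the bookkeeping in the previous paragraph: checking that the ``same $\universe_i$'' and ``same $\binnumber_i$'' assignment conditions hold simultaneously for $\sourceassignment$ and $\targetassignment$ (which leans on adjacency preserving underlying set and \bin count and on the exact path-endpoint multiplicities), and making the conventions for empty \bins in members of $\bsubs$ versus parts line up.
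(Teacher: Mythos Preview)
Your proposal is correct and follows essentially the same approach as the paper: decompose the integral flow $\hat\flow$ into $\abs{\hat\flow}$ unit path flows via Lemma~\ref{lem:flow_decomp}, use the subflow-with-equal-value argument (the paper's Lemma~\ref{lem:number_of_unit_path_flows}) to pin down exactly $-\hat\demand(\innode_\subconfig)$ starts at $\innode_\subconfig$ and $\hat\demand(\outnode_\subconfig)$ ends at $\outnode_\subconfig$, build $\partition$ and $\sourceassignment$ from these counts together with Constraint~\ref{con:four} (the paper's Lemma~\ref{lem:create_partition}), and then concatenate the per-portion reconfiguration sequences read off the $\internode$-segments of the paths (the paper's Algorithm~\ref{alg:create_partition_sequence} and Lemma~\ref{lem:properties_of_A_1}). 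The only cosmetic difference is that you index parts directly by paths, whereas the paper first builds $\partition$ from the demands and then matches paths to portions; your observation that adjacency preserves underlying set and \bin count is exactly what the paper uses inside its inductive proof of property~\ref{A:prop2} of Lemma~\ref{lem:properties_of_A_1}.
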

	
	To prove Lemma~\ref{lem:partition_reduction_forward}, we start with an arbitrary instance $(\source, \target)$ for \textsc{$\partitionbound$-Repacking-$\capacity$}, and construct its corresponding directed graph $G$ and its corresponding $\partitionILP$, as described in Section \ref{sec-partition-ILP}. We assume that $\partitionILP$ has a feasible solution, and let $\begin{bmatrix} \hat{\flow} & \hat{\demand} \end{bmatrix}$ be any such solution.
	
	By the definition of \textsc{$\partitionbound$-Repacking-$\capacity$}, to prove that $(\source, \target)$ is a yes-instance, we must prove that there exists a $\partitionbound$-bounded partition $\partition$ of the universe of $\source$ such that there exists a $\partition$-conforming reconfiguration sequence for capacity $\capacity$ from $\source$ to $\target$. More specifically, it suffices to:
	\begin{enumerate}[label=(\arabic*)]
		\item Define a $\partitionbound$-bounded partition $\partition$ of the universe of $\source$,
		\item Define an assignment $\sourceassignment$ of $\partition$ for capacity $\kappa$ that is consistent with $\source$, and,
		\item Define a sequence $\areconfsequence$ such that:
		\begin{itemize}
			\item Each element of $\areconfsequence$ is an assignment of $\partition$ for capacity $\kappa$.
			\item The first assignment of $\areconfsequence$ is $\sourceassignment$.
			\item The final assignment of $\areconfsequence$ is consistent with $\target$.
			\item Each pair of consecutive assignments in $\areconfsequence$ are adjacent.
		\end{itemize}
	\end{enumerate}
	
	Steps (1) and (2) are provided by the following result.
	\begin{restatable}{lemma}{createpartition}\label{lem:create_partition}
		For any feasible solution $\begin{bmatrix} \hat{\flow} & \hat{\demand} \end{bmatrix}$ to $\partitionILP$, there exist a $\partitionbound$-bounded partition $\partition$ of the universe of $\source$ and an assignment $\sourceassignment$ of $\partition$ for capacity $\capacity$ that is consistent with $\source$. Additionally, for each $\subconfig \in \bsubs$, $\sourceassignment$ has exactly $\abs{\hat{\demand}(\innode_\subconfig)}$ portions that are equal to $\subconfig$.
	\end{restatable}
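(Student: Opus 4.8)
The plan is to read the demand vector $\hat{\demand}$ directly as a recipe for both the partition $\partition$ and the assignment $\sourceassignment$: each origin node $\innode_\subconfig$ (which carries demand $\hat{\demand}(\innode_\subconfig) \le 0$ by Constraint~\ref{con:seven}) will contribute exactly $\abs{\hat{\demand}(\innode_\subconfig)}$ copies of the subconfiguration $\subconfig$. Concretely, I would fix an arbitrary ordering of the (finitely many) pairs $(\subconfig,j)$ with $\subconfig \in \bsubs$ and $1 \le j \le \abs{\hat{\demand}(\innode_\subconfig)}$, and for each such pair emit one portion equal to $\subconfig$ (viewed as a multiset of \bins) together with a matching part $\partitionpart_\subconfig = (\universe_\subconfig, \binnumber_\subconfig)$, where $\universe_\subconfig$ is the multiset union of the items in $\subconfig$ and $\binnumber_\subconfig$ is the number of \bins in $\subconfig$. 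This defines $\partition$ and $\sourceassignment$ as sequences of the same length, and it makes the final ``additionally'' clause immediate: distinct elements of the set $\bsubs$ are distinct multisets of \bins, so the number of portions of $\sourceassignment$ equal to $\subconfig$ is exactly $\abs{\hat{\demand}(\innode_\subconfig)}$ by construction.

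Next I would verify the definitional requirements one at a time. For $\partition$ to be a $\binnumber$-bounded partition it suffices that each $\binnumber_\subconfig$ is a positive integer at most $\binnumber$ --- which holds because every member of $\bsubs$ is a nonempty subconfiguration on at most $\binnumber$ \bins --- together with the requirement that the multiset union of the $\universe_\subconfig$'s equals the universe of $\source$, which I would deduce from the consistency claim below. For $\sourceassignment$ to be an assignment of $\partition$ for capacity $\capacity$, each portion must agree with its part in its multiset of items and in its number of \bins (true by construction), and every \bin appearing in a portion must have volume at most $\capacity$ (true since each \bin occurring in a member of $\bsubs$ has volume at most $\capacity$).

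The crux is consistency with $\source$, namely that the disjoint multiset union of all portions of $\sourceassignment$ equals $\source$, and here Constraint~\ref{con:four} does all the work. Fix any bin type $B \in \bintypesnoparameters$. The number of \bins equal to $B$, summed over all portions of $\sourceassignment$, is $\sum_{\subconfig \in \bsubs} \binmult{B}{\subconfig} \cdot \abs{\hat{\demand}(\innode_\subconfig)} = -\sum_{\subconfig \in \bsubs} \binmult{B}{\subconfig} \cdot \hat{\demand}(\innode_\subconfig) = \binmult{B}{\source}$, using $\hat{\demand}(\innode_\subconfig) \le 0$ and then Constraint~\ref{con:four}. Since $\source$ is legal for capacity $\capacity$, every \bin of $\source$ belongs to $\bintypesnoparameters$, so this equality over all $B$ shows that the disjoint union of the portions is exactly $\source$; taking the underlying set of both sides then shows that the multiset union of the $\universe_\subconfig$'s is the universe of $\source$, which discharges the step deferred above. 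I do not expect a genuine obstacle here: the lemma is essentially the observation that Constraint~\ref{con:four} was written precisely so that any feasible $\hat{\demand}$ encodes an assignment consistent with $\source$, and the only real care required is to confirm that every clause in the definitions of ``partition'', ``assignment of a partition'', and ``consistent with a configuration'' is satisfied.
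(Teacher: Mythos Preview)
Your proposal is correct and follows essentially the same approach as the paper: both construct $\partition$ and $\sourceassignment$ by inserting $\abs{\hat{\demand}(\innode_\subconfig)}$ copies of $(\universe(\subconfig),\abs{\subconfig})$ and $\subconfig$ respectively, verify $\partitionbound$-boundedness from the definition of $\bsubs$, and establish consistency with $\source$ via the same multiplicity computation using Constraints~\ref{con:seven} and~\ref{con:four}, deferring the ``partition of $\universe$'' check until consistency is in hand.
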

	
	\begin{proof}
		Consider any feasible solution $\begin{bmatrix} \hat{\flow} & \hat{\demand} \end{bmatrix}$ to $\partitionILP$. Denote by $U$ the universe of $\source$. First, we create partition $\partition$ as follows: for each $\subconfig$ in $\bsubs$, add $\abs{\hat{\demand}(\innode_\subconfig)}$ copies of $(\universe(\subconfig), \abs{\subconfig})$ to $\partition$. Next, create an assignment $\sourceassignment$ as follows: for each $\subconfig$ in $\bsubs$, add $\abs{\hat{\demand}(\innode_\subconfig)}$ copies of $\subconfig$ to $\sourceassignment$.
		
		We see immediately from the construction of $\sourceassignment$ that, for each $\subconfig \in \bsubs$, $\sourceassignment$ has exactly $\abs{\hat{\demand}(\innode_\subconfig)}$ portions that are equal to $\subconfig$, as claimed in the lemma's statement.
		
		Next, from the construction of $\partition$, note that the number of \bins in each part of the partition is equal to $\abs{\subconfig}$ for some $\subconfig \in \bsubs$. By the definition of $\bsubs$, each subconfiguration in $\bsubs$ has at most $\partitionbound$ \bins, i.e., $\abs{\subconfig} \leq \partitionbound$ for all $\subconfig \in \bsubs$. This proves that $\partition$ is a $\partitionbound$-bounded partition, as claimed.
		
		From the constructions of $\partition$ and $\sourceassignment$, we see that for each $\subconfig$ in $\bsubs$, the number of times that $\subconfig$ is present as a portion in $\sourceassignment$ is exactly equal to the number of times that $(\universe(\subconfig), \abs{\subconfig})$ is present as a part in $\partition$ (since both numbers are equal to $\abs{\hat{\demand}(\innode_\subconfig)}$), which proves that $\sourceassignment$ is an assignment of $\partition$. Moreover, since each portion of the assignment is a subconfiguration from $\bsubs$, then by the definition of $\bsubs$, the volume of each \bin in each portion is at most $\capacity$, and hence $\sourceassignment$ is an assignment of $\partition$ for capacity $\capacity$, as claimed.
		
		There are two additional claims left to prove: we need to show that $\partition$ is a partition of $\universe$, and, we need to show that $\sourceassignment$ is consistent with $\source$. Since $\sourceassignment$ is an assignment of $\partition$, the first claim follows from the second. To prove that $\sourceassignment$ is consistent with $\source$, it is sufficient to consider each possible \bin $B$ that can be formed of items taken from $U$ such that the volume of the \bin is at most $\capacity$, and prove that the multiplicity of $B$ is the same in both $\sourceassignment$ and $\source$. In particular, for each $B \in \bintypesnoparameters$, the multiplicity of $B$ in $\sourceassignment$ can be computed by taking the sum of multiplicities of $B$ across all portions of $\sourceassignment$, and we see that:\\
		\begin{align*}
			\sum_{\assignmentpart \in \sourceassignment} \binmult{B}{\assignmentpart} &= \sum_{\subconfig \in \bsubs} \binmult{B}{\subconfig} \cdot \abs{d(\innode_\subconfig)} \tag{By construction of $\sourceassignment$} \\
			&= -\sum_{\subconfig \in \bsubs} \binmult{B}{\subconfig} \cdot d(\innode_\subconfig) \tag{Constraint \ref{con:seven}} \\
			&= \binmult{B}{\source} \tag{Constraint \ref{con:four}}.
		\end{align*} 
	\end{proof}
	
	The remainder of this section is dedicated to Step (3), i.e., the definition of the reconfiguration sequence $\areconfsequence$. The idea is to take $\hat{\flow}$ from the feasible solution to $\partitionILP$, show that it is an integral flow on $\graph$, extract a collection of unit path flows from $\hat{\flow}$, and, from each such unit path flow, obtain a reconfiguration sequence that acts completely within one part of the partition $\partition$. These reconfiguration sequences can then be concatenated together to form a reconfiguration sequence for the entire instance. We start by showing that $\hat{\flow}$ is an integral flow on $\graph$.
	
	\begin{lemma}\label{lem:rearranged_ilp}
		For any solution $\begin{bmatrix} \hat{\flow} & \hat{\demand} \end{bmatrix}$ to $\partitionILP$, $\hat{\flow}$ is an integral flow on $\graph$.
	\end{lemma}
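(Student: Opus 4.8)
The plan is simply to verify, item by item, that $\hat{\flow}$ satisfies the definition of an integral flow on $\graph$ given in Section~\ref{sec:appendix_defs}. Recall that a flow on $\graph$ is a function $\flow \colon \edgeset \to \mathbb{R}_{\geq 0}$ satisfying the conservation constraint $\flow(\inedges(v)) - \flow(\outedges(v)) = 0$ at every vertex $v \in \vertexset \setminus (\inputnodes \cup \outputnodes)$, and it is integral if $\flow(uv) \in \mathbb{Z}_{\geq 0}$ for every $uv \in \edgeset$. So there are three things to check: nonnegativity, integrality, and conservation at the appropriate vertices.

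First I would observe that nonnegativity and integrality of $\hat{\flow}$ are immediate from Constraint~\ref{con:six} of $\partitionILP$, which requires $\hat{\flow}(uv) \in \mathbb{Z}_{\geq 0}$ for all $uv \in \edgeset$; this simultaneously gives $\hat{\flow}\colon \edgeset \to \mathbb{R}_{\geq 0}$ and the integrality condition. Next, I would identify precisely which vertices require the conservation constraint: by the definition of $\graph$, $\vertexset = \inputnodes \cup \internodes \cup \outputnodes$ is a disjoint union, so $\vertexset \setminus (\inputnodes \cup \outputnodes) = \internodes = \{\internode_{\subconfig} : \subconfig \in \bsubs\}$. Thus it suffices to check conservation at each node $\internode_{\subconfig}$, and this is exactly Constraint~\ref{con:two}, which asserts $\hat{\flow}(\inedges(\internode_{\subconfig})) - \hat{\flow}(\outedges(\internode_{\subconfig})) = 0$ for every $\subconfig \in \bsubs$. (Constraints~\ref{con:one} and~\ref{con:three}, relating $\hat{\flow}$ on edges incident to origin and destination nodes to the demand values, are not needed here since $\inputnodes$ and $\outputnodes$ are exempt from the conservation requirement; they will matter later, when $\hat{\demand}$ is tied back to the source/target configurations.)

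Putting these together, $\hat{\flow}$ is a nonnegative, integer-valued function on $\edgeset$ that is conserved at every intermediate node, which is precisely what it means for $\hat{\flow}$ to be an integral flow on $\graph$. There is essentially no obstacle in this lemma: it is a bookkeeping step that unpacks the ILP constraints against the graph-theoretic definition of a flow, and its only role is to license the application of the flow-decomposition machinery (Lemmas~\ref{lem:flow_sum_sub}, \ref{lem:flow_decomp}, \ref{lem:conserve}) to $\hat{\flow}$ in the subsequent construction of the reconfiguration sequence $\areconfsequence$.
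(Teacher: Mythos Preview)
Your proposal is correct and matches the paper's own proof essentially line for line: the paper likewise observes that $\vertexset \setminus (\inputnodes \cup \outputnodes) = \internodes$ so that Constraint~\ref{con:two} gives the conservation condition, and then invokes Constraint~\ref{con:six} for nonnegativity and integrality. There is nothing to add.
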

	
	\begin{proof}
		To show that $\hat{\flow}$ is a flow, it suffices to show that the conservation constraint is satisfied. Since $\vertexset \setminus (\inputnodes \cup \outputnodes)$ is simply $\internodes$, Constraint \ref{con:two} of $\partitionILP$ implies that $\hat{\flow}(\inedges(\internode_\subconfig)) - \hat{\flow}(\outedges(\internode_\subconfig)) = 0$ for all $\subconfig \in \bsubs$, so the conservation constraint is satisfied. Moreover, by Constraint \ref{con:six} of $\partitionILP$, it follows that $\hat{\flow}(uv) \in \mathbb{Z}_{\geq 0}$ for any edge $uv$ in $\graph$, so $\hat{\flow}$ is an integral flow on $\graph$. 
	\end{proof}
	
	Next, we apply Lemma~\ref{lem:flow_decomp} to $\hat{\flow}$ to extract a collection of $\abs{\hat{\flow}}$ unit path flows $\hat{\flow}_1,\ldots,\hat{\flow}_{\abs{\hat{\flow}}}$ on $\graph$, each of which is a subflow of $\hat{\flow}$, and whose sum is a subflow of $\hat{\flow}$. From the definition of a unit path flow, each unit path flow in this collection corresponds to an underlying $(\inputnodes,\outputnodes)$-path in $\graph$ (in particular, the positive edges of the flow are the edges of such a path). The following result tells us how to interpret the demands given by $\hat{\demand}$. In particular, the demands given by $\hat{\demand}$ on the origin and destination nodes of $\graph$ specify how many times each of these nodes is used as a source or sink in the underlying $(\inputnodes,\outputnodes)$-paths.
	
	\begin{lemma}\label{lem:number_of_unit_path_flows}
		Consider any collection of unit path flows $\hat{\flow}_1,\ldots,\hat{\flow}_{\abs{\hat{\flow}}}$ such that $\flowsummation_{i=1}^{\abs{\hat{\flow}}} \hat{\flow}_{i}$ is a subflow of $\hat{\flow}$. For each $\subconfig \in \bsubs$, there are exactly $\abs{\hat{\demand}(\innode_{\subconfig})}$ unit path flows from the collection whose underlying path has node $\innode_{\subconfig} \in X$ as its source. Moreover, for each $\subconfig \in \bsubs$, there are exactly $\abs{\hat{\demand}(\outnode_{\subconfig})}$ unit path flows from the collection whose underlying path has node $\outnode_{\subconfig} \in Z$ as its sink.
	\end{lemma}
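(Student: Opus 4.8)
The plan is to combine the rigid local structure of $\graph$ around the origin and destination nodes with a global double-counting argument. First I would record the structural observation that in $\graph$ the only edge leaving an origin node $\innode_\subconfig$ is $\innode_\subconfig\internode_\subconfig$, origin nodes have no in-edges, the only edge entering a destination node $\outnode_\subconfig$ is $\internode_\subconfig\outnode_\subconfig$, and destination nodes have no out-edges. Hence every $(\inputnodes,\outputnodes)$-path has the form $\innode_\subconfig\internode_\subconfig\cdots\internode_{\subconfigtwo}\outnode_{\subconfigtwo}$: it traverses exactly one edge of the form $\innode_{\cdot}\internode_{\cdot}$ (at its start, for the unique subconfiguration that is its source) and exactly one edge of the form $\internode_{\cdot}\outnode_{\cdot}$ (at its end, for the unique one that is its sink), so a unit path flow with source $\innode_\subconfig$ places exactly one unit of flow on $\innode_\subconfig\internode_\subconfig$ and zero flow on every other edge incident to an origin node.

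Next, fix $\subconfig \in \bsubs$ and let $a_\subconfig$ be the number of unit path flows in the collection whose underlying path has source $\innode_\subconfig$. By the observation above, $a_\subconfig = \sum_{i=1}^{\abs{\hat{\flow}}}\hat{\flow}_i(\innode_\subconfig\internode_\subconfig)$, and since $\flowsummation_{i=1}^{\abs{\hat{\flow}}}\hat{\flow}_i$ is a subflow of $\hat{\flow}$, this gives $a_\subconfig \leq \hat{\flow}(\innode_\subconfig\internode_\subconfig) = \hat{\flow}(\outedges(\innode_\subconfig))$. Constraints~\ref{con:one} and~\ref{con:seven} of $\partitionILP$ then yield $\hat{\flow}(\outedges(\innode_\subconfig)) = -\hat{\demand}(\innode_\subconfig) = \abs{\hat{\demand}(\innode_\subconfig)}$, so $a_\subconfig \leq \abs{\hat{\demand}(\innode_\subconfig)}$ for every $\subconfig$.

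Finally I would upgrade each of these inequalities to an equality by counting the whole collection two ways. Every one of the $\abs{\hat{\flow}}$ unit path flows has exactly one source, so $\sum_{\subconfig\in\bsubs} a_\subconfig = \abs{\hat{\flow}}$; and by the definition of $\abs{\hat{\flow}}$ together with Constraint~\ref{con:one}, $\abs{\hat{\flow}} = \sum_{\subconfig\in\bsubs}\hat{\flow}(\outedges(\innode_\subconfig)) = \sum_{\subconfig\in\bsubs}\abs{\hat{\demand}(\innode_\subconfig)}$. Since $a_\subconfig \leq \abs{\hat{\demand}(\innode_\subconfig)}$ termwise while the two sums agree, each inequality must be tight, i.e.\ $a_\subconfig = \abs{\hat{\demand}(\innode_\subconfig)}$, which is the first claim. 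The statement about sinks is proved symmetrically: with $b_\subconfig$ the number of unit path flows with sink $\outnode_\subconfig$, Constraints~\ref{con:three} and~\ref{con:seven} give $b_\subconfig \leq \hat{\flow}(\inedges(\outnode_\subconfig)) = \hat{\demand}(\outnode_\subconfig) = \abs{\hat{\demand}(\outnode_\subconfig)}$, and Lemma~\ref{lem:conserve} supplies $\sum_{\subconfig}b_\subconfig = \abs{\hat{\flow}} = \sum_{\subconfig}\hat{\flow}(\inedges(\outnode_\subconfig)) = \sum_{\subconfig}\abs{\hat{\demand}(\outnode_\subconfig)}$, so again every inequality is tight. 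I do not expect a real obstacle here; the only points that need a little care are the structural claim that each $(\inputnodes,\outputnodes)$-path meets exactly one origin-incident and one destination-incident edge (so that counting path flows by source coincides with summing their contributions on the edges $\innode_\subconfig\internode_\subconfig$), and using the subflow relation in the direction that yields ``$\leq$'' rather than equality before the double count closes the gap.
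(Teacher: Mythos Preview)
Your proposal is correct and follows essentially the same approach as the paper's proof: both establish the termwise inequality $a_\subconfig \leq \abs{\hat{\demand}(\innode_\subconfig)}$ via the subflow hypothesis and Constraints~\ref{con:one} and~\ref{con:seven}, then close the gap by comparing $\sum_\subconfig a_\subconfig = \abs{\hat{\flow}}$ with $\sum_\subconfig \abs{\hat{\demand}(\innode_\subconfig)} = \abs{\hat{\flow}}$ (the paper phrases this step as a proof by contradiction, but the content is identical), and both handle the sink case symmetrically using Constraints~\ref{con:three},~\ref{con:seven} and Lemma~\ref{lem:conserve}.
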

	
	\begin{proof}
		To prove the first statement, consider any fixed $\subconfig \in \bsubs$. By the construction of $\graph$, there is exactly one edge incident to $\innode_{\subconfig}$, i.e., the outgoing edge $\innode_{\subconfig}\internode_{\subconfig}$. Since the collection $\hat{\flow}_1,\ldots,\hat{\flow}_{\abs{\hat{\flow}}}$ consists of unit path flows, each $\hat{\flow}_i$ evaluated on edge $\innode_{\subconfig}\internode_{\subconfig}$ is either 0 or 1, so evaluating the flow $\flowsummation_{i=1}^{\abs{\hat{\flow}}} \hat{\flow}_{i}$ on edge $\innode_{\subconfig}\internode_{\subconfig}$ tells us precisely how many flows in the collection have $\innode_{\subconfig}\internode_{\subconfig}$ as a positive edge. But if a unit path flow has $\innode_{\subconfig}\internode_{\subconfig}$ as a positive edge, then the underlying path of the flow has $\innode_{\subconfig}$ as its source. Thus, the value of $\left(\flowsummation_{i=1}^{\abs{\hat{\flow}}} \hat{\flow}_{i}\right)(\innode_{\subconfig}\internode_{\subconfig})$ is exactly the number of unit path flows from the collection whose underlying path has node $\innode_{\subconfig} \in X$ as its source. By assumption, $\flowsummation_{i=1}^{\abs{\hat{\flow}}} \hat{\flow}_{i}$ is a subflow of $\hat{\flow}$, so the value of $\left(\flowsummation_{i=1}^{\abs{\hat{\flow}}} \hat{\flow}_{i}\right)(\innode_{\subconfig}\internode_{\subconfig})$ is bounded above by $\hat{\flow}(\innode_{\subconfig}\internode_{\subconfig})$. Since $\begin{bmatrix} \hat{\flow} & \hat{\demand} \end{bmatrix}$ is a feasible solution to $\partitionILP$, Constraints \ref{con:one} and $\ref{con:seven}$ of $\partitionILP$ imply that $\hat{\flow}(\innode_{\subconfig}\internode_{\subconfig}) = -\hat{\demand}(\innode_{\subconfig}) = \abs{\hat{\demand}(\innode_{\subconfig})}$.
		Thus, we have shown that there can be at most $\abs{\hat{\demand}(\innode_{\subconfig})}$ unit path flows from the collection whose underlying path has node $\innode_{\subconfig}$ as its source. To complete the proof that there are exactly $\abs{\hat{\demand}(\innode_{\subconfig})}$ unit path flows from the collection whose underlying path has node $\innode_{\subconfig}$ as its source, we give a proof by contradiction: assume that there are strictly fewer than $\abs{\hat{\demand}(\innode_{\subconfig})}$ unit path flows from the collection whose underlying path has node $\innode_{\subconfig}$ as its source. Since the underlying path of each of the $\abs{\hat{\flow}}$ unit path flows in the collection has exactly one source node from $X$, we obtain:
		\begin{align*}
			\abs{\hat{\flow}} &= \sum_{\innode_\subconfig \in X}\left[ \textrm{number of flows in the collection with $\innode_{\subconfig}$ as its source} \right]\\
			&< \sum_{\subconfig \in \bsubs} \abs{\hat{\demand}(\innode_{\subconfig})} \tag{by assumption and previous paragraph} \\
			&= - \sum_{\subconfig \in \bsubs} \hat{\demand}(\innode_{\subconfig}) \tag{Constraint \ref{con:seven}}\\
			&= \sum_{\subconfig \in \bsubs} \hat{\flow}(\outedges(\innode_\subconfig)) \tag{Constraint \ref{con:one}} \\
			&= \abs{\hat{\flow}} \tag{by definition of $\abs{\hat{\flow}}$}
		\end{align*}
		which implies that $\abs{\hat{\flow}} < \abs{\hat{\flow}}$, a contradiction.
		
		The proof of the second statement is symmetric to the above proof and differs by only a few minor details. To prove the second statement, consider any fixed $\subconfig \in \bsubs$. By the construction of $\graph$, there is exactly one edge incident to $\outnode_{\subconfig}$, i.e., the incoming edge $\internode_{\subconfig}\outnode_{\subconfig}$. Since the collection $\hat{\flow}_1,\ldots,\hat{\flow}_{\abs{\hat{\flow}}}$ consists of unit path flows, each $\hat{\flow}_i$ evaluated on edge $\internode_{\subconfig}\outnode_{\subconfig}$ is either 0 or 1, so evaluating the flow $\flowsummation_{i=1}^{\abs{\hat{\flow}}} \hat{\flow}_{i}$ on edge $\internode_{\subconfig}\outnode_{\subconfig}$ tells us precisely how many flows in the collection have $\internode_{\subconfig}\outnode_{\subconfig}$ as a positive edge. But if a unit path flow has $\internode_{\subconfig}\outnode_{\subconfig}$ as a positive edge, then the underlying path of the flow has $\outnode_{\subconfig}$ as its sink. Thus, the value of $\left(\flowsummation_{i=1}^{\abs{\hat{\flow}}} \hat{\flow}_{i}\right)(\internode_{\subconfig}\outnode_{\subconfig})$ is exactly the number of unit path flows from the collection whose underlying path has node $\outnode_{\subconfig} \in Z$ as its sink. By assumption, $\flowsummation_{i=1}^{\abs{\hat{\flow}}} \hat{\flow}_{i}$ is a subflow of $\hat{\flow}$, so the value of $\left(\flowsummation_{i=1}^{\abs{\hat{\flow}}} \hat{\flow}_{i}\right)(\internode_{\subconfig}\outnode_{\subconfig})$ is bounded above by $\hat{\flow}(\internode_{\subconfig}\outnode_{\subconfig})$. Since $\begin{bmatrix} \hat{\flow} & \hat{\demand} \end{bmatrix}$ is a feasible solution to $\partitionILP$, Constraints \ref{con:three} and $\ref{con:seven}$ of $\partitionILP$ imply that $\hat{\flow}(\internode_{\subconfig}\outnode_{\subconfig}) = \hat{\demand}(\outnode_{\subconfig}) = \abs{\hat{\demand}(\outnode_{\subconfig})}$.
		Thus, we have shown that there can be at most $\abs{\hat{\demand}(\outnode_{\subconfig})}$ unit path flows from the collection whose underlying path has node $\outnode_{\subconfig}$ as its sink. To complete the proof that there are exactly $\abs{\hat{\demand}(\outnode_{\subconfig})}$ unit path flows from the collection whose underlying path has node $\outnode_{\subconfig}$ as its sink, we give a proof by contradiction: assume that there are strictly fewer than $\abs{\hat{\demand}(\outnode_{\subconfig})}$ unit path flows from the collection whose underlying path has node $\outnode_{\subconfig}$ as its sink. Since the underlying path of each of the $\abs{\hat{\flow}}$ unit path flows in the collection has exactly one sink node from $Z$, we obtain:
		\begin{align*}
			\abs{\hat{\flow}} &= \sum_{\outnode_\subconfig \in Z}\left[ \textrm{number of flows in the collection with $\outnode_{\subconfig}$ as its sink} \right]\\
			&< \sum_{\subconfig \in \bsubs} \abs{\hat{\demand}(\outnode_{\subconfig})} \tag{by assumption and previous paragraph} \\
			&= \sum_{\subconfig \in \bsubs} \hat{\demand}(\outnode_{\subconfig}) \tag{Constraint \ref{con:seven}}\\
			&= \sum_{\subconfig \in \bsubs} \hat{\flow}(\inedges(\outnode_\subconfig)) \tag{Constraint \ref{con:three}} \\
			&= \sum_{\subconfig \in \bsubs} \hat{\flow}(\outedges(\innode_\subconfig)) \tag{by Lemma \ref{lem:conserve}} \\
			&= \abs{\hat{\flow}} \tag{by definition of $\abs{\hat{\flow}}$}
		\end{align*}
		which implies that $\abs{\hat{\flow}} < \abs{\hat{\flow}}$, a contradiction.
	\end{proof}
	
	Consider a fixed $\subconfig \in \bsubs$. From Lemma \ref{lem:create_partition}, our assignment $\sourceassignment$ has exactly $\abs{\hat{\demand}(\innode_\subconfig)}$ portions that are equal to $\subconfig$, and, from Lemma \ref{lem:number_of_unit_path_flows}, there are exactly $\abs{\hat{\demand}(\innode_{\subconfig})}$ unit path flows from $\hat{\flow}_1,\ldots,\hat{\flow}_{\abs{\hat{\flow}}}$ whose underlying path has node $\innode_{\subconfig} \in X$ as its source. As we will show next, this means that we can create a correspondence between portions and unit path flows. In particular, for the first portion of $\sourceassignment$, which is equal to some $\subconfig \in \bsubs$, we associate one of the unit path flows whose underlying path has $\innode_{\subconfig} \in X$ as its source, and use the edges of this underlying path to create a sequence of adjacent assignments that differ only in the first portion. Then, from the resulting configuration, we take the second portion, which is equal to some $\subconfigtwo \in \bsubs$, associate one of the unit path flows whose underlying path has $\innode_{\subconfigtwo} \in X$ as its source, and use the edges of this underlying path to create a sequence of adjacent assignments that differ only in the second portion. Proceeding in this manner until each of the portions in $\sourceassignment$ has been reconfigured, the resulting sequence of assignments will be a $\partition$-conforming reconfiguration sequence for capacity $\capacity$ from $\source$ to $\target$.
	
	Algorithm \ref{alg:create_partition_sequence} specifies how to implement the procedure described above to produce the desired $\partition$-conforming reconfiguration sequence. For any node $v \in \vertexset$, we write $v.\mathit{subconf}$ to denote the subconfiguration associated with $v$, e.g., $\internode_\subconfig.\mathit{subconf}$ is $\subconfig$. To ensure a one-to-one correspondence between the portions in $\sourceassignment$ and the unit path flows, we maintain an array $\mathit{used}$ that keeps track of which unit path flows have been associated with a portion so far.
	
	\begin{algorithm}
		\caption{Construct a $\partition$-conforming reconfiguration sequence given $\begin{bmatrix} \hat{\flow} & \hat{\demand} \end{bmatrix}$}\label{alg:create_partition_sequence}
		\begin{algorithmic}[1]
			\Require A feasible solution $\begin{bmatrix} \hat{\flow} & \hat{\demand} \end{bmatrix}$ to $\partitionILP$
			\Ensure A sequence $\areconfsequence$ of assignments of $\partition$ for capacity $\capacity$
			\State Construct ${\sourceassignment}$ as in Lemma \ref{lem:create_partition}
			\State $\areconfsequence \gets \sequence{\sourceassignment}$\label{line:firstassign}
			\State $\hat{\flow}_1, \ldots, \hat{\flow}_{\abs{\hat{\flow}}} \gets$ the $\abs{\hat{\flow}}$ unit path flows from $\hat{\flow}$ given by Lemma \ref{lem:flow_decomp}
			\State $\mathit{used} \gets $ an array initialized with $\mathit{used}[1] = \cdots = \mathit{used}[\abs{\hat{\flow}}] = \mathrm{False}$\label{line:usedinit}
			\For{$i = 1, \ldots, \abs{\sourceassignment}$}\label{line:iloop}
			\State $\assignment \gets$ the last assignment in $\areconfsequence$
			\State $\subconfig \gets \portion{\assignment}{i}$ \label{line:getAsub}
			\State $j \gets$ an index such that $\mathit{used}[j]$ is False and $\hat{\flow}_j(\innode_\subconfig\internode_\subconfig)=1$\label{line:choosej}
			\State $\mathbf{S} \gets$ the sequence of vertices in the underlying path of $\hat{\flow}_j$ {\color{gray}\Comment{$\mathbf{S}[1]=\innode_\subconfig$}\label{line:defineS}}
			\For{$k = 3,\ldots,|\mathbf{S}|-1$} \label{line:kloop}
			\State $\assignment' \gets \assignment$ but with the $i$th portion replaced with $\mathbf{S}[k].\mathit{subconf}$\label{line:computeA'}
			\State $\areconfsequence$.append($\assignment'$)\label{line:append}
			\State $\assignment \gets \assignment'$
			\EndFor
			\State $\mathit{used}[j] \gets$ True
			\EndFor
			\State \Return $\areconfsequence$
		\end{algorithmic}
	\end{algorithm}
	
	\begin{remark}\label{rem:paths}
		Recall from the definition of $\edgeset$ that there is no edge that has one endpoint in $\inputnodes$ and one endpoint in $\outputnodes$. In particular, each $(\inputnodes,\outputnodes)$-path in $G$ is of the form $(\innode_\subconfig,\internode_\subconfig,\ldots,\internode_\subconfigtwo,\outnode_\subconfigtwo)$ for some $\subconfig,\subconfigtwo \in \bsubs$. This means that each $(\inputnodes,\outputnodes)$-path has length at least two, which implies that $\abs{\mathbf{S}} \geq 3$ in Algorithm \ref{alg:create_partition_sequence}. The $(\inputnodes,\outputnodes)$-paths of length exactly two are those of the form $(\innode_\subconfig,\internode_\subconfig,\outnode_\subconfig)$ for $\subconfig \in \bsubs$.
	\end{remark}
	
	\begin{remark}\label{rem:exclude}
		At line \ref{line:kloop}, we start the loop counter at 3 so that we exclude the first two nodes of $\mathbf{S}$ when extending $\areconfsequence$. This is because the first two nodes of $\mathbf{S}$ are $\mathbf{S}[1]=\innode_\subconfig$ and $\mathbf{S}[2]=\internode_\subconfig$, where subconfiguration $\subconfig$ is the same as $\portion{\assignment}{i}$. In particular, the edge $\innode_\subconfig\internode_\subconfig$ has two endpoints that correspond to the same subconfiguration, so we don't use it to define reconfiguration step. Similarly, we end the loop counter at $\abs{\mathbf{S}}-1$ so that we exclude the final node $\mathbf{S}[\abs{\mathbf{S}}]$, which is $\outnode_\subconfigtwo$ for some $\subconfigtwo \in \bsubs$, when extending $\areconfsequence$. This is because $\mathbf{S}[\abs{\mathbf{S}}-1]=\internode_\subconfigtwo$, which means that the final two nodes of $\mathbf{S}$ have the same corresponding subconfiguration $\subconfigtwo$, and so the edge $\internode_\subconfigtwo\outnode_\subconfigtwo$ is not used to define a reconfiguration step. All other edges in the underlying $(\inputnodes,\outputnodes)$-path of $\hat{\flow}_j$ are of the form $\internode_\subconfigtwo\internode_\subconfigthree$ with $\subconfigtwo \neq \subconfigthree$, so, for each $k \in \{3,\ldots,\abs{\mathbf{S}}-1\}$, the two nodes $\mathbf{S}[k-1]$ and $\mathbf{S}[k]$ correspond to two adjacent nodes $\internode_\subconfigtwo$,$\internode_\subconfigthree$ with $\subconfigtwo \neq \subconfigthree$ in $G$.
	\end{remark}
	
	\begin{remark}\label{rem:skiploop}
		When $\abs{\mathbf{S}} = 3$, the \textbf{for} loop at line \ref{line:kloop} is not entered since the range $3,\ldots,\abs{\mathbf{S}}-1$ is empty, and hence no assignment is appended to $\areconfsequence$ for the current value of $i$. In this case, the $i$th portion will not change in our reconfiguration sequence $\areconfsequence$, i.e., $\portion{\sourceassignment}{i} = \portion{\targetassignment}{i}$.
	\end{remark}

	Before proving the correctness of Algorithm \ref{alg:create_partition_sequence}, we first show in Lemma \ref{lem:used_index_j} that the operation on line \ref{line:choosej} of Algorithm \ref{alg:create_partition_sequence} is well-defined, i.e., there is always a value that can be chosen for $j$.
	
	\begin{lemma}\label{lem:used_index_j}
		At each execution of line \ref{line:choosej} in Algorithm \ref{alg:create_partition_sequence}, there exists a value $j$ such that $\mathit{used}[j]$ is False and $\hat{\flow}_j(\innode_\subconfig\internode_\subconfig)=1$.
	\end{lemma}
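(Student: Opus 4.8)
The plan is a pigeonhole/counting argument. I will show that when line~\ref{line:choosej} is reached in iteration $i$ of the main loop (line~\ref{line:iloop}) with $\subconfig=\portion{\assignment}{i}$, strictly fewer than $\abs{\hat{\demand}(\innode_\subconfig)}$ of the unit path flows $\hat{\flow}_j$ satisfying $\hat{\flow}_j(\innode_\subconfig\internode_\subconfig)=1$ have been marked used, while there are exactly $\abs{\hat{\demand}(\innode_\subconfig)}$ such flows in total; hence at least one remains to be selected. The first ingredient is a structural observation about Algorithm~\ref{alg:create_partition_sequence}: the body of the loop at line~\ref{line:iloop} for an index $i'$ only ever replaces the $i'$th portion of the running assignment (line~\ref{line:computeA'}). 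Therefore, when line~\ref{line:choosej} is executed during iteration $i$, every portion with index $\geq i$ of the current assignment $\assignment$ still agrees with $\sourceassignment$; in particular the subconfiguration $\subconfig$ obtained at line~\ref{line:getAsub} satisfies $\subconfig=\portion{\assignment}{i}=\portion{\sourceassignment}{i}$. This also absorbs the degenerate case of Remark~\ref{rem:skiploop} uniformly, since even when no assignment is appended in an iteration, $\mathit{used}$ still acquires exactly one new True entry.

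Next I would proceed by induction on $i$, assuming line~\ref{line:choosej} succeeded at every earlier iteration. Then exactly $i-1$ entries of $\mathit{used}$ are True: at each iteration $i'<i$ an index $j_{i'}$ was fixed with $\hat{\flow}_{j_{i'}}(\innode_{\subconfig_{i'}}\internode_{\subconfig_{i'}})=1$, where $\subconfig_{i'}=\portion{\sourceassignment}{i'}$ by the observation above. The crucial point is that the underlying path of a unit path flow is an $(\inputnodes,\outputnodes)$-path, which by the construction of $\graph$ and Remark~\ref{rem:paths} originates from exactly one node of $\inputnodes$; hence for each $j$ there is a unique $\subconfigtwo\in\bsubs$ with $\hat{\flow}_j(\innode_{\subconfigtwo}\internode_{\subconfigtwo})=1$. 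Consequently $\hat{\flow}_{j_{i'}}(\innode_\subconfig\internode_\subconfig)=1$ holds precisely when $\subconfig_{i'}=\subconfig$, so the number of currently-used indices $j$ with $\hat{\flow}_j(\innode_\subconfig\internode_\subconfig)=1$ equals $\abs{\set{i'<i : \portion{\sourceassignment}{i'}=\subconfig}}$.

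Finally I would close the count. By Lemma~\ref{lem:create_partition}, $\sourceassignment$ has exactly $\abs{\hat{\demand}(\innode_\subconfig)}$ portions equal to $\subconfig$; since $i$ itself is an index with $\portion{\sourceassignment}{i}=\subconfig$, this gives $\abs{\set{i'<i : \portion{\sourceassignment}{i'}=\subconfig}}\leq\abs{\hat{\demand}(\innode_\subconfig)}-1$. By Lemma~\ref{lem:number_of_unit_path_flows}, exactly $\abs{\hat{\demand}(\innode_\subconfig)}$ of the $\hat{\flow}_j$ have $\innode_\subconfig$ as the source of their underlying path, i.e.\ satisfy $\hat{\flow}_j(\innode_\subconfig\internode_\subconfig)=1$. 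Hence strictly fewer than $\abs{\hat{\demand}(\innode_\subconfig)}$ of these are used, so an unused index $j$ with $\hat{\flow}_j(\innode_\subconfig\internode_\subconfig)=1$ exists, which is exactly what line~\ref{line:choosej} needs; this completes the inductive step and the proof.

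I expect the only real difficulty to be bookkeeping rather than any deep idea. Two points need care: justifying that iteration $i$ truly sees $\portion{\sourceassignment}{i}$ and not some portion already altered by an earlier iteration, and making precise that each unit path flow contributes to exactly one ``block'' of indices, the one indexed by its source node, so that the used indices within the $\subconfig$-block are in bijection with the earlier iterations that processed $\subconfig$. Once these are pinned down, the pigeonhole conclusion is immediate.
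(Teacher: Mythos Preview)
Your proposal is correct and follows essentially the same approach as the paper: both arguments reduce to counting, via Lemma~\ref{lem:create_partition} and Lemma~\ref{lem:number_of_unit_path_flows}, that the number of portions of $\sourceassignment$ equal to $\subconfig$ matches the number of unit path flows with source $\innode_\subconfig$, so that at iteration $i$ at least one such flow remains unused. The paper phrases this as a loop invariant (the two counts stay equal throughout), whereas you phrase it as a direct pigeonhole bound; your explicit justification that $\portion{\assignment}{i}=\portion{\sourceassignment}{i}$ because earlier iterations only modify lower-indexed portions is a point the paper leaves implicit.
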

	
	\begin{proof}
		It is sufficient to prove the following loop invariant: immediately before each iteration of the \textbf{for} loop at line \ref{line:iloop}, for each $\subconfig \in \bsubs$, the number of remaining loop iterations with $\portion{\assignment}{i}=\subconfig$ is equal to the number of values of $j$ such that $\mathit{used}[j]$ is False and $\hat{\flow}_j(\innode_\subconfig\internode_\subconfig)=1$.
		
		We first prove that the invariant is true before the first iteration of the \textbf{for} loop at line \ref{line:iloop}. For any fixed $\subconfig \in \bsubs$, we know from Lemma \ref{lem:create_partition} that $\sourceassignment$ has exactly $\abs{\hat{\demand}(\innode_\subconfig)}$ portions that are equal to $\subconfig$, so there are exactly $\abs{\hat{\demand}(\innode_\subconfig)}$ different values for $i$ such that $\portion{\assignment}{i}=\subconfig$ in the \textbf{for} loop at line \ref{line:iloop}. Moreover, for any fixed $\subconfig \in \bsubs$, we know from Lemma \ref{lem:number_of_unit_path_flows} that there are exactly $\abs{\hat{\demand}(\innode_{\subconfig})}$ different unit path flows from $\hat{\flow}_1,\ldots,\hat{\flow}_{\abs{\hat{\flow}}}$ whose underlying path has node $\innode_{\subconfig} \in X$ as its source, which means that there are exactly $\abs{\hat{\demand}(\innode_{\subconfig})}$ values of $j$ such that $\hat{\flow}_j(\innode_\subconfig\internode_\subconfig)=1$. Thus, for each fixed $\subconfig \in \bsubs$, before the first iteration of the \textbf{for} loop at line \ref{line:iloop}, the number of $i$ values with $\portion{\assignment}{i}=\subconfig$ is equal to the number of values of $j$ such that $\hat{\flow}_j(\innode_\subconfig\internode_\subconfig)=1$, and by line \ref{line:usedinit}, we know that all such values of $j$ have $\mathit{used}[j]$ set to False.
		
		At the end of any iteration of the \textbf{for} loop at line \ref{line:iloop}, we set $\mathit{used}[j]$ to True for exactly one value of $j$ such that $\hat{\flow}_j(\innode_\subconfig\internode_\subconfig)=1$, where $\subconfig = \portion{\assignment}{i}$. So, at the end of the loop iteration, the number of remaining loop iterations with $\portion{\assignment}{i}=\subconfig$ goes down by one, and the number of values of $j$ such that $\mathit{used}[j]$ is False and $\hat{\flow}_j(\innode_\subconfig\internode_\subconfig)=1$ also goes down by one, which implies that the loop invariant holds at the start of the next loop iteration.
	\end{proof}

	Next, in Lemma~\ref{lem:properties_of_A_1}, we show that the output $\areconfsequence$ of Algorithm \ref{alg:create_partition_sequence} is a $\partition$-conforming reconfiguration sequence for capacity $\capacity$ from $\source$ to $\target$.
	
	\begin{lemma}\label{lem:properties_of_A_1}
		The sequence $\areconfsequence$ returned by Algorithm \ref{alg:create_partition_sequence} satisfies:
		\begin{enumerate}[label=(\arabic*)]
			\item Each pair of consecutive assignments in $\areconfsequence$ are adjacent assignments, and, the portions that differ across the two assignments both belong to $\bsubs$\label{A:prop1}
			\item For each $\assignment \in \areconfsequence$, $\assignment$ is an assignment of $\partition$ for capacity $\capacity$.\label{A:prop2}
			\item The first assignment and last assignment in $\areconfsequence$ are consistent with $\source$ and $\target$, respectively.\label{A:prop3}
		\end{enumerate}
	\end{lemma}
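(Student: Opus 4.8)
The plan is to verify the three properties by tracking the state of $\areconfsequence$ through the execution of Algorithm~\ref{alg:create_partition_sequence}, the key point throughout being that traversing an edge $\internode_\subconfigtwo\internode_\subconfigthree$ of $G$ encodes a single-item move, and a single-item move changes neither the underlying multiset of items nor the number of \bins of a subconfiguration (the latter because a move's donor and recipient \bins both persist); also, the subconfiguration attached to any node of $G$ lies in $\bsubs$, so its \bins have volume at most $\capacity$. I would prove Property~\ref{A:prop2} by induction on the assignments appended to $\areconfsequence$. The base case is $\sourceassignment$, which is an assignment of $\partition$ for capacity $\capacity$ by Lemma~\ref{lem:create_partition}. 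For the step, note first that before outer iteration $i$ none of the iterations $1,\dots,i-1$ has touched the $i$th portion, so at line~\ref{line:getAsub} we have $\portion{\assignment}{i}=\portion{\sourceassignment}{i}$, which is some $\subconfig\in\bsubs$ whose underlying set is the $i$th part's multiset $\universe_i$ and whose number of \bins is $\binnumber_i$ (by how $\partition$ and $\sourceassignment$ are built in Lemma~\ref{lem:create_partition}). Since $\hat{\flow}_j$ is a unit path flow with $\hat{\flow}_j(\innode_\subconfig\internode_\subconfig)=1$, its underlying path $\mathbf{S}$ starts at $\innode_\subconfig$; hence $\mathbf{S}[2].\mathit{subconf}=\subconfig$, and by Remark~\ref{rem:exclude} each later node $\mathbf{S}[k]$ is an intermediate node $\internode_{\subconfig'}$ with $\subconfig'$ adjacent to $\mathbf{S}[k-1].\mathit{subconf}$. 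By the observation above, induction on $k$ shows every such $\subconfig'$ has underlying set $\universe_i$ and $\binnumber_i$ \bins, and \bin volumes at most $\capacity$ since $\subconfig'\in\bsubs$. Each appended assignment $\assignment'$ agrees with the previous one on all portions but the $i$th, which it replaces by such a $\subconfig'$, so $\assignment'$ is again an assignment of $\partition$ for capacity $\capacity$.

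For Property~\ref{A:prop1}, I would observe that every consecutive pair in $\areconfsequence$ has the form $(\assignment,\assignment')$ where $\assignment'$ is $\assignment$ with the $i$th portion changed from $\mathbf{S}[k-1].\mathit{subconf}$ to $\mathbf{S}[k].\mathit{subconf}$, for some outer index $i$ and some $k\in\{3,\dots,\abs{\mathbf{S}}-1\}$; this remains true across the boundary between outer iterations, and when $\abs{\mathbf{S}}=3$ nothing is appended (Remark~\ref{rem:skiploop}). By Remark~\ref{rem:exclude}, for such $k$ the nodes $\mathbf{S}[k-1]$ and $\mathbf{S}[k]$ are intermediate nodes $\internode_\subconfigtwo,\internode_\subconfigthree$ with $\subconfigtwo\neq\subconfigthree$ joined by an edge of $G$, so $\subconfigtwo,\subconfigthree$ are adjacent subconfigurations; since (by the argument for Property~\ref{A:prop2}) both have underlying set $\universe_i$, they are adjacent viewed as configurations with underlying set $\universe_i$, and since $\subconfigtwo\neq\subconfigthree$ the $i$th portion genuinely changes, so $\assignment$ and $\assignment'$ differ in exactly the index $i$. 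Both differing portions are subconfigurations attached to nodes of $G$, hence belong to $\bsubs$.

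For Property~\ref{A:prop3}, the first assignment is $\sourceassignment$ by line~\ref{line:firstassign}, consistent with $\source$ by Lemma~\ref{lem:create_partition}. For the last assignment $\targetassignment$: since outer iteration $i$ only edits the $i$th portion and leaves it equal to $\mathbf{S}[\abs{\mathbf{S}}-1].\mathit{subconf}$, and since $\mathbf{S}$ is an $(\inputnodes,\outputnodes)$-path ending $(\dots,\internode_\subconfigtwo,\outnode_\subconfigtwo)$ (Remark~\ref{rem:paths}), we get $\portion{\targetassignment}{i}=\subconfigtwo$ where $\outnode_\subconfigtwo$ is the sink of the flow $\hat{\flow}_j$ assigned to portion $i$ (this also covers $\abs{\mathbf{S}}=3$, where $\subconfig=\subconfigtwo$). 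The $\mathit{used}$ bookkeeping together with Lemma~\ref{lem:used_index_j} makes $i\mapsto j$ a bijection between the portions of $\sourceassignment$ and the unit path flows, so for each $B\in\bintypesnoparameters$, grouping portions of $\targetassignment$ by sink and invoking Lemma~\ref{lem:number_of_unit_path_flows} gives $\sum_{\assignmentpart\in\targetassignment}\binmult{B}{\assignmentpart}=\sum_{\subconfigtwo\in\bsubs}\binmult{B}{\subconfigtwo}\cdot\abs{\hat{\demand}(\outnode_\subconfigtwo)}$, which equals $\binmult{B}{\target}$ by Constraints~\ref{con:seven} and~\ref{con:five}; hence $\targetassignment$ is consistent with $\target$.

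The step I expect to be the main obstacle is the invariant behind Property~\ref{A:prop2} — that every subconfiguration ever installed as the $i$th portion has underlying set exactly $\universe_i$ and exactly $\binnumber_i$ \bins. It hinges on the intuitively obvious but slightly delicate fact that an edge of $G$ encodes a single-item move and such a move preserves both the item multiset and the \bin count, and it must be coupled with the bookkeeping that the $i$th portion still equals $\portion{\sourceassignment}{i}$ at the start of iteration $i$. The analogous bookkeeping for Property~\ref{A:prop1} across iteration boundaries and in the degenerate case $\abs{\mathbf{S}}=3$ is routine but also needs to be stated carefully.
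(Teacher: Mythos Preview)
Your proposal is correct and follows essentially the same approach as the paper. The only difference is organizational: the paper proves Property~\ref{A:prop1} first and then invokes it in the induction for Property~\ref{A:prop2}, whereas you establish the invariant about underlying sets and \bin counts directly in Property~\ref{A:prop2} and then reuse it for Property~\ref{A:prop1}; the handling of Property~\ref{A:prop3} (identifying $\portion{\targetassignment}{i}$ with the sink's subconfiguration, then counting via Lemma~\ref{lem:number_of_unit_path_flows} and Constraints~\ref{con:five} and~\ref{con:seven}) is identical.
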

	
	\begin{proof}
		To prove \ref{A:prop1}, we consider an arbitrary pair of consecutive assignments $\assignment, \assignment'$ in $\areconfsequence$, and consider the moment in the execution when $\assignment'$ was appended to $\areconfsequence$. Since $\assignment'$ is not the first assignment in $\areconfsequence$, it follows that $\assignment'$ was appended to $\areconfsequence$ at line \ref{line:append}. By line \ref{line:computeA'}, assignment $\assignment'$ is the same as $\assignment$ except that the $i$th portion of $\assignment$ is replaced with $\mathbf{S}[k].\mathit{subconf}$. So, to prove \ref{A:prop1}, it suffices to show that, when line \ref{line:computeA'} is executed, $\portion{\assignment}{i}$ and $\mathbf{S}[k].\mathit{subconf}$ are adjacent subconfigurations that both belong $\bsubs$. There are two cases to consider based on the value of the loop counter $k$ when line \ref{line:computeA'} is executed:
		
		\begin{description}
			\item[Case $k > 3$]: 
			In this case, note that $\assignment$ was appended to $\areconfsequence$ in the previous iteration of the \textbf{for} loop at line \ref{line:kloop}. In particular, by line \ref{line:computeA'}, the $i$th portion of $\assignment$ is equal to $\mathbf{S}[k-1].\mathit{subconf}$. By the definition of $\mathbf{S}$ at line \ref{line:defineS}, and by Remark \ref{rem:exclude}, we know that $\mathbf{S}[k-1]$ and $\mathbf{S}[k]$ are two endpoints of an edge $\internode_\subconfigtwo\internode_\subconfigthree$ with $\subconfigtwo \neq \subconfigthree$ in $G$. By the construction of $G$, the fact that $\internode_\subconfigtwo$ and $\internode_\subconfigthree$ are adjacent in $G$ means that $\subconfigtwo$ and $\subconfigthree$ are adjacent subconfigurations from $\bsubs$, and this concludes the proof that $\portion{\assignment}{i}$ and $\mathbf{S}[k].\mathit{subconf}$ are adjacent subconfigurations that both belong to $\bsubs$.
			
			\item[Case $k = 3$]: 
			In this case, by line \ref{line:getAsub}, the $i$th portion of $\assignment$ is equal to $\subconfig$. Moreover, the value of $j$ was chosen at line \ref{line:choosej} such that $\hat{\flow}_j(\innode_\subconfig\internode_\subconfig)=1$, which implies that the first two nodes of the underlying $(\inputnodes,\outputnodes)$-path of $\hat{\flow}_j$ are $\innode_\subconfig$ and $\internode_\subconfig$. By the definition of $\mathbf{S}$ at line \ref{line:defineS}, and by Remarks \ref{rem:exclude} and \ref{rem:skiploop}, we know that $\mathbf{S}[2]=\internode_\subconfig$ and $\mathbf{S}[3]$ are two endpoints of an edge $\internode_\subconfig\internode_\subconfigthree$ with $\subconfig \neq \subconfigthree$ in $G$. By the construction of $G$, the fact that $\internode_\subconfig$ and $\internode_\subconfigthree$ are adjacent in $G$ means that $\subconfig$ and $\subconfigthree$ are adjacent subconfigurations from $\bsubs$, and this concludes the proof that $\portion{\assignment}{i}$ and $\mathbf{S}[3].\mathit{subconf}$ are adjacent subconfigurations that both belong to $\bsubs$.
		\end{description}

		To prove \ref{A:prop2}, we use induction. For the base of the induction, note that $\sourceassignment$ is an assignment of $\partition$ for capacity $\capacity$ by Lemma \ref{lem:create_partition}. As induction hypothesis, assume that for some $\ell \geq 1$, the $\ell$th assignment in $\areconfsequence$, denoted by $\assignment$, is an assignment of $\partition$ for capacity $\capacity$. For the induction step, consider the $(\ell+1)$st assignment in $\areconfsequence$, denoted by $\assignment'$. By \ref{A:prop1}, assignments $\assignment$ and $\assignment'$ are adjacent, and, the portions that differ across the two assignments both belong to $\bsubs$. By the definition of adjacent assignments, there is exactly one index $i$ such that $\portion{\assignment}{i}$ and $\portion{\assignment'}{i}$ differ, and these two portions are adjacent subconfigurations. However, by the definition of adjacent subconfigurations, one of the portions can be formed from the other by the move of a single item, which implies that the two portions have the same underlying multiset of items and the same number of \bins. Further, the fact that $\portion{\assignment'}{i}$ is in $\bsubs$ means that all \bins in $\portion{\assignment'}{i}$ have capacity $\capacity$. It follows that $\assignment'$ is also an assignment of $\partition$ for capacity $\capacity$, which completes the induction step.
		
		To prove \ref{A:prop3}, note that the first assignment in $\areconfsequence$ is $\sourceassignment$ by line \ref{line:firstassign}, and that $\sourceassignment$ is consistent with $\source$ by Lemma \ref{lem:create_partition}. Next, we want show that the last assignment in $\areconfsequence$, denoted by $\targetassignment$, is consistent with the target configuration $\target$. At a high level, we will count the number of times that each subconfiguration $\subconfigtwo \in \bsubs$ appears in $\targetassignment$, and then use this to argue that each \bin type appears the same number of times in $\targetassignment$ as in $\target$.

		Consider an arbitrary $i \in 1,\ldots,\abs{\sourceassignment}$, and consider the value assigned to $j$ at line \ref{line:choosej} during the $i$th iteration of the \textbf{for} loop at line \ref{line:iloop}. Recall that, at line \ref{line:defineS}, the variable $\mathbf{S}$ is set to be the sequence of vertices on the underlying $(\inputnodes,\outputnodes)$-path of the unit path flow $\hat{\flow}_j$, and by Remark \ref{rem:paths}, we know that $\mathbf{S}[\abs{\mathbf{S}}] = \outnode_\subconfigtwo$ and $\mathbf{S}[\abs{\mathbf{S}}-1] = \internode_\subconfigtwo$ for some $\subconfigtwo \in \bsubs$. We claim that:
		\begin{enumerate}[label=(\roman*)]
			\item at the end of the $i$th iteration of the \textbf{for} loop at line \ref{line:iloop}, the last assignment in $\areconfsequence$, denoted by $\assignment_i$, has $\portion{\assignment_i}{i} = \subconfigtwo$, and,\label{A:prop3i}
			\item $\portion{\targetassignment}{i} = \subconfigtwo$.\label{A:prop3ii}
		\end{enumerate}
		
		To prove \ref{A:prop3i}, there are two cases to consider depending on whether or not line \ref{line:append} is executed during the $i$th iteration. By Remarks \ref{rem:paths} and \ref{rem:exclude}, we know that $\abs{\mathbf{S}} \geq 3$ and that line \ref{line:append} is executed during the $i$th iteration if and only if $\abs{\mathbf{S}} > 3$.
		\begin{description}
			\item[Case $\abs{\mathbf{S}} > 3$] : Consider the final iteration of the \textbf{for} loop at line \ref{line:kloop}, and consider the assignment $\assignment'$ appended to $\areconfsequence$ at line \ref{line:append}. By line \ref{line:computeA'}, the $i$th portion of $\assignment'$ is equal to $\mathbf{S}[\abs{\mathbf{S}}-1].\mathit{subconf}$. However, since $\mathbf{S}[\abs{\mathbf{S}}-1] = \internode_\subconfigtwo$, we conclude that the $i$th portion of $\assignment'$ is equal to $\subconfigtwo$. Since no further assignments are appended to $\areconfsequence$ during the $i$th iteration of the \textbf{for} loop at line \ref{line:iloop}, we conclude that $\portion{\assignment_i}{i} = \subconfigtwo$, as desired.
			\item[Case $\abs{\mathbf{S}} = 3$] : Since line \ref{line:append} is not executed during the $i$th iteration, it follows that no assignments are appended to $\areconfsequence$ during this iteration. Thus, by line \ref{line:getAsub}, we know that $\portion{\assignment_i}{i} = \subconfig$. However, by Remark \ref{rem:paths}, the fact that $\abs{\mathbf{S}} = 3$ means that the underlying $(\inputnodes,\outputnodes)$-path of the unit path flow $\hat{\flow}_j$ is $(\innode_\subconfig,\internode_\subconfig,\outnode_\subconfig)$. Since $\subconfigtwo$ was defined as the subconfiguration in the subscripts of the last two nodes of the underlying path, it follows that $\subconfigtwo = \subconfig$, and we conclude that $\portion{\assignment_i}{i} = \subconfigtwo$, as desired.
		\end{description}
		
		To prove \ref{A:prop3ii}, we note that, by \ref{A:prop3i}, at the end of the $i$th iteration of the \textbf{for} loop at line \ref{line:iloop}, the last assignment $\assignment_i$ in $\areconfsequence$ has $\portion{\assignment_i}{i} = \subconfigtwo$. Any assignment that is subsequently appended to $\areconfsequence$ will be appended at line \ref{line:append} during the $(i+c)$th iteration of the \textbf{for} loop at line \ref{line:iloop} for some $c > 0$. Note that each time that line \ref{line:computeA'} is executed during the $(i+c)$th iteration of the \textbf{for} loop at line \ref{line:iloop}, the only portion that is different across $\assignment$ and $\assignment'$ is the $(i+c)$th portion, so, in particular, the $i$th portion is the same in $\assignment$ and $\assignment'$. It follows that, in each assignment $\assignment'$ that is appended to $\areconfsequence$ after the end of the $i$th iteration of the \textbf{for} loop at line \ref{line:iloop}, we have $\portion{\assignment'}{i} = \portion{\assignment_i}{i}$, which, in particular, implies that $\portion{\targetassignment}{i} = \subconfigtwo$, as desired.
		
		Since we proved that \ref{A:prop3ii} applies to all values of $i \in \{1,\ldots,\abs{\sourceassignment}\}$, it follows that, for each fixed $\subconfigtwo \in \bsubs$, the number of different values of $i$ such that $\portion{\targetassignment}{i} = \subconfigtwo$ is equal to the number of different unit path flows $\hat{\flow}_j$ whose underlying path has node $\outnode_\subconfigtwo$ as its sink. By Lemma \ref{lem:number_of_unit_path_flows}, the latter quantity is equal to $\abs{\hat{\demand}(\outnode_\subconfigtwo)}$. In summary, we have shown that the number of times that each subconfiguration $\subconfigtwo \in \bsubs$ appears as a portion in $\targetassignment$ is exactly equal to $\abs{\hat{\demand}(\outnode_\subconfigtwo)}$. Next, to count the number of times a specific \bin $\binb$ appears in $\targetassignment$, we can take a sum over all portions in $\targetassignment$ and add up the number of times that \bin $\binb$ appears in that portion, i.e., $\sum_{\assignmentpart \in \targetassignment} \binmult{B}{\assignmentpart}$. By the above argument, we can re-write this as
		\begin{align*}
			\sum_{\assignmentpart \in \targetassignment} \binmult{B}{\assignmentpart} &=
			\sum_{\subconfigtwo \in \bsubs} \binmult{B}{\subconfigtwo} \cdot \abs{\hat{\demand}(\outnode_\subconfigtwo)}\\
			&= \binmult{B}{\target} \tag{by Constraints \ref{con:five} and \ref{con:seven}}
		\end{align*}
		This proves that each \bin $\binb$ appears the same number of times in $\targetassignment$ as in $\target$, which means that $\targetassignment$ is consistent with $\target$, as desired.
	\end{proof}
	
	Finally, we can prove the forward direction of Lemma~\ref{lem:partition_reduction} (restated below as Lemma~\ref{lem:partition_reduction_forward}) by showing that Algorithm \ref{alg:create_partition_sequence} produces a $\partition$-conforming reconfiguration sequence for capacity $\capacity$ from $\source$ to $\target$. 
	
	\partitionreductionforward*
	
	\begin{proof}
		Given a feasible solution $\begin{bmatrix} \hat{\flow} & \hat{\demand} \end{bmatrix}$ to $\partitionILP$, we use Lemma~\ref{lem:create_partition} to form a $\partitionbound$-bounded partition $\partition$ and an assignment $\sourceassignment$ of $\partition$ for capacity $\capacity$ that is consistent with $\source$. By Lemma \ref{lem:used_index_j}, Algorithm \ref{alg:create_partition_sequence} is well-defined, and executing this algorithm on input $\begin{bmatrix} \hat{\flow} & \hat{\demand} \end{bmatrix}$ constructs a sequence $\areconfsequence$ of assignments that, by Lemma \ref{lem:properties_of_A_1}, is a $\partition$-conforming reconfiguration sequence for capacity $\capacity$ from $\source$ to $\target$. This proves that the $\brepackingk$ instance $(\source, \target)$ is reconfigurable.
	\end{proof}
	
	\subsection{Proof of Lemma \ref{lem:partition_reduction} - backward direction}\label{sec:appendix_partition_backward_new}
	
	The goal of this section is to prove the following lemma.
	\begin{restatable}{lemma}{partitionreductionbackward}\label{lem:partition_reduction_backward}
		Consider any instance $(\source, \target)$ for \textsc{$\partitionbound$-Repacking-$\capacity$}. If $(\source, \target)$ is a yes-instance, then the corresponding $\partitionILP$ has a feasible solution.
	\end{restatable}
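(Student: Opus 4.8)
The plan is to reverse the construction of Section~\ref{sec:appendix_partition_forward_new}: from a witness that $(\source,\target)$ is a yes-instance, read off a feasible solution of $\partitionILP$. By the definition of \textsc{$\partitionbound$-Repacking-$\capacity$}, such a witness consists of a $\partitionbound$-bounded partition $\partition$ of the underlying set $\universe$ of $\source$ together with a $\partition$-conforming reconfiguration sequence for capacity $\capacity$ from $\source$ to $\target$; write this sequence as $\sourceassignment = \assignment^{(0)}, \assignment^{(1)}, \ldots, \assignment^{(L)} = \targetassignment$, where $\sourceassignment$ is consistent with $\source$, $\targetassignment$ is consistent with $\target$, and consecutive assignments are adjacent. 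For each portion index $i \in \{1,\ldots,|\partition|\}$, consider the sequence $\portion{\assignment^{(0)}}{i},\ldots,\portion{\assignment^{(L)}}{i}$ of subconfigurations that portion $i$ passes through; since consecutive assignments differ in exactly one portion, consecutive terms of this sequence are either equal or adjacent subconfigurations. Collapsing maximal runs of equal consecutive terms produces a sequence $\subconfig_i^{(0)},\ldots,\subconfig_i^{(m_i)}$ with consecutive terms adjacent and with $\subconfig_i^{(0)} = \portion{\sourceassignment}{i}$ and $\subconfig_i^{(m_i)} = \portion{\targetassignment}{i}$. Since every assignment in the sequence is an assignment of the $\partitionbound$-bounded partition $\partition$ for capacity $\capacity$, each $\subconfig_i^{(t)}$ has at most $\partitionbound$ \bins, each of volume at most $\capacity$, with items drawn from $\universe$, hence $\subconfig_i^{(t)} \in \bsubs$. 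Using the definition of $\edgeset$ (adjacent subconfigurations from $\bsubs$ yield an edge between the corresponding $\internode$-nodes), the sequence
\[
  W_i := \left(\innode_{\subconfig_i^{(0)}},\, \internode_{\subconfig_i^{(0)}},\, \internode_{\subconfig_i^{(1)}},\, \ldots,\, \internode_{\subconfig_i^{(m_i)}},\, \outnode_{\subconfig_i^{(m_i)}}\right)
\]
is a walk in $\graph$ from an origin node to a destination node.

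\textbf{Building the solution and checking the easy constraints.} Next I would set $\hat{\flow}(\euv) := \sum_{i=1}^{|\partition|}\bigl(\text{number of times }W_i\text{ traverses }\euv\bigr)$ for each $\euv \in \edgeset$, together with $\hat{\demand}(\innode_\subconfig) := -\binmult{\subconfig}{\sourceassignment}$, $\hat{\demand}(\outnode_\subconfig) := \binmult{\subconfig}{\targetassignment}$, and $\hat{\demand}(\internode_\subconfig) := 0$ for every $\subconfig \in \bsubs$. Constraints~\ref{con:six} and~\ref{con:seven} hold by construction, since each $\hat{\flow}(\euv)$ is a sum of non-negative integers and the multiplicities in the demands are non-negative integers. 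For Constraint~\ref{con:two}, the crucial point is that each walk $W_i$ begins at an $\inputnodes$-node and ends at an $\outputnodes$-node, so by a telescoping count along the walk, the number of times $W_i$ enters any $\internode$-node equals the number of times it leaves it; summing over $i$ gives $\hat{\flow}(\inedges(\internode_\subconfig)) = \hat{\flow}(\outedges(\internode_\subconfig))$ for all $\subconfig$. For Constraint~\ref{con:one}: the only out-edge of $\innode_\subconfig$ in $\graph$ is $\innode_\subconfig\internode_\subconfig$, which $W_i$ traverses exactly when $\subconfig_i^{(0)} = \subconfig$ (and a walk never returns to an $\inputnodes$-node), so $\hat{\flow}(\outedges(\innode_\subconfig))$ equals the number of portions of $\sourceassignment$ equal to $\subconfig$, i.e., $\binmult{\subconfig}{\sourceassignment} = -\hat{\demand}(\innode_\subconfig)$. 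Constraint~\ref{con:three} is verified symmetrically using the unique in-edge $\internode_\subconfig\outnode_\subconfig$ and the fact that $\targetassignment$ is the final assignment.

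\textbf{Checking Constraints~\ref{con:four} and~\ref{con:five}.} Here the consistency of the assignments with $\source$ and $\target$ does the work. For any $B \in \bintypesnoparameters$,
\[
  \sum_{\subconfig \in \bsubs} \binmult{B}{\subconfig}\cdot\hat{\demand}(\innode_\subconfig) \;=\; -\sum_{\subconfig \in \bsubs}\binmult{B}{\subconfig}\cdot\binmult{\subconfig}{\sourceassignment} \;=\; -\sum_{\assignmentpart \in \sourceassignment}\binmult{B}{\assignmentpart} \;=\; -\binmult{B}{\source},
\]
where the middle step regroups the double sum by the portions of $\sourceassignment$ and the last step uses that $\sourceassignment$ is consistent with $\source$ (the disjoint multiset union of its portions equals $\source$). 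Constraint~\ref{con:five} follows by the identical argument with $\targetassignment$ and $\target$. Having verified Constraints~\ref{con:one}--\ref{con:seven}, $\begin{bmatrix}\hat{\flow} & \hat{\demand}\end{bmatrix}$ is feasible for $\partitionILP$, which proves Lemma~\ref{lem:partition_reduction_backward} and, together with Lemma~\ref{lem:partition_reduction_forward}, completes the proof of Lemma~\ref{lem:partition_reduction}.

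\textbf{Main obstacle.} I expect no conceptual difficulty; the work is organizational. The delicate points are (i) justifying that the ``collapse repeated portions'' operation really yields a walk in $\graph$ — each intermediate subconfiguration must be shown to lie in $\bsubs$ so that the node $\internode_{\subconfig_i^{(t)}}$ exists, and consecutive distinct portions must be argued to be adjacent subconfigurations so that the edge $\internode_{\subconfig_i^{(t)}}\internode_{\subconfig_i^{(t+1)}}$ exists — and (ii) matching the telescoping and double-counting arguments precisely to the algebraic form of Constraints~\ref{con:one}--\ref{con:five}. The remainder is bookkeeping in the dictionary between portions and moves of the conforming sequence on one side, and nodes and edges of $\graph$ on the other.
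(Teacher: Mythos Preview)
Your proposal is correct and follows essentially the same approach as the paper: set $\hat{\demand}(\innode_\subconfig)=-\binmult{\subconfig}{\sourceassignment}$, $\hat{\demand}(\outnode_\subconfig)=\binmult{\subconfig}{\targetassignment}$, and let $\hat{\flow}$ count edge traversals induced by the reconfiguration steps, then verify Constraints~\ref{con:one}--\ref{con:seven}. The one organizational difference is that the paper first puts the $\partition$-conforming sequence into \emph{sorted order} and then proves flow conservation at $\internodes$-nodes via a case analysis (its Lemma~\ref{lem:algo_flow_bin_mult}), whereas you extract one walk $W_i$ per portion directly and verify Constraint~\ref{con:two} by the telescoping observation that a walk from $\inputnodes$ to $\outputnodes$ enters and leaves every $\internodes$-node equally often; your route avoids the sorting preprocessing and the auxiliary lemma, at the cost of working with walks rather than paths.
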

	
	Suppose the instance $(\source, \target)$ of \textsc{$\partitionbound$-Repacking-$\capacity$} is a yes-instance. By definition, this means that, for some $\boundedpartition$ $\partition$ of $\universe$, we have a $\partition$-conforming reconfiguration sequence for capacity $\capacity$ from $\source$ to $\target$. To prove the lemma, we devise an algorithm that converts this $\partition$-conforming reconfiguration sequence to a feasible solution of $\partitionILP$.
	
	Recall that, in any $\partition$-conforming reconfiguration sequence, each move occurs within a single part of the partition, i.e., for any two consecutive assignments $\assignment$ and $\assignment'$, there is exactly one value for $i$ such that $\portion{\assignment}{i} \neq \portion{\assignment'}{i}$. We say that a $\partition$-conforming reconfiguration sequence is in \emph{sorted order} if, for each $i,j \in \{1,\ldots,\abs{\partition}\}$ such that $i < j$, all moves involving an item in the $i$th portion of an assignment occur earlier than any move involving an item in the $j$th portion of an assignment. The assignments of any $\partition$-conforming reconfiguration sequence can be re-arranged to obtain a $\partition$-conforming reconfiguration sequence that is in sorted order.
	
	At a high level, we take a $\partition$-conforming reconfiguration sequence that is in sorted order, and we construct a feasible solution to $\partitionILP$ as follows. First, we set the demands of each origin node $\innode_\subconfig$ of $\graph$ by adding a demand of $-1$ for each occurrence of $\subconfig$ in the source assignment $\sourceassignment$. Each such origin node $\innode_\subconfig$ has a single outgoing edge $\innode_\subconfig\internode_\subconfig$, so we set the flow value on that edge to match the absolute value of the demand on $\innode_\subconfig$. Similarly, we set the demands of each destination node $\outnode_\subconfig$ of $\graph$ by adding a demand of 1 for each occurrence of $\subconfig$ in the target assignment $\targetassignment$. Each such destination node $\outnode_\subconfig$ has a single incoming edge $\internode_\subconfig\outnode_\subconfig$, so we set the flow value on that edge to match the absolute value of the demand on $\outnode_\subconfig$. We then construct the flow $\flow$ from the reconfiguration sequence by tracking the changes between each pair of consecutive assignments. More specifically, if the reconfiguration sequence changes a portion from subconfiguration $\subconfig$ to subconfiguration $\subconfigtwo$,  we increment the flow on the edge  $\internode_{\subconfig}\internode_{\subconfigtwo}$. Algorithm \ref{alg:reconstruct_sequence} gives the pseudocode for the procedure described above.
	
	\begin{algorithm}
		\caption{Construct a feasible solution to $\partitionILP$}\label{alg:reconstruct_sequence}
		\begin{algorithmic}[1]
			\Require A $\partition$-conforming reconfiguration sequence $\areconfsequence$ for capacity $\capacity$ from $\source$ to $\target$ in sorted order, with $\areconfsequence[1] = \sourceassignment$
			\Ensure A feasible solution $\begin{bmatrix} \flow & \demand \end{bmatrix}$ to $\partitionILP$
			\State $\sourceassignment \gets \areconfsequence[1]$
			\State $\targetassignment \gets \areconfsequence[\abs{\areconfsequence}]$
			\State $\demand(\innode_{\subconfig}) \gets - \binmult{\subconfig}{\sourceassignment}$ for all $\subconfig \in \bsubs$\label{line:setXdemands}
			\State $\flow(\innode_{\subconfig}\internode_\subconfig) \gets \binmult{\subconfig}{\sourceassignment}$ for all $\subconfig \in \bsubs$\label{line:setXflows}
			\State $\demand(\outnode_{\subconfig}) \gets \binmult{\subconfig}{\targetassignment}$ for all $\subconfig \in \bsubs$\label{line:setZdemands}
			\State $\flow(\internode_\subconfig\outnode_\subconfig) \gets \binmult{\subconfig}{\targetassignment}$ for all $\subconfig \in \bsubs$\label{line:setZflows}
			\State $\flow(\internode_\subconfig\internode_\subconfigtwo) \gets 0$ for all $\subconfig,\subconfigtwo \in \bsubs$\label{line:initflow}
			\For{$i = 2,\ldots,\abs{\areconfsequence}$}\label{line:forloop}
			\State $\assignment \gets \areconfsequence[i-1]$
			\State $\assignmenttwo \gets \areconfsequence[i]$
			\State $j \gets$ the index of the portion at which the assignments $\assignment$ and $\assignmenttwo$ differ
			\State $\subconfig \gets \portion{\assignment}{j}$\label{line:getsubs}
			\State $\subconfigtwo \gets \portion{\assignmenttwo}{j}$
			\State $\flow(\internode_{\subconfig}\internode_{\subconfigtwo}) \gets \flow(\internode_{\subconfig}\internode_{\subconfigtwo}) + 1$\label{line:incflow}
			\EndFor
			\State \Return $\begin{bmatrix} \flow & \demand \end{bmatrix}$
		\end{algorithmic}
	\end{algorithm}
	
	\begin{remark}
		In lines \ref{line:getsubs}-\ref{line:incflow}, we are using the fact that each portion of an assignment in $\areconfsequence$ can be viewed as a subconfiguration in $\bsubs$: each portion has at most $\partitionbound$ \bins and each \bin has volume at most $\capacity$ since $\partition$ is $\partitionbound$-bounded and $\areconfsequence$ is a $\partition$-conforming reconfiguration sequence for capacity $\capacity$.
	\end{remark}
	
	The main idea is that, for each fixed $j \in \{1,\ldots,\abs{\partition}\}$, there is a consecutive subsequence of assignments that changes only the $j$th portion. We can view such a sequence of assignments as a sequence of steps, where each step is from some subconfiguration $\subconfig$ to some subconfiguration $\subconfigtwo$, and these two subconfigurations differ only in their $j$th portion. For each such step, we increase a unit of flow on the edge $\internode_\subconfig\internode_\subconfigtwo$ (line $14$) to reflect this change. 
	
	Our eventual goal is to prove that the output $\begin{bmatrix} \flow & \demand \end{bmatrix}$ of Algorithm \ref{alg:reconstruct_sequence} is a feasible solution to $\partitionILP$. First, we make an observation that will eventually help us prove that $\flow$ satisfies the conservation constraint. In particular, we focus on the values of $\flow$ on all edges of the form $\internode_\subconfig\internode_\subconfigtwo$, i.e., the edges whose flow values are set via line \ref{line:incflow} in iterations of the \textbf{for} loop. These are exactly the edges from $\edgeset$ induced by the vertex set $Y$ in $G$, and we denote this set of edges by $E_Y$. For each $\subconfigthree \in \bsubs$, denote by $\inedges_Y(\internode_\subconfigthree)$ the set of incoming edges to vertex $\internode_\subconfigthree$ that belong to $E_Y$, and denote by $\outedges_Y(\internode_\subconfigthree)$ the set of outgoing edges from vertex $\internode_\subconfigthree$ that belong to $E_Y$.
	
	\begin{lemma}\label{lem:algo_flow_bin_mult}
		Consider the output $\begin{bmatrix} \flow & \demand \end{bmatrix}$ of Algorithm \ref{alg:reconstruct_sequence}. For all $\subconfigthree \in \bsubs$, we have $\flow(\inedges_Y(\internode_\subconfigthree)) - \flow(\outedges_Y(\internode_\subconfigthree)) = \binmult{\subconfigthree}{\targetassignment} - \binmult{\subconfigthree}{\sourceassignment}$.
	\end{lemma}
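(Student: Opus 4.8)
The plan is to track exactly which edges of $\graph$ receive flow from Algorithm~\ref{alg:reconstruct_sequence}, and then to group those contributions by which portion of the assignment is being changed, so that the net flow at each vertex $\internode_\subconfigthree$ telescopes to $\binmult{\subconfigthree}{\targetassignment} - \binmult{\subconfigthree}{\sourceassignment}$.

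First I would fix notation, writing $\areconfsequence = (\assignment_1,\ldots,\assignment_T)$ with $\assignment_1 = \sourceassignment$ and $\assignment_T = \targetassignment$, and observe that the only flow values on edges of $E_Y$ (i.e., edges of the form $\internode_\subconfig\internode_\subconfigtwo$) are set by line~\ref{line:initflow} (initialization to $0$) and line~\ref{line:incflow} (increment by one during an iteration of the loop at line~\ref{line:forloop}). Hence, for all $\subconfig,\subconfigtwo \in \bsubs$, the value $\flow(\internode_\subconfig\internode_\subconfigtwo)$ equals the number of indices $i \in \{2,\ldots,T\}$ such that the unique portion in which $\assignment_{i-1}$ and $\assignment_i$ differ changes its value from $\subconfig$ to $\subconfigtwo$.

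Next I would group these increments by the portion being changed. For each $j \in \{1,\ldots,\abs{\partition}\}$, let $i^j_1 < \cdots < i^j_{m_j}$ be the indices $i$ for which the transition $\assignment_{i-1}\to\assignment_i$ changes the $j$th portion. Using the definition of adjacent assignments (exactly one portion changes per step; all others are copied verbatim), the $j$th portion has value $\portion{\sourceassignment}{j}$ in every assignment up to $\assignment_{i^j_1-1}$, then passes through a sequence of subconfigurations $v^j_0 = \portion{\sourceassignment}{j}, v^j_1, \ldots, v^j_{m_j} = \portion{\targetassignment}{j}$, where transition $i^j_t$ changes the $j$th portion from $v^j_{t-1}$ to $v^j_t$ and so contributes exactly one unit of flow to the edge $\internode_{v^j_{t-1}}\internode_{v^j_t}$ (the hypothesis that $\areconfsequence$ is in sorted order makes the indices $i^j_t$ consecutive, which is convenient but not essential to this count). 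Therefore, for a fixed $\subconfigthree \in \bsubs$, the contribution of portion $j$'s transitions to $\flow(\inedges_Y(\internode_\subconfigthree)) - \flow(\outedges_Y(\internode_\subconfigthree))$ is the telescoping sum
\begin{align*}
\sum_{t=1}^{m_j}\Bigl([v^j_t = \subconfigthree] - [v^j_{t-1} = \subconfigthree]\Bigr)
&= [v^j_{m_j} = \subconfigthree] - [v^j_0 = \subconfigthree] \\
&= [\portion{\targetassignment}{j} = \subconfigthree] - [\portion{\sourceassignment}{j} = \subconfigthree],
\end{align*}
using Iverson-bracket notation. Summing over all $j \in \{1,\ldots,\abs{\partition}\}$ and recalling that $\binmult{\subconfigthree}{\assignment}$ is, by definition, the number of portions of $\assignment$ equal to $\subconfigthree$, yields the claimed identity.

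I expect the only real obstacle to be expository rather than mathematical: one must be careful in the middle step to justify that, for each portion $j$, the transitions touching it trace out a single walk $v^j_0 \to \cdots \to v^j_{m_j}$ whose endpoints are precisely $\portion{\sourceassignment}{j}$ and $\portion{\targetassignment}{j}$ — equivalently, that the value of portion $j$ is constant both before the first transition that touches it and between consecutive such transitions, and stays fixed after the last one. Once that bookkeeping is in place, the telescoping collapse requires no estimates or case analysis.
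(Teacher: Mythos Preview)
Your proposal is correct and follows essentially the same approach as the paper: both group the flow increments by which portion $j$ is being changed, observe that portion $j$ traces a walk from $\portion{\sourceassignment}{j}$ to $\portion{\targetassignment}{j}$, and conclude that its net contribution to in-minus-out at $\internode_\subconfigthree$ is $[\portion{\targetassignment}{j}=\subconfigthree]-[\portion{\sourceassignment}{j}=\subconfigthree]$, then sum over $j$. Your telescoping-sum formulation is in fact a bit cleaner than the paper's explicit case split on $\abs{\areconfsequence_j}\in\{0,1,>1\}$, but the underlying argument is the same.
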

	
	\begin{proof}
		Since $\areconfsequence$ is a $\partition$-conforming reconfiguration sequence, each pair of consecutive assignments in $\areconfsequence$ differ in exactly one portion. Further, since $\areconfsequence$ is given in sorted ordering, all assignments that change a particular portion form a consecutive subsequence of $\areconfsequence$. For each $j \in 1,\ldots,\abs{\partition}$, we denote by $\areconfsequence_j$ the consecutive subsequence of assignments in $\areconfsequence$ such that each of the assignments in $\areconfsequence_j$ differ from the previous assignment in $\areconfsequence$ at the $j$th portion. In what follows, we will consider, for a fixed value of $j$, all iterations of the \textbf{for} loop at line \ref{line:forloop} in which the pair of assignments $\assignment,\assignmenttwo$ differ at portion $j$. In other words, we are considering all iterations of the \textbf{for} loop at line \ref{line:forloop} for which $\areconfsequence[i] \in \areconfsequence_j$ (which, by definition, means that $\areconfsequence[i-1]$ and $\areconfsequence[i]$ differ only in their $j$th portion), and we call these the \emph{$j$-iterations}. For any edge $e \in E_Y$, denote by $f_j(e)$ the value that would be obtained by initializing $f(e) \gets 0$, then performing only the $j$-iterations, and then taking the resulting value of $f(e)$. In other words, $f_j(e)$ represents how much flow was added to edge $e$ during the reconfiguration of the $j$th part of the partition. Then, for each edge $e \in E_Y$, we have $f(e) = \sum_{j=1}^{\abs{\partition}} f_j(e)$. 
		
		Consider an arbitrary fixed $\subconfigthree \in \bsubs$. We start by considering an arbitrary fixed $j \in 1,\ldots,\abs{\partition}$, and computing the value of $\flow_j(\inedges_Y(\internode_\subconfigthree)) - \flow_j(\outedges_Y(\internode_\subconfigthree))$. Then, the desired result will later be obtained by summing up this expression over all values of $j \in \{1,\ldots,\abs{\partition}\}$.
		
		We investigate various cases based on the length of $\areconfsequence_j$. We define a variable $\chi_{{\cal S},j}$ that indicates whether or not $\portion{\sourceassignment}{j} = \subconfigthree$, i.e., $\chi_{{\cal S},j}$ is 1 if $\portion{\sourceassignment}{j} = \subconfigthree$, and otherwise $\chi_{{\cal S},j}$ is 0. Similarly, we define a variable $\chi_{{\cal T},j}$ that is 1 if $\portion{\targetassignment}{j} = \subconfigthree$, and otherwise $\chi_{{\cal T},j}$ is 0. In all cases, we will prove that the value of $\flow_j(\inedges_Y(\internode_\subconfigthree)) - \flow_j(\outedges_Y(\internode_\subconfigthree))$ is equal to $\chi_{{\cal T},j} - \chi_{{\cal S},j}$, and this fact will be used later to obtain the desired result.
		\begin{enumerate}[label=Case \arabic*:,leftmargin=2cm]
			\item $\abs{\areconfsequence_j}=0$.\\
			In this case, we know that $\portion{\sourceassignment}{j} = \portion{\targetassignment}{j}$ and that there are no values of $i$ such that $\areconfsequence[i-1]$ and $\areconfsequence[i]$ differ in their $j$th portion. It follows that there are no $j$-iterations, so we have $\flow_j(e) = 0$ for all $e \in E_Y$, which implies that $\flow_j(\inedges_Y(\internode_\subconfigthree)) - \flow_j(\outedges_Y(\internode_\subconfigthree)) = 0$. Since $\portion{\sourceassignment}{j} = \portion{\targetassignment}{j}$, it follows that $\chi_{{\cal T},j} - \chi_{{\cal S},j} = 0$ as well. 
			\item $\abs{\areconfsequence_j}=1$.\\
			In this case, there is exactly one value of $i$ such that $\areconfsequence[i-1]$ and $\areconfsequence[i]$ differ in their $j$th portion. It follows that $\portion{\areconfsequence[i-1]}{j} = \portion{\sourceassignment}{j}$ and that $\portion{\areconfsequence[i]}{j} = \portion{\targetassignment}{j}$. 
			
			Consider the iteration of the \textbf{for} loop at line \ref{line:forloop} for the above-specified value of $i$. In this iteration, we have $\subconfig = \portion{\areconfsequence[i-1]}{j} = \portion{\sourceassignment}{j}$ and $\subconfigtwo = \portion{\areconfsequence[i]}{j} = \portion{\targetassignment}{j}$, where $\subconfig \neq \subconfigtwo$. Moreover, the value of $\flow(\internode_\subconfig\internode_\subconfigtwo)$ is incremented by one, which contributes 1 to the value of $\flow_j(\outedges_Y(\internode_\subconfig))$ and contributes 1 to the value of $\flow_j(\inedges_Y(\internode_\subconfigtwo))$. Therefore, considering all of the possible cases:
			\begin{itemize}
				\item If $\portion{\sourceassignment}{j} = \subconfigthree$ and $\portion{\targetassignment}{j} \neq \subconfigthree$, then this iteration contributes 1 to the value of $\flow_j(\outedges_Y(\internode_\subconfigthree))$ and contributes 0 to the value of $\flow_j(\inedges_Y(\internode_\subconfigthree))$. Thus, the value of $\flow_j(\inedges_Y(\internode_\subconfigthree)) - \flow_j(\outedges_Y(\internode_\subconfigthree))$ is $-1$. Moreover, $\chi_{{\cal T},j} = 0$ and $\chi_{{\cal S},j}=1$, so we conclude that $\chi_{{\cal T},j} - \chi_{{\cal S},j} = -1$ as well.
				
				\item If $\portion{\sourceassignment}{j} \neq \subconfigthree$ and $\portion{\targetassignment}{j} = \subconfigthree$, then this iteration contributes 0 to the value of $\flow_j(\outedges_Y(\internode_\subconfigthree))$ and contributes 1 to the value of $\flow_j(\inedges_Y(\internode_\subconfigthree))$. Thus, the value of $\flow_j(\inedges_Y(\internode_\subconfigthree)) - \flow_j(\outedges_Y(\internode_\subconfigthree))$ is $1$. Moreover, $\chi_{{\cal T},j} = 1$ and $\chi_{{\cal S},j}=0$, so we conclude that $\chi_{{\cal T},j} - \chi_{{\cal S},j} = 1$ as well.
				
				\item If $\portion{\sourceassignment}{j} \neq \subconfigthree$ and $\portion{\targetassignment}{j} \neq \subconfigthree$, then this iteration contributes 0 to the value of $\flow_j(\outedges_Y(\internode_\subconfigthree))$ and contributes 0 to the value of $\flow_j(\inedges_Y(\internode_\subconfigthree))$. Thus, the value of $\flow_j(\inedges_Y(\internode_\subconfigthree)) - \flow_j(\outedges_Y(\internode_\subconfigthree))$ is 0. Moreover, $\chi_{{\cal T},j} = 0$ and $\chi_{{\cal S},j}=0$, so we conclude that $\chi_{{\cal T},j} - \chi_{{\cal S},j} = 0$ as well.
			\end{itemize}
			\item $\abs{\areconfsequence_j}>1$\\
			In this case, consider any two consecutive $j$-iterations, i.e., any value $\alpha$ such that the $j$th portion differs between the two assignments $\areconfsequence[\alpha-1]$ and $\areconfsequence[\alpha]$ and between the two assignments $\areconfsequence[\alpha]$ and $\areconfsequence[\alpha+1]$. In the iteration of the \textbf{for} loop with $i = \alpha$, note that $\subconfig = \portion{\areconfsequence[\alpha-1]}{j}$ and $\subconfigtwo = \portion{\areconfsequence[\alpha]}{j}$, and that the value of $f(\internode_\subconfig\internode_\subconfigtwo)$ is incremented. Then, in the iteration of the \textbf{for} loop with $i = \alpha+1$, note that $\subconfig = \portion{\areconfsequence[\alpha]}{j}$ and $\subconfigtwo = \portion{\areconfsequence[\alpha+1]}{j}$, and that the value of $f(\internode_\subconfig\internode_\subconfigtwo)$ is incremented. Therefore, if we consider the node in $Y$ corresponding to the subconfiguration $\portion{\areconfsequence[\alpha]}{j}$, an incoming edge to this node is incremented during the loop iteration with $i = \alpha$, and an outgoing edge from this node is incremented during the loop iteration with $i = \alpha+1$. The effects of these two increments ``cancel each other out", i.e., more formally, letting $\subconfigfour = \portion{\areconfsequence[\alpha]}{j}$, we see that these two loop iterations leave the value of $\flow_j(\inedges_Y(\internode_\subconfigfour)) - \flow_j(\outedges_Y(\internode_\subconfigfour))$ unchanged.
			
			Essentially, the above argument shows that, during the $j$-iterations, the value of $\flow_j(\inedges_Y(\internode_\subconfigfour)) - \flow_j(\outedges_Y(\internode_\subconfigfour))$ doesn't change for each $\subconfigfour$ corresponding to the $j$th portion of the ``intermediate" assignments of $\areconfsequence_j$, i.e., for each $\subconfigfour \in \{ \portion{\areconfsequence_j[1]}{j}$, $\ldots$, $\portion{\areconfsequence_j[\abs{\areconfsequence_j}-1]}{j}\}$. However, during the first $j$-iteration, we observe that $\subconfig = \portion{\sourceassignment}{j}$ and that $\flow(\internode_\subconfig\internode_\subconfigtwo)$ gets incremented, and this contributes 1 to the value of $\flow_j(\outedges_Y(\internode_\subconfig))$. Similarly, during the final $j$-iteration, we observe that $\subconfigtwo = \portion{\targetassignment}{j}$ and that $\flow(\internode_\subconfig\internode_\subconfigtwo)$ gets incremented, and this contributes 1 to the value of $\flow_j(\inedges_Y(\internode_\subconfigtwo))$. Thus, we conclude:
			\begin{itemize}
				\item if $\portion{\sourceassignment}{j} = \portion{\targetassignment}{j} = \subconfigthree$,\\then $\flow_j(\inedges_Y(\internode_\subconfigthree)) - \flow_j(\outedges_Y(\internode_\subconfigthree)) = 1 - 1 = 0$. Moreover, $\chi_{{\cal T},j} = 1$ and $\chi_{{\cal S},j}=1$, so we conclude that $\chi_{{\cal T},j} - \chi_{{\cal S},j} = 0$ as well.
				\item if $\portion{\sourceassignment}{j} \neq \subconfigthree$ and $\portion{\targetassignment}{j} = \subconfigthree$, then\\ $\flow_j(\inedges_Y(\internode_\subconfigthree)) - \flow_j(\outedges_Y(\internode_\subconfigthree)) = 1 - 0 = 1$. Moreover, $\chi_{{\cal T},j} = 1$ and $\chi_{{\cal S},j}=0$, so we conclude that $\chi_{{\cal T},j} - \chi_{{\cal S},j} = 1$ as well.
				\item if $\portion{\sourceassignment}{j} = \subconfigthree$ and $\portion{\targetassignment}{j} \neq \subconfigthree$, then\\ $\flow_j(\inedges_Y(\internode_\subconfigthree)) - \flow_j(\outedges_Y(\internode_\subconfigthree)) = 0 - 1 = -1$. Moreover, $\chi_{{\cal T},j} = 0$ and $\chi_{{\cal S},j}=1$, so we conclude that $\chi_{{\cal T},j} - \chi_{{\cal S},j} = -1$ as well.
				\item if $\subconfigthree \not\in \{\portion{\sourceassignment}{j}, \portion{\targetassignment}{j}\}$, then\\ $\flow_j(\inedges_Y(\internode_\subconfigthree)) - \flow_j(\outedges_Y(\internode_\subconfigthree)) = 0 - 0 = 0$. Moreover, $\chi_{{\cal T},j} = 0$ and $\chi_{{\cal S},j}=0$, so we conclude that $\chi_{{\cal T},j} - \chi_{{\cal S},j} = 0$ as well.
			\end{itemize}
		\end{enumerate}
		
		In all cases, we showed that $\flow_j(\inedges_Y(\internode_\subconfigthree)) - \flow_j(\outedges_Y(\internode_\subconfigthree))$ is equal to $\chi_{{\cal T},j} - \chi_{{\cal S},j}$. Taking the sum over all values of $j \in \{1,\ldots,\abs{\partition}\}$, we conclude that
		\begin{align*}
			& \flow(\inedges_Y(\internode_\subconfigthree)) - \flow(\outedges_Y(\internode_\subconfigthree))\\
			=\ & \sum_{j=1}^{\abs{\partition}} \left(\flow_j(\inedges_Y(\internode_\subconfigthree)) - \flow_j(\outedges_Y(\internode_\subconfigthree))\right)\\
			=\ & \sum_{j=1}^{\abs{\partition}} \left(\chi_{{\cal T},j} - \chi_{{\cal S},j}\right)\\
			=\ & \left(\sum_{j=1}^{\abs{\partition}} \chi_{{\cal T},j}\right) - \left(\sum_{j=1}^{\abs{\partition}}\chi_{{\cal S},j}\right)\\
			=\ & \binmult{\subconfigthree}{\targetassignment} - \binmult{\subconfigthree}{\sourceassignment} 
		\end{align*}
	\end{proof}

	We are now ready to prove the main result of this section, which is restated below.
	
	\partitionreductionbackward*
	
	\begin{proof}
		Suppose that $(\source, \target)$ is a yes-instance. By definition, this means that there is a $\partitionbound$-bounded partition $\partition$ and a $\partition$-conforming reconfiguration sequence for capacity $\capacity$ from $\source$ to $\target$. We rearrange the reconfiguration sequence to produce a $\partition$-conforming reconfiguration sequence $\areconfsequence$ for capacity $\capacity$ from $\source$ to $\target$ that is in sorted order. Then, we apply Algorithm \ref{alg:reconstruct_sequence} to $\areconfsequence$. To prove the desired result, we prove below that the output $\begin{bmatrix} \flow & \demand \end{bmatrix}$ of Algorithm \ref{alg:reconstruct_sequence} is a feasible solution $\partitionILP$ by verifying that $\begin{bmatrix} \flow & \demand \end{bmatrix}$ satisfies each of the constraints set out in $\partitionILP$.
		
		\begin{description}
			\item[Constraint \ref{con:one}] For each fixed $\subconfig \in \bsubs$, by lines \ref{line:setXdemands} and \ref{line:setXflows} of Algorithm \ref{alg:reconstruct_sequence}, we observe that $0 = -\binmult{\subconfig}{\sourceassignment} + \binmult{\subconfig}{\sourceassignment} = -\flow(\innode_\subconfig\internode_\subconfig) - \demand(\innode_\subconfig)$. By the construction of $G$, we know that $\{\innode_\subconfig\internode_\subconfig\} = \outedges(\innode_\subconfig)$, so we conclude that $- \flow(\outedges(\innode_\subconfig)) - \demand(\innode_\subconfig)=0$, as required.
			
			\item[Constraint \ref{con:two}] Consider an arbitrary fixed $\subconfigthree \in \bsubs$. Recall from the construction of $G$ that the incoming edges to $\internode_\subconfigthree$ are exactly $\{\innode_\subconfigthree\internode_\subconfigthree\} \cup \{\internode_\subconfig\internode_\subconfigthree \mid (\subconfig \in \bsubs) \wedge (\subconfig \neq \subconfigthree)\}$. Thus, $\flow(\inedges(\internode_\subconfigthree)) = \flow(\innode_\subconfigthree\internode_\subconfigthree) + \flow(\inedges_Y(\internode_\subconfigthree))$. Moreover, recall from the construction of $G$ that the outgoing edges from $\internode_\subconfigthree$ are exactly $\{\internode_\subconfigthree\outnode_\subconfigthree\} \cup \{\internode_\subconfigthree\internode_\subconfig \mid (\subconfig \in \bsubs) \wedge (\subconfig \neq \subconfigthree)\}$. Thus, $\flow(\outedges(\internode_\subconfigthree)) = \flow(\internode_\subconfigthree\outnode_\subconfigthree) + \flow(\outedges_Y(\internode_\subconfigthree))$. Therefore, we have
			\begin{align*}
				& \flow(\inedges(\internode_\subconfigthree)) - \flow(\outedges(\internode_\subconfigthree)) \\
				=\ &(\flow(\innode_\subconfigthree\internode_\subconfigthree) + \flow(\inedges_Y(\internode_\subconfigthree))) - (\flow(\internode_\subconfigthree\outnode_\subconfigthree) + \flow(\outedges_Y(\internode_\subconfigthree)))\\
				=\ & \flow(\innode_\subconfigthree\internode_\subconfigthree) -\flow(\internode_\subconfigthree\outnode_\subconfigthree) + \flow(\inedges_Y(\internode_\subconfigthree)) - \flow(\outedges_Y(\internode_\subconfigthree))\\
				=\ & \flow(\innode_\subconfigthree\internode_\subconfigthree) -\flow(\internode_\subconfigthree\outnode_\subconfigthree) + \binmult{\subconfigthree}{\targetassignment} - \binmult{\subconfigthree}{\sourceassignment}\tag{by Lemma \ref{lem:algo_flow_bin_mult}} \\
				=\ & \binmult{\subconfigthree}{\sourceassignment} -\binmult{\subconfigthree}{\targetassignment} + \binmult{\subconfigthree}{\targetassignment} - \binmult{\subconfigthree}{\sourceassignment} \tag{by lines \ref{line:setXflows} and \ref{line:setZflows} of Alg. \ref{alg:reconstruct_sequence}} \\
				=\ & 0
			\end{align*}
			as required.

			\item[Constraint \ref{con:three}] 
			For each fixed $\subconfig \in \bsubs$, by lines \ref{line:setZdemands} and \ref{line:setZflows} of Algorithm \ref{alg:reconstruct_sequence}, we observe that $0 = \binmult{\subconfig}{\targetassignment} - \binmult{\subconfig}{\targetassignment} = \flow(\internode_\subconfig\outnode_\subconfig) - \demand(\outnode_\subconfig)$. By the construction of $G$, we know that $\{\internode_\subconfig\outnode_\subconfig\} = \inedges(\outnode_\subconfig)$, so we conclude that $\flow(\inedges(\outnode_\subconfig)) - \demand(\outnode_\subconfig)=0$, as required.
			
			\item[Constraint \ref{con:four}]
			Recall that $\areconfsequence$ is a $\partition$-conforming reconfiguration sequence for capacity $\capacity$ from $\source$ to $\target$, which implies that the first assignment $\sourceassignment$ of $\areconfsequence$ is consistent with $\source$. It follows that, for each $\binb \in \bintypesnoparameters$, we have $\binmult{\binb}{\source} = \binmult{\binb}{\sourceassignment}$. But, $\binmult{\binb}{\sourceassignment}$ can be computed as follows: for each subconfiguration $\subconfig \in \bsubs$, count how many times $\subconfig$ appears in $\sourceassignment$, and multiply by the number of times that $\binb$ appears in $\subconfig$, i.e., $\binmult{\binb}{\sourceassignment} = \sum_{\subconfig \in \bsubs} \binmult{B}{\subconfig} \cdot \binmult{\subconfig}{\sourceassignment}$. Thus, we have shown that
			\begin{align*}
				\binmult{B}{\source}  &= \binmult{\binb}{\sourceassignment}\\
				&= \sum_{\subconfig \in \bsubs} \binmult{B}{\subconfig} \cdot \binmult{\subconfig}{\sourceassignment}  \\
				&= -\sum_{\subconfig \in \bsubs} \binmult{B}{\subconfig} \cdot \demand(\innode_\subconfig)\tag{by line \ref{line:setXdemands} of Algorithm \ref{alg:reconstruct_sequence}}
			\end{align*}
			which implies that $-\binmult{B}{\source} = \sum_{\subconfig \in \bsubs} \binmult{B}{\subconfig} \cdot \demand(\innode_\subconfig)$, as required.
			
			\item[Constraint \ref{con:five}] 
			Recall that $\areconfsequence$ is a $\partition$-conforming reconfiguration sequence for capacity $\capacity$ from $\source$ to $\target$, which implies that the final assignment $\targetassignment$ of $\areconfsequence$ is consistent with $\target$. It follows that, for each $\binb \in \bintypesnoparameters$, we have $\binmult{\binb}{\target} = \binmult{\binb}{\targetassignment}$. But, $\binmult{\binb}{\targetassignment}$ can be computed as follows: for each subconfiguration $\subconfig \in \bsubs$, count how many times $\subconfig$ appears in $\targetassignment$, and multiply by the number of times that $\binb$ appears in $\subconfig$, i.e., $\binmult{\binb}{\targetassignment} = \sum_{\subconfig \in \bsubs} \binmult{B}{\subconfig} \cdot \binmult{\subconfig}{\targetassignment}$. Thus, we have shown that
			\begin{align*}
				\binmult{B}{\target}  &= \binmult{\binb}{\targetassignment}\\
				&= \sum_{\subconfig \in \bsubs} \binmult{B}{\subconfig} \cdot \binmult{\subconfig}{\targetassignment}  \\
				&= \sum_{\subconfig \in \bsubs} \binmult{B}{\subconfig} \cdot \demand(\outnode_\subconfig)\tag{by line \ref{line:setZdemands} of Algorithm \ref{alg:reconstruct_sequence}}
			\end{align*}
			as required.
			
			\item[Constraint \ref{con:six}] We check that the flow assigned to each edge is non-negative. Consider any fixed $\subconfig$. First, for any fixed $\subconfigtwo \in \bsubs$ with $\subconfigtwo \neq \subconfig$, the value of $\flow(\internode_\subconfig\internode_\subconfigtwo)$ is initialized to 0 at line \ref{line:initflow} of Algorithm \ref{alg:reconstruct_sequence}, and the only change to this value is an increment at line \ref{line:incflow}, so we conclude that $\flow(\internode_\subconfig\internode_\subconfigtwo) \geq 0$, as required. Next, the value of $\flow(\innode_\subconfig\innode_\subconfig)$ is set at line \ref{line:setXflows} of Algorithm \ref{alg:reconstruct_sequence}, and we see that $\flow(\innode_\subconfig\internode_\subconfig) = \binmult{\subconfig}{\sourceassignment} \geq 0$, as required. Finally, the value of $\flow(\internode_\subconfig\outnode_\subconfig)$ is set at line \ref{line:setZflows} of Algorithm \ref{alg:reconstruct_sequence}, and we see that $\flow(\internode_\subconfig\outnode_\subconfig) = \binmult{\subconfig}{\targetassignment} \geq 0$, as required.
			
			\item[Constraint \ref{con:seven}] Consider an arbitrary $\subconfig \in \bsubs$. First, at line \ref{line:setXdemands} of Algorithm \ref{alg:reconstruct_sequence}, we see that $\demand(\innode_\subconfig) = -\binmult{\subconfig}{\sourceassignment} \leq 0$, as required. Next, at line \ref{line:setZdemands} of Algorithm \ref{alg:reconstruct_sequence}, we see that $\demand(\outnode_\subconfig) = \binmult{\subconfig}{\targetassignment} \geq 0$, as required.
		\end{description}
		As all constraints have been satisfied, the vector $\begin{bmatrix} \flow & \demand \end{bmatrix}$ generated from Algorithm \ref{alg:reconstruct_sequence} is a feasible solution to $\partitionILP$, as desired.
	\end{proof}
	
	\subsection{Proof of Theorem~\ref{thm-partition}}\label{sec:appendix_thm}
	
	In order to analyze the time complexity of solving $\brepackingk$, we must specify how instances are encoded and determine their length. Recall that each instance of $\brepackingk$ consists of a source configuration $\source$ and target configuration $\target$. Each of these configurations in the instance is encoded as a list of ordered pairs: the first element of the pair is the \bin type, and the second element of the pair is the number of times that this \bin type appears in the configuration, i.e., its multiplicity. By assigning a unique integer to each \bin type, it follows that the first element of each pair can be encoded using $O(\log|\bintypesnoparameters|)$ bits. Next, let $\mathit{maxmult}(\source,\target)$ denote the maximum number of times that a \bin type appears in $\source$ or $\target$, i.e., the maximum of the second entries taken over all ordered pairs in $\source$ and $\target$. Then, the second entry of each pair can be encoded using $\log(\mathit{maxmult}(\source,\target))$ bits. Finally, determine how many ordered pairs define each of $\source$ and $\target$, and let $\mathit{maxtypes}(\source,\target)$ denote the maximum of these two values. For ease of notation, we will refer to $\mathit{maxmult}(\source,\target)$ as $\mathit{maxmult}$ and we will refer to $\mathit{maxtypes}(\source,\target)$ as $\mathit{maxtypes}$. We conclude that each configuration can be encoded using $O((\log|\bintypesnoparameters|+\log(\mathit{maxmult}))\cdot \mathit{maxtypes})$ bits, and it follows that the binary encoding of each instance has length $O((\log|\bintypesnoparameters|+\log(\mathit{maxmult}))\cdot \mathit{maxtypes})$. Moreover, the binary encoding of each instance has length $\Omega(\mathit{maxtypes}+\log(\mathit{maxmult}))$ since all of the ordered pairs must be encoded in the instance (and there are at least $\mathit{maxtypes}$ of these) and the multiplicity of each \bin type must be encoded in the instance (and encoding the largest such multiplity uses $\Omega(\log(\mathit{maxmult}))$ bits). We now proceed to prove that $\brepackingk$ can be solved in $O(\mathit{maxtypes}+\log(\mathit{maxmult}))$ time.

	\thmpartition*
	
	\begin{proof}   
		At a high level, we first describe how to convert $\partitionILP$ into an integer linear program of the form $Ax \leq b$ with $A \in \mathbb{Z}^{\mu \times \nu}, b \in \mathbb{Z}^{\mu}$ with solution $x \in \mathbb{Z}_{\geq 0}^{\nu}$. The value of $\mu$ is the number of constraints, and the value of $\nu$ is the number of variables. Then, we will prove that $\mu,\nu \in O(1)$ and also provide an upper bound $s$ on the length of the binary encoding of each entry of the integer linear program, where $s$ is linear in the length of the binary encoding of the input instance of $\brepackingk$. Finally, to obtain the desired result, we will apply a result by Eisenbrand \cite{Eisenbrand2003} which proves that an integer linear program in the form described above with $O(1)$ variables and $O(1)$ constraints can be solved using $O(s)$ arithmetic operations.
		
		It will be helpful to re-write $\partitionILP$ as the equivalent integer linear program below, which was obtained by taking each term of the form $\flow(\edgesetgeneric)$ for an edge set $\edgesetgeneric$ and replacing it with $\sum_{\euv \in \edgesetgeneric} \flow(\euv)$, and writing out the non-negativity/non-positivity requirements as inequalities.
		
		\setcounter{equation}{1}
		{\footnotesize
			\begin{align}
				\nonumber \text{minimize} ~ 0^T \begin{bmatrix} \flow \\ \demand \end{bmatrix} & ~ \\
				- \left(\sum_{\euv \in \outedges(\innode_\subconfig)} \flow(\euv)\right)
				- \demand(\innode_\subconfig) &= 0 & ~ \forall \subconfig \in \bsubs \\
				\left(\sum_{\euv \in \inedges(\internode_{\subconfig})} \flow(\euv)\right)
				- 
				\left(\sum_{\euv \in \outedges(\internode_{\subconfig})} \flow(\euv)\right)
				&= 0 & ~ \forall \subconfig \in \bsubs \\
				\left(\sum_{\euv \in \inedges(\outnode_\subconfig)} \flow(\euv)\right)
				- \demand(\outnode_\subconfig) &= 0 & ~ \forall \subconfig \in \bsubs \\
				\sum_{\subconfig \in \bsubs} \binmult{B}{\subconfig} \cdot \demand(\innode_\subconfig) &= -\binmult{B}{\source} & ~ \forall B \in \bintypesnoparameters \\
				\sum_{\subconfig \in \bsubs} \binmult{B}{\subconfig} \cdot \demand(\outnode_\subconfig) &= \binmult{B}{\target} & ~ \forall B \in \bintypesnoparameters \\
				\flow(uv) &\geq 0 & ~ \forall uv \in \edgeset \\
				\demand(\innode_\subconfig) & \leq 0 & ~ \forall \subconfig \in \bsubs \\
				\demand(\outnode_\subconfig) & \geq 0 & ~ \forall \subconfig \in \bsubs\label{con:eight} \\
				\flow(uv), \demand(\innode_\subconfig), \demand(\outnode_\subconfig) &\in \mathbb{Z} & ~ \forall uv \in \edgeset,\forall \subconfig \in \bsubs\label{con:nine}
			\end{align}
		}
		To obtain a formulation $Ax \leq b$, we first convert the constraints so that they are all $\leq$-inequalities. First, duplicate each equality constraint and change the $=$ in one copy to $\leq$ and change the $=$ in the other copy to $\geq$. Next, convert each $\geq$-inequality into a $\leq$-inequality by multiplying both sides by $-1$. The resulting equivalent formulation is given below.
		{\footnotesize
			\begin{align*}
				\nonumber \text{minimize} ~ 0^T \begin{bmatrix} \flow \\ \demand \end{bmatrix} & ~ \\
				- \left(\sum_{\euv \in \outedges(\innode_\subconfig)} \flow(\euv)\right)
				- \demand(\innode_\subconfig) &\leq 0 & ~ \forall \subconfig \in \bsubs\tag{\ref{con:one}}\\
				\left(\sum_{\euv \in \outedges(\innode_\subconfig)} \flow(\euv)\right)
				+ \demand(\innode_\subconfig) &\leq 0 & ~ \forall \subconfig \in \bsubs\tag{\ref{con:one}'} \\
				\left(\sum_{\euv \in \inedges(\internode_{\subconfig})} \flow(\euv)\right)
				- 
				\left(\sum_{\euv \in \outedges(\internode_{\subconfig})} \flow(\euv)\right)
				&\leq 0 & ~ \forall \subconfig \in \bsubs\tag{\ref{con:two}} \\
				-\left(\sum_{\euv \in \inedges(\internode_{\subconfig})} \flow(\euv)\right)
				+\left(\sum_{\euv \in \outedges(\internode_{\subconfig})} \flow(\euv)\right)
				&\leq 0 & ~ \forall \subconfig \in \bsubs\tag{\ref{con:two}'} \\
				\left(\sum_{\euv \in \inedges(\outnode_\subconfig)} \flow(\euv)\right)
				- \demand(\outnode_\subconfig) &\leq 0 & ~ \forall \subconfig \in \bsubs\tag{\ref{con:three}} \\
				-\left(\sum_{\euv \in \inedges(\outnode_\subconfig)} \flow(\euv)\right)
				+ \demand(\outnode_\subconfig) &\leq 0 & ~ \forall \subconfig \in \bsubs\tag{\ref{con:three}'} \\
				\sum_{\subconfig \in \bsubs} \binmult{B}{\subconfig} \cdot \demand(\innode_\subconfig) &\leq -\binmult{B}{\source} & ~ \forall B \in \bintypesnoparameters\tag{\ref{con:four}} \\
				-\sum_{\subconfig \in \bsubs} \binmult{B}{\subconfig} \cdot \demand(\innode_\subconfig) &\leq \binmult{B}{\source} & ~ \forall B \in \bintypesnoparameters\tag{\ref{con:four}'} \\
				\sum_{\subconfig \in \bsubs} \binmult{B}{\subconfig} \cdot \demand(\outnode_\subconfig) &\leq \binmult{B}{\target} & ~ \forall B \in \bintypesnoparameters\tag{\ref{con:five}} \\
				-\sum_{\subconfig \in \bsubs} \binmult{B}{\subconfig} \cdot \demand(\outnode_\subconfig) &\leq -\binmult{B}{\target} & ~ \forall B \in \bintypesnoparameters\tag{\ref{con:five}'} \\
				-\flow(uv) &\leq 0 & ~ \forall uv \in \edgeset\tag{\ref{con:six}} \\
				\demand(\innode_\subconfig) & \leq 0 & ~ \forall \subconfig \in \bsubs\tag{\ref{con:seven}} \\
				-\demand(\outnode_\subconfig) & \leq 0 & ~ \forall \subconfig \in \bsubs & \tag{\ref{con:eight}} \\
				\flow(uv), \demand(\innode_\subconfig), \demand(\outnode_\subconfig) &\in \mathbb{Z} & ~ \forall uv \in \edgeset,\forall \subconfig \in \bsubs & \tag{\ref{con:nine}}
			\end{align*}
		}
		
		The set of variables of $\partitionILP$ consists of: one variable of the form $\flow(\euv)$ for each $\euv \in E(G)$, one variable of the form $\demand(\innode_\subconfig)$ for each $\subconfig \in \bsubs$, and one variable of the form $\demand(\outnode_\subconfig)$ for each $\subconfig \in \bsubs$. Thus, the number of variables is $\nu = |\edgeset|+2\cdot|\bsubs|$.
		
		To define the matrix $A$, we start by creating one column for each of the variables listed above. Then, we create one row in $A$ and a corresponding entry in the vector $b$ for each of the constraints of types \ref{con:one} through \ref{con:eight}, as follows:
		
		\begin{description}
			\item[Constraint (\ref{con:one}):] For each $\subconfig \in \bsubs$, create a row in $A$ and a corresponding entry in $b$. In each such row of $A$, write $-1$ in the column corresponding to variable $\demand(\innode_\subconfig)$, write $-1$ in the column corresponding to $\flow(\euv)$ for each $\euv \in \outedges(\innode_\subconfig)$, and write $0$ in all other columns. In the corresponding entry of $b$, write $0$.
			\item[Constraint (\ref{con:one}'):] For each $\subconfig \in \bsubs$, create a row in $A$ and a corresponding entry in $b$. In each such row of $A$, write $1$ in the column corresponding to variable $\demand(\innode_\subconfig)$, write $1$ in the column corresponding to $\flow(\euv)$ for each $\euv \in \outedges(\innode_\subconfig)$, and write $0$ in all other columns. In the corresponding entry of $b$, write $0$.
			\item[Constraint (\ref{con:two}):] For each $\subconfig \in \bsubs$, create a row in $A$ and a corresponding entry in $b$. In each such row of $A$, write $1$ in the column corresponding to $\flow(\euv)$ for each $\euv \in \inedges(\internode_\subconfig)$, write $-1$ in the column corresponding to $\flow(\euv)$ for each $\euv \in \outedges(\internode_\subconfig)$, and write $0$ in all other columns. In the corresponding entry of $b$, write $0$.
			\item[Constraint (\ref{con:two}'):] For each $\subconfig \in \bsubs$, create a row in $A$ and a corresponding entry in $b$. In each such row of $A$, write $-1$ in the column corresponding to $\flow(\euv)$ for each $\euv \in \inedges(\internode_\subconfig)$, write $1$ in the column corresponding to $\flow(\euv)$ for each $\euv \in \outedges(\internode_\subconfig)$, and write $0$ in all other columns. In the corresponding entry of $b$, write $0$.
			\item[Constraint (\ref{con:three}):] For each $\subconfig \in \bsubs$, create a row in $A$ and a corresponding entry in $b$. In each such row of $A$, write $-1$ in the column corresponding to variable $\demand(\outnode_\subconfig)$, write $1$ in the column corresponding to $\flow(\euv)$ for each $\euv \in \inedges(\outnode_\subconfig)$, and write $0$ in all other columns. In the corresponding entry of $b$, write $0$.
			\item[Constraint (\ref{con:three}'):] For each $\subconfig \in \bsubs$, create a row in $A$ and a corresponding entry in $b$. In each such row of $A$, write $1$ in the column corresponding to variable $\demand(\outnode_\subconfig)$, write $-1$ in the column corresponding to $\flow(\euv)$ for each $\euv \in \inedges(\outnode_\subconfig)$, and write $0$ in all other columns. In the corresponding entry of $b$, write $0$.
			\item[Constraint (\ref{con:four}):] For each $\binb \in \bintypesnoparameters$, create a row in $A$ and a corresponding entry in $b$. In each such row of $A$, write $\binmult{\binb}{\subconfig}$ in the column corresponding to variable $\demand(\innode_\subconfig)$ for each $\subconfig \in \bsubs$, and write $0$ in all other columns. In the corresponding entry of $b$, write $-\binmult{\binb}{\source}$.
			\item[Constraint (\ref{con:four}'):] For each $\binb \in \bintypesnoparameters$, create a row in $A$ and a corresponding entry in $b$. In each such row of $A$, write $-\binmult{\binb}{\subconfig}$ in the column corresponding to variable $\demand(\innode_\subconfig)$ for each $\subconfig \in \bsubs$, and write $0$ in all other columns. In the corresponding entry of $b$, write $\binmult{\binb}{\source}$.
			\item[Constraint (\ref{con:five}):] For each $\binb \in \bintypesnoparameters$, create a row in $A$ and a corresponding entry in $b$. In each such row of $A$, write $\binmult{\binb}{\subconfig}$ in the column corresponding to variable $\demand(\outnode_\subconfig)$ for each $\subconfig \in \bsubs$, and write $0$ in all other columns. In the corresponding entry of $b$, write $\binmult{\binb}{\target}$.
			\item[Constraint (\ref{con:five}'):] For each $\binb \in \bintypesnoparameters$, create a row in $A$ and a corresponding entry in $b$. In each such row of $A$, write $-\binmult{\binb}{\subconfig}$ in the column corresponding to variable $\demand(\outnode_\subconfig)$ for each $\subconfig \in \bsubs$, and write $0$ in all other columns. In the corresponding entry of $b$, write $-\binmult{\binb}{\target}$.
			\item[Constraint (\ref{con:six}):] For each $\euv \in \edgeset$, create a row in $A$ and a corresponding entry in $b$. In each such row of $A$, write $-1$ in the column corresponding to $\flow(\euv)$, and write $0$ in all other columns. In the corresponding entry of $b$, write $0$.
			\item[Constraint (\ref{con:seven}):] For each $\subconfig \in \bsubs$, create a row in $A$ and a corresponding entry in $b$. In each such row of $A$, write $1$ in the column corresponding to $\demand(\innode_\subconfig)$, and write $0$ in all other columns. In the corresponding entry of $b$, write $0$.
			\item[Constraint (\ref{con:eight}):] For each $\subconfig \in \bsubs$, create a row in $A$ and a corresponding entry in $b$. In each such row of $A$, write $-1$ in the column corresponding to $\demand(\outnode_\subconfig)$, and write $0$ in all other columns. In the corresponding entry of $b$, write $0$.
		\end{description}
		
		This completes the construction of $A$ and $b$, and thus we now have an integer linear program of the form $Ax \leq b$ that is equivalent to $\partitionILP$. The number of rows in each of $A$ and $b$ is calculated by adding up the number of constraints of types \ref{con:one} through \ref{con:eight}, i.e., $\mu = 8\cdot|\bsubs| + 4\cdot|\bintypesnoparameters| + |\edgeset|$. The number of columns in $A$ is the number of variables $\nu = |\edgeset|+2\cdot|\bsubs|$.
		
		We next set out to provide upper bounds on the values of $\mu$ and $\nu$ using expressions involving only $\partitionbound$ and $\capacity$. First, recall that $\bintypesnoparameters$ was defined as the set of all possible \bins with volume at most $\capacity$ that can be created using integer items from the universe $U$. To obtain an upper bound on the number of such \bins, note that each possible \bin is an integer partition of $\capacity$ with at most one part representing the amount of slack in the \bin, so $|\bintypesnoparameters|$ is bounded above by the number of integer partitions of $\capacity$ (denoted by the \emph{partition function} $p(\capacity)$), multiplied by $\capacity+1$ to account for each possible part of the integer partition that can represent the slack (if any) since a \bin can contain at most $\capacity$ items. Using the asymptotic expression from \cite{hardy1918asymptotic} for the partition function, we conclude that $|\bintypesnoparameters| \leq (\capacity+1)p(\capacity) \in \Theta(\frac{(\capacity+1)e^{\pi\sqrt{2\kappa/3}}}{4\kappa\sqrt{3}})$.

		Next, recall that $\bsubs$ was defined as the set of all non-empty configurations consisting of at most $\partitionbound$ \bins of capacity $\capacity$ that can be created using integer items from the universe $U$. To obtain an upper bound on the number of such configurations, note that each such configuration can be identified by how many of each \bin type it contains, i.e., it can be identified as a set of ordered pairs $\{(1,\#_1),\ldots,(|\bintypesnoparameters|,\#_{|\bintypesnoparameters|})\}$ where each tuple $(i,\#_i)$ is interpreted as $i \in \{1,\ldots,|\bintypesnoparameters|\}$ representing a fixed \bin type and $\#_i$ representing how many times \bin type $i$ appears in the configuration. Since the number of \bins in the configuration is at most $\partitionbound$, we conclude that $\#_i \in \{0,\ldots,\partitionbound\}$ for each $i \in \{1,\ldots,|\bintypesnoparameters|\}$, and it follows that the number of such sets of ordered pairs is bounded above by $(\partitionbound+1)^{|\bintypesnoparameters|}$, which implies that $|\bsubs| \leq (\partitionbound+1)^{|\bintypesnoparameters|} \leq (\partitionbound+1)^{(\capacity+1)p(\capacity)}$. Finally, recall that the graph $\graph$ has three vertices $\innode_\subconfig, \internode_\subconfig, \outnode_\subconfig$ for each $\subconfig \in \bsubs$ and has edge set $\edgeset = \{\innode_\subconfig \internode_\subconfig: \subconfig \in \bsubs \} 
		\cup \{\internode_{\subconfig} \internode_{\subconfigtwo}: \subconfig, \subconfigtwo \in \bsubs \text{, $\subconfig$ and $\subconfigtwo$ are adjacent} \}
		\cup \{\internode_\subconfig \outnode_\subconfig: \subconfig \in \bsubs \}$. Thus, by bounding the size of each of the edge sets in the definition of $\edgeset$, we can bound $|\edgeset|$ from above using the expression $|\bsubs| + \binom{|\bsubs|}{2} + |\bsubs| \leq |\bsubs| + |\bsubs|^2 + |\bsubs| \leq 3\cdot|\bsubs|^2 \leq 3\cdot\left((\partitionbound+1)^{(\capacity+1)p(\capacity)}\right)^2$. Substituting the computed upper bounds on $|\bintypesnoparameters|$, $|\bsubs|$, and $|\edgeset|$ back into our expressions for $\mu$ and $\nu$, we get $\mu \leq 8\cdot\left((\partitionbound+1)^{(\capacity+1)p(\capacity)}\right) + 4\cdot\left((\capacity+1)p(\capacity)\right) + \left(3\cdot\left((\partitionbound+1)^{(\capacity+1)p(\capacity)}\right)^2\right)$ and $\nu \leq \left(3\cdot\left((\partitionbound+1)^{(\capacity+1)p(\capacity)}\right)^2\right)+2\cdot\left((\partitionbound+1)^{(\capacity+1)p(\capacity)}\right)$.
		
		Recall that $\brepackingk$ problem is defined with respect to fixed constants $\partitionbound$ and $\capacity$, i.e., these values do not vary across input instances. Using this fact along with the upper bounds established in the previous paragraph, we conclude that $|\bintypesnoparameters|,|\bsubs|,|\edgeset|,\mu,\nu \in O(1)$ with respect to the size of the instances of $\brepackingk$. Also, since the number of \bins in any $\subconfig \in \bsubs$ is bounded above by $\beta \in O(1)$, it follows that the value of any expression of the form $\binmult{\binb}{\subconfig}$ for $\subconfig \in \bsubs$ is at most $\beta \in O(1)$. Moreover, since we defined $\mathit{maxmult}$ to be the maximum number of times that a \bin type appears in $\source$ or $\target$, it follows that the value of any expression of the form $\binmult{\binb}{\source}$ or $\binmult{\binb}{\target}$ is at most $\mathit{maxmult}$.
		
		Next, we show that the integer linear program $Ax \leq b$ can be solved quickly. Theorem 1 in Eisenbrand \cite{Eisenbrand2003} showed that such an integer linear program with $O(1)$ variables and $O(1)$ constraints can be solved using $O(s)$ arithmetic operations, where $s$ is an upper bound on the length of the binary encoding of each entry. Since we previously showed that the number of variables $\nu$ and the number of constraints $\mu$ are both $O(1)$, it is sufficient for us to compute such an upper bound $s$. Namely, it is sufficient to find the maximum absolute value taken over all entries in $A$ and $b$. However, by the construction of $A$ and $b$, we note that each entry is taken from the set {\small $\{-1,0,1\} \cup \left(\bigcup_{\subconfig \in \bsubs}\{\binmult{\binb}{\subconfig},-\binmult{\binb}{\subconfig}\}\right) \cup \left(\bigcup_{\binb \in \bintypesnoparameters} \{\binmult{\binb}{\source},-\binmult{\binb}{\source}\}\right) \cup \left(\bigcup_{\binb \in \bintypesnoparameters} \{\binmult{\binb}{\target},-\binmult{\binb}{\target}\}\right)$}. From the previous paragraph, we conclude that the absolute value of each entry of $A$ and $b$ is bounded above by $\mathit{maxmult}$, and thus the length of the binary encoding of each entry is bounded above by $O(\log(\mathit{maxmult}))$. So, we have proven that there is an upper bound $s \in O(\log(\mathit{maxmult}))$ on the length of the binary encoding of each constraint, which, by Theorem 1 in \cite{Eisenbrand2003}, proves that the integer linear program $Ax \leq b$ can be solved in $O(\log(\mathit{maxmult}))$ time.
		
		We now prove that $\partitionILP$ can be solved in $O(\mathit{maxtypes} + \log(\mathit{maxmult}))$ time by verifying that it can be converted to the integer linear program $Ax \leq b$ in $O(\mathit{maxtypes})$ time. The conversion described earlier in the proof describes how each entry of $A$ and $b$ is computed, and since there are $\mu\nu + \mu \in O(1)$ such entries, the overall computation time is bounded asymptotically by the time needed to compute the entry with largest absolute value. Since all entries come from the set {\small $\{-1,0,1\} \cup \left(\bigcup_{\subconfig \in \bsubs}\{\binmult{\binb}{\subconfig},-\binmult{\binb}{\subconfig}\}\right) \cup \left(\bigcup_{\binb \in \bintypesnoparameters} \{\binmult{\binb}{\source},-\binmult{\binb}{\source}\}\right) \cup \left(\bigcup_{\binb \in \bintypesnoparameters} \{\binmult{\binb}{\target},-\binmult{\binb}{\target}\}\right)$}, it suffices to determine the time to compute the \bin multiplicities. For any value of form $\binmult{\binb}{\subconfig}$ with $\subconfig \in \bsubs$, we note that $\subconfig$ consists of at most $\partitionbound$ \bins, each of capacity $\capacity$, where $\partitionbound,\capacity \in O(1)$, so scanning the entire configuration $\subconfig$ and counting how many times $\binb$ occurs will take $O(1)$ time. Computing a value of the form $\binmult{\binb}{\source}$ or $\binmult{\binb}{\target}$ takes $O(\mathit{maxtypes})$ time, since computing each quantity involves scanning all ordered pairs in the corresponding configuration in the input to find the pair containing $\binb$ as its first element. 
		
		Thus, from the discussion above, we conclude that $\partitionILP$ can be solved in $O(\mathit{maxtypes} + \log(\mathit{maxmult}))$ time. By Lemma~\ref{lem:partition_reduction}, solving $\partitionILP$ tells us whether or not the corresponding instance of $\brepackingk$ is reconfigurable, which proves that $\brepackingk$ can be solved in $O(\mathit{maxtypes} + \log(\mathit{maxmult}))$ time.
	\end{proof}  
	
	\section{Conclusions and Future Work}\label{sec-future}
	
	We have introduced the area of the reconfiguration of multisets, demonstrating the hardness of the general problem, and providing algorithms for situations in which items are of bounded size, item and \bin capacities are powers of two, and items can be partitioned into smaller groups of \bins. Our results are applicable to a variety of application areas, as well as to the problem of \textsc{Bin Packing}. 
	
	In this paper, we have restricted our attention to instances in which all \bins have the same capacity and in which items and \bins are indistinguishable. 
	Future directions of research 
	include the non-uniform case, in which \bins have different capacities; variants could include restrictions on the minimum total size, restrictions on both the minimum and maximum total size, or other specifications for particular sets of \bins. The instances could be further generalized by allowing us to distinguish among items or \bins, even those of the same size or capacity. Another direction for future work is to optimize the time complexities of the algorithms studied in this paper: while our focus has been to establish the feasibility of reconfiguration, it is natural to ask how quickly one can reconfigure one configuration to another.
	
	Our work demonstrates the use of integer programming to make use of parallel computation in reconfiguration. Further investigations are required to determine whether similar techniques may be applicable more generally to other types of reconfiguration problems, and whether other techniques are amenable to parallelization of other reconfiguration problems.
	
	Each of the algorithms presented in the paper is restricted to using the amount of space provided in the source configuration. A natural extension is to ask whether there are no-instances for our problems that can be rearranged using extra space, and if so, how much space would be needed. In shifting the focus of reconfiguration from length of reconfiguration sequences to the impact of extra space, we open up a new type of reconfiguration problem. 
	
	We anticipate that {\em resource-focused reconfiguration} will find widespread use in practical settings for a host of various problems. As examples, we consider two commonly-studied reconfiguration problems, namely \textsc{Power Supply Reconfiguration}~\cite{Ito2011} and \textsc{Independent Set Reconfiguration}~\cite{Ito2011}. In the former problem, the goal is to reassign customers (each with a specified requirement) to power stations (each with a specified capacity) in such a way that aside from the customer being moved, all other customers have an uninterrupted flow of power. In a resource-focused setting, our goal would be determine how many generators might be needed to temporarily provide power in order to make reconfiguration possible. For the latter problem, a possible extra resource could be isolated vertices to which tokens could be temporarily moved in order to allow reconfiguration to occur. Depending on the problem, there may be more than one resource that can be measured in the reconfiguration process, yielding further new research directions as the impacts of resources are considered individually and in concert.
	

	\bibliographystyle{plain}
	\bibliography{sn-bibliography-tidy}

\begin{thebibliography}{10}

\bibitem{Cardinal2020}
Jean Cardinal, Erik~D. Demaine, David Eppstein, Robert~A. Hearn, and Andrew
  Winslow.
\newblock Reconfiguration of satisfying assignments and subset sums: Easy to
  find, hard to connect.
\newblock {\em Theoretical Computer Science}, 806:332--343, 2020.

\bibitem{Cook1997}
William~J. Cook, William~H. Cunningham, William~R. Pulleyblank, and Alexander
  Schrijver.
\newblock Minimum-cost flow problems.
\newblock In {\em Combinatorial Optimization}, chapter~4, pages 91--126. John
  Wiley \& Sons, Ltd, United States, 1997.

\bibitem{VMWareVM}
VMware Docs.
\newblock Migrating virtual machines, 2021.
\newblock
  \url{https://docs.vmware.com/en/VMware-vSphere/7.0/com.vmware.vsphere.vcenterhost.doc/GUID-FE2B516E-7366-4978-B75C-64BF0AC676EB.html},
  accessed: {2023-09-28}.

\bibitem{Eisenbrand2003}
Friedrich Eisenbrand.
\newblock Fast integer programming in fixed dimension.
\newblock In Giuseppe Di~Battista and Uri Zwick, editors, {\em Algorithms - ESA
  2003}, pages 196--207, Berlin, Heidelberg, 2003. Springer Berlin Heidelberg.

\bibitem{GareyJ79}
M.~R. Garey and David~S. Johnson.
\newblock {\em Computers and Intractability: {A} Guide to the Theory of
  NP-Completeness}.
\newblock W. H. Freeman, United States, 1979.

\bibitem{hardy1918asymptotic}
G.~H. Hardy and S.~Ramanujan.
\newblock Asymptotic formulaæ in combinatory analysis.
\newblock {\em Proceedings of the London Mathematical Society},
  s2-17(1):75--115, 1918.

\bibitem{ItoDErr}
Takehiro Ito and Erik~D. Demaine.
\newblock Erratum to ``{A}pproximability of the subset sum reconfiguration
  problem''.
\newblock Accessed on 2023-11-15.

\bibitem{ItoD14}
Takehiro Ito and Erik~D. Demaine.
\newblock Approximability of the subset sum reconfiguration problem.
\newblock {\em J. Comb. Optim.}, 28(3):639--654, 2014.

\bibitem{Ito2011}
Takehiro Ito, Erik~D. Demaine, Nicholas J.~A. Harvey, Christos~H.
  Papadimitriou, Martha Sideri, Ryuhei Uehara, and Yushi Uno.
\newblock On the complexity of reconfiguration problems.
\newblock {\em Theor. Comput. Sci.}, 412(12-14):1054--1065, 2011.

\bibitem{Korte2018}
Bernhard Korte and Jens Vygen.
\newblock Minimum cost flows.
\newblock In {\em Combinatorial Optimization: Theory and Algorithms}, pages
  215--244. Springer Berlin, Berlin, Heidelberg, 2018.

\bibitem{MedinaG14}
Violeta Medina and Juan~Manuel Garc{\'{\i}}a.
\newblock A survey of migration mechanisms of virtual machines.
\newblock {\em {ACM} Comput. Surv.}, 46(3):30:1--30:33, 2014.

\bibitem{Nishimura2018}
Naomi Nishimura.
\newblock Introduction to reconfiguration.
\newblock {\em Algorithms}, 11(4), 2018.

\bibitem{Heuvel2013}
Jan van~den Heuvel.
\newblock The complexity of change.
\newblock In {\em Surveys in Combinatorics 2013}, London Mathematical Society
  Lecture Note Series, pages 127--160. Cambridge University Press, United
  Kingdom, 2013.

\bibitem{Williamson2019}
David~P. Williamson.
\newblock Maximum flow algorithms.
\newblock In {\em Network Flow Algorithms}, page 23–79. Cambridge University
  Press, United Kingdom, 2019.

\end{thebibliography}
	
	\newpage
 
	\appendix

    \section{Hardness of \textsc{Restricted Bin Packing}}\label{app:hardness}
The reduction in the proof of Lemma \ref{lem:restrictedBP} involves the following two problems.\\

\textsc{Bin Packing}
    \begin{description}
		\item[Input:] a multiset of $n$ positive integers $\multiset{z_1,...,z_n}$, a positive integer $m$, and a positive integer $\alpha$.
		\item[Question:] Can $\multiset{z_1,...,z_n}$ be partitioned into at most $m$ multisets such that the numbers in each multiset sum to at most $\alpha$?
	\end{description}
	
 \textsc{Restricted Bin Packing}
    \begin{description}
		\item[Input:] a multiset of $n$ positive integers $\multiset{z_1,...,z_n}$, a positive integer $m$, and a positive integer $\alpha$ such that $\alpha \geq \max\{z_1,\ldots,z_n\}$, $\alpha \geq 2$, and $n \geq 2m+2$.
		\item[Question:] Can $\multiset{z_1,...,z_n}$ be partitioned into at most $m$ multisets such that the numbers in each multiset sum to at most $\alpha$?
	\end{description}
 
    \restrictedBP*
    \begin{proof}
        We provide a polynomial-time reduction from \textsc{Bin Packing} to \textsc{Restricted Bin Packing}. 
    
    We first deal with several classes of instances of \textsc{Bin Packing} that can be solved quickly:
		\begin{itemize}
			\item If $\alpha < \max\{z_1,\ldots,z_n\}$, then $P$ is a no-instance. This is because at least one of $z_1,\ldots,z_n$ is too large to be placed in any part of the partition.
			\item If $\alpha \geq \max\{z_1,\ldots,z_n\}$ and $m \geq n$, then $P$ is a yes-instance. This is because we can form a partition by placing each $z_i$ into its own part.
			\item If $\alpha \geq \max\{z_1,\ldots,z_n\}$ and $m=n-1$, then we can determine whether or not $P$ is a yes-instance by checking whether or not the two smallest elements in $\multiset{z_1,\ldots,z_n}$ sum to at most $\alpha$. If the sum of the two smallest elements is at most $\alpha$, then we can form a partition by placing the two smallest elements together in one part, and the remaining elements of $\multiset{z_1,\ldots,z_n}$ can each be placed in their own part. If the sum of the two smallest elements is greater than $\alpha$, then no two elements of $\multiset{z_1,\ldots,z_n}$ can be placed in the same part of the partition, so there is no partition into $n-1$ parts where each part's sum is at most $\alpha$.
			\item If $\alpha = 1$ and $m \leq n-2$, then $P$ is a no-instance. Since $n > m$, the Pigeonhole Principle implies that at least two elements of $\multiset{z_1,\ldots,z_n}$ must be placed in the same part, and since all elements of $\multiset{z_1,\ldots,z_n}$ are positive integers, it follows that the sum of any two elements is strictly bigger than $1$, i.e., strictly bigger than $\alpha$.
		\end{itemize}
		Consider any instance of \textsc{Bin Packing}, and observe that, in polynomial time, we can determine if it falls into one of the above cases, and, if so, we can determine in polynomial time whether it is a yes-instance or a no-instance. For each no-instance that falls into one of the above cases, we map it to the following no-instance of \textsc{Restricted Bin Packing}: $z_1=z_2=z_3=z_4=1$, $m=1$, $\alpha=2$. For each yes-instance that falls into one of the above cases, we map it to the following yes-instance of \textsc{Restricted Bin Packing}: $z_1=z_2=z_3=z_4=1$, $m=1$, $\alpha=4$.
        So, to complete the reduction from \textsc{Bin Packing} to \textsc{Restricted Bin Packing}, it remains to consider instances of \textsc{Bin Packing} that do not fall into any of the above cases, which we do in the remainder of the proof.
  
        Consider any instance $P=(z_1,...,z_n,m,\alpha)$ of \textsc{Bin Packing} such that $\alpha \geq \max\{z_1,\ldots,z_n\}$, $\alpha \geq 2$, and $m \leq n-2$. Create an instance $P'$ starting from $P$ and scaling each of $z_1,\ldots,z_n,\alpha$ up by a factor of $4m+4$, i.e., $P' = (z_1',\ldots,z_n',m,\alpha')$ where $z_i' = (4m+4)z_i$ for $i \in \{1,\ldots,n\}$ and $\alpha' = (4m+4)\alpha$. Note that $P$ is a yes-instance if and only if the scaled instance $P'$ is a yes-instance, since, for any multiset $\mathcal{M}$ taken from $z_1,\ldots,z_n$, we see that $\sum_{z_i \in \mathcal{M}} z_i \leq \alpha$ if and only if $\sum_{z_i \in \mathcal{M}} (4m+4)z_i \leq (4m+4)\alpha$. Next, we create an instance $P''$ starting from $P'$ and adding $2m+2$ copies of $1$ to the multiset of integers to be partitioned, and increasing the bound $\alpha'$ by an additive $2m+2$ term, i.e., $P''=(1_1,\ldots,1_{2m+2},z_1',\ldots,z_n',m,\alpha'')$ where $1_j$ represents the $j$th added copy of 1 for each $j \in \{1,\ldots,2m+2\}$, the integer $z_i'$ is equal to $(4m+4)z_i$ for each $i \in \{1,\ldots,n\}$, and $\alpha'' = (4m+4) \alpha + (2m + 2)$. Note that constructing this instance takes polynomial time since adding $2m+2$ copies of 1 to the multiset of integers to be partitioned takes $O(n)$ time (since we are only considering instances of \textsc{Bin Packing} where $m \leq n-2$). We see that $P'$ is a yes-instance if and only if $P''$ is a yes-instance, since all of the scaled integers $z_1',\ldots,z_n'$ are strictly larger than $2m+2$, so the extra $2m+2$ included in the bound $\alpha''$ can only accommodate the $2m+2$ added copies of 1. Finally, we confirm that the instance $P''$ satisfies the three input restrictions given in the definition of \textsc{Restricted Bin Packing}:
        \begin{itemize}
            \item Due to including $2m+2$ copies of 1 alongside the integers $z_1',\ldots,z_n'$, we see that the new instance $P''$ has $n + (2m+2) > 2m+2$ elements in the multiset of integers to partition, as required.
            \item Since we are only considering instances of \textsc{Bin Packing} where $\alpha \geq 2$, we see that $\alpha'' = (4m+4)\alpha + (2m+2) \geq 10m+10 > 2$, as required.
            \item Consider any $i \in \{1,\ldots,n\}$. Since we are only considering instances of \textsc{Bin Packing} where $\alpha \geq \max\{z_1,\ldots,z_n\}$, we know that $\alpha \geq z_i$. Along with the fact that we set $z_i' = (4m+4)z_i$, it follows that $\alpha'' = (4m+4)\alpha + (2m+2) \geq (4m+4)z_i + (2m+2) = z_i' + (2m+2) > z_i'$. Thus, we have shown that $\alpha'' > \max\{z_1',\ldots,z_n'\}$. We also proved above that $\alpha'' > 2$, so we conclude that $\alpha'' \geq \max\{1_1,\ldots,1_{2m+2},z_1',\ldots,z_n'\}$, as required.
        \end{itemize}
    \end{proof}

	\section{Proof of Correctness for \textit{Check-Settled}}\label{App:CheckSettled}
	\CheckSettledCorrect*
	\begin{proof}
		Suppose that, during the execution of \textit{Check-Settled$(s,\configuration)$}, when considering all item sizes $u \geq s$ in $\configuration$ in increasing order, it is possible to find a bijection $\varphi_u$ from the multiset of \bins $B$ in $\configuration\setminus\varphi$ that contain an item of size $u$ to the multiset of \bins in $\target\setminus\varphi$ that contain an item of size $u$ in such a way that $\AtLeast{u}{B}{\configuration} = \AtLeast{u}{\varphi_u(B)}{\target}$. We consider the mapping $\varphi$ at the end of the execution of \textit{Check-Settled$(s,\configuration)$} and prove that it is a settling bijection for item size $s$. Consider two item sizes $u',u'' \geq s$ such that $u' < u''$, and notice that the domains of $\varphi_{u'}$ and $\varphi_{u''}$ are disjoint, and that the ranges of $\varphi_{u'}$ and $\varphi_{u''}$ are disjoint: since $u' < u''$, the bijection $\varphi_{u'}$ was already included in $\varphi$ when constructing $\varphi_{u''}$, so the \bins in the domain and range of $\varphi_{u'}$ were excluded when constructing $\varphi_{u''}$. From this fact, we conclude that the size of the domain of $\varphi$ is equal to the sum of the sizes of the domains of all the $\varphi_u$ for $u \geq s$, and, the size of the range of $\varphi$ is equal to the sum of the sizes of the ranges of all the $\varphi_u$ for $u \geq s$. But since each $\varphi_u$ is a bijection, the size of its domain and range are equal, so it follows that the domain and range of $\varphi$ have equal size, which proves the $\varphi$ is a bijection. Next, for any $u \geq s$, notice that any \bin $\binb$ in $\configuration$ that contains an item of size $u$ is in the domain of $\varphi$, since, if $\binb$ was not included in the domain of any $\varphi_{u'}$ with $u' < u$, then it was included in the domain of $\varphi_u$ when $\varphi_u$ was constructed. Similarly, any \bin $\binb$ in $\target$ that contains an item of size $u$ is in the range of $\varphi$. Thus, we conclude that $\varphi$ is a bijection from the multiset of \bins containing an item of size at least $s$ in $\configuration$ to the multiset of \bins containing an item of size at least $s$ in $\target$. Finally, we need show that $\AtLeast{s}{B}{\configuration} = \AtLeast{s}{\varphi(B)}{\target}$ for each \bin $\binb$ in the domain of $\varphi$. Consider any \bin $\binb$ in the domain of $\varphi$, and of all the items in $\binb$ that have size at least $s$, let $u$ be the size of the smallest. Since the item sizes were processed in increasing order, it follows that $\binb$ is in the domain of $\varphi_u$, and, since we assumed that $\AtLeast{u}{B}{\configuration} = \AtLeast{u}{\varphi_u(B)}{\target}$, it follows that $\AtLeast{u}{B}{\configuration} = \AtLeast{u}{\varphi(B)}{\target}$, and, since no items smaller than $u$ in $B$ have size at least $s$, it follows that $\AtLeast{s}{B}{\configuration} = \AtLeast{s}{\varphi(B)}{\target}$. This concludes the proof that $\varphi$ is a settling bijection for item size $s$.
		
		The above fact immediately gives us one direction of the correctness proof: if the execution of \textit{Check-Settled$(s,\configuration)$} is successful in finding each bijection $\varphi_u$ for $u \geq s$, i.e., it outputs ``settled", then we can conclude that $s$ is settled in $\configuration$ since the final mapping $\varphi$ is in fact a settling bijection for item size $s$.
		
		For the other direction of the correctness proof, suppose that $s$ is settled in $\configuration$. We prove that the execution of \textit{Check-Settled$(s,\configuration)$} returns ``settled", i.e., for each $u \geq s$, it is possible to find a bijection $\varphi_u$ from the multiset of \bins $B$ in $\configuration\setminus\varphi$ that contain an item of size $u$ to the multiset of \bins in $\target\setminus\varphi$ that contain an item of size $u$ in such a way that $\AtLeast{u}{B}{\configuration} = \AtLeast{u}{\varphi_u(B)}{\target}$. From the first fact we proved above, this implies the desired claim that \textit{Check-Settled$(s,\configuration)$} also produces a settling bijection for $s$. Now, starting from the fact that $s$ is settled in $\configuration$, we know by definition that there exists a settling bijection for $s$, which we call $\varphi^*$. We will use $\varphi^*$ to show, by induction, the existence of the desired $\varphi_u$ for each $u \geq s$. Consider an arbitrary $u \geq s$, assume that the desired bijection $\varphi_{u'}$ was found for all item sizes $u'$ with $s \leq u' < u$, and denote by $\varphi$ the union of all such $\varphi_{u'}$. In the base case, for item size $u=s$, note that it is vacuously true that the desired bijection $\varphi_{u'}$ was found for all item sizes $u'$ with $s \leq u' < u$. To construct $\varphi_u$, first add to the domain of $\varphi_u$ all of the \bins of $\configuration$ whose smallest item of size at least $s$ has size $u$. Then, for each \bin $\binb$ that was added to the domain of $\varphi_u$, we set $\varphi_u(B) = \varphi^*(B)$. We prove that the resulting $\varphi_u$ satisfies all the desired properties:
		\begin{itemize}
			\item Since each \bin $\binb$ that was added to the domain of $\varphi_u$ contains no items smaller than $u$ and at least $s$, we know that it is not added to the domain of any $\varphi_{u'}$ for any $s \leq u' < u$, so $\binb \in \configuration\setminus\varphi$.
			\item We confirm that the \bin $\varphi_u(B)$ was not included in the range of any $\varphi_{u'}$ with $s \leq u' < u$, i.e., we confirm that $\varphi_u(B) \in \target\setminus\varphi$. To obtain a contradiction, assume there is a \bin $B' \neq B$ in the domain of some $\varphi_{u'}$ such that $\varphi_{u'}(B') = \varphi_{u}(B)$, and notice by our construction that $\varphi^*(B') = \varphi_{u'}(B') = \varphi_{u}(B) = \varphi^*(B)$, which contradicts the fact that $\varphi^*$ is a bijection.
			\item We confirm that $\varphi_u$ is a mapping from the multiset of \bins $B$ in $\configuration\setminus\varphi$ that contain an item of size $u$ to the multiset of \bins in $\target\setminus\varphi$ that contain an item of size $u$. First, consider any \bin $\binb$ in $\configuration$ that contains an item of size $u \geq s$: if the smallest item with size at least $s$ in this \bin has size $u' < u$, then $\binb$ would be in the domain of $\varphi_{u'}$, and so it would be in the domain of $\varphi$; however, if the smallest item with size at least $s$ in $\binb$ has size $u$, then, by construction, it would be in the domain of $\varphi_{u}$. It follows that every \bin in $\configuration\setminus\varphi$ that contain an item of size $u$ is in the domain of $\varphi_u$. A similar proof shows that every \bin in $\target\setminus\varphi$ that contain an item of size $u$ is in the range of $\varphi_u$. 
			\item We prove that $\varphi_u$ is a bijection. To see why it is one-to-one: since $\varphi^*$ is one-to-one and $\varphi_u(B) = \varphi^*(B)$ for each $B$ in the domain of $\varphi_u$, then for any two distinct $B,B'$, we see that $\varphi_u(B) = \varphi^*(B) \neq \varphi^*(B') = \varphi_u(B')$. To see why it is onto: consider any \bin $B_T \in \target\setminus\varphi$ that contains an item of size $u$, and since $\varphi^*$ is onto, there exists a \bin $B \in \configuration$ such that $\varphi^*(B) = B_T$. Since $\varphi^*$ is a settling bijection, this $B$ satisfies $\AtLeast{s}{B}{\configuration} = \AtLeast{s}{\varphi^*(B)}{\target}$, so it follows that $B$ also contains an item of size $u$. Thus, $B$ is in the domain of some $\varphi_{u'}$ for some $u' \leq u$. However, it must be the case that $B$ is in the domain of $u$: otherwise, i.e., if $B$ is in the domain of some $\varphi_{u'}$ with $u' < u$, then $\varphi_{u'}(B) = B_T$ would be in the range of $\varphi_{u'}$, which contradicts the fact that $B_T \in \target\setminus\varphi$. Thus, we have shown that there exists a $B$ in the domain of $\varphi_u$ such that $\varphi_u(B) = B_T$, which concludes the proof that $\varphi_u$ is onto. Since $\varphi_u$ is one-to-one and onto, it is a bijection.
			\item Since $\varphi^*$ is a settling bijection for $s$, it satisfies the property $\AtLeast{s}{B}{\configuration} = \AtLeast{s}{\varphi^*(B)}{\target}$, and, because $u \geq s$,  it also satisfies $\AtLeast{u}{B}{\configuration} = \AtLeast{u}{\varphi^*(B)}{\target}$ , and since we set $\varphi_u(B) = \varphi^*(B)$, it follows that $\AtLeast{u}{B}{\configuration} = \AtLeast{u}{\varphi_u(B)}{\target}$.
		\end{itemize}  
	\end{proof}
	
	\section{Proofs for Section \ref{sec:appendix_flow}}\label{app:flows}
	
	\flowsumsub*
	\begin{proof}
		The proof is based on Lemma 2.19 in \cite{Williamson2019}. For the first case, we have to show that $\flow = \flow_1 \flowsum \flow_2$ is a flow, namely $\flow(\inedges(v)) - \flow(\outedges(v)) = 0$ for all $v \in \vertexset \setminus (\inputnodes \cup \outputnodes)$. 
		Since $\flow_1$ and $\flow_2$ are flows, they each satisfy the conservation constraint, so, for all $v \in \vertexset \setminus (\inputnodes \cup \outputnodes)$, we have
		\begin{align*}
			\flow(\inedges(v)) - \flow(\outedges(v)) &= [\flow_1(\inedges(v)) + \flow_2(\inedges(v))] - \left[\flow_1(\outedges(v)) + \flow_2(\outedges(v))\right] \\
			&= [\flow_1(\inedges(v)) - \flow_1(\outedges(v))] + [\flow_2(\inedges(v)) - \flow_2(\outedges(v))] \\
			&= 0.
		\end{align*}
		Furthermore, by the definition of the sum of flows,
		\begin{align*}
			\abs{\flow} = \sum_{\innode \in \inputnodes} \flow(\outedges(\innode)) &= \sum_{\innode \in \inputnodes} [\flow_1(\outedges(\innode)) + \flow_2(\outedges(\innode))]  \\
			&= \left[\sum_{\innode \in \inputnodes} \flow_1(\outedges(\innode))\right] + \left[\sum_{\innode \in \inputnodes} \flow_2(\outedges(\innode))\right]\\
			&= \abs{\flow_1} + \abs{\flow_2}.
		\end{align*}
		
		The proof of the flow subtraction case is similar.
		If $\flow_2$ is a subflow of $\flow_1$, then $\flow(uv) = \flow_1(uv) - \flow_2(uv)$ will always be non-negative. So, to show that $\flow$ is a flow, we just have to show that it satisfies the conservation constraint, i.e., for all $v \in \vertexset \setminus (\inputnodes \cup \outputnodes)$, we have
		\begin{align*}
			\flow(\inedges(v)) - \flow(\outedges(v)) &= [\flow_1(\inedges(v)) - \flow_2(\inedges(v))] - [\flow_1(\outedges(v)) - \flow_2(\outedges(v))] \\
			&= [\flow_1(\inedges(v)) - \flow_1(\outedges(v))] - [\flow_2(\inedges(v)) - \flow_2(\outedges(v))] \\
			&= 0.
		\end{align*}
		Similarly, by the definition of the subtraction of flows, we have 
		\begin{align*}
			\abs{\flow} = \sum_{\innode \in \inputnodes} \flow(\outedges(\innode)) &= \sum_{\innode \in \inputnodes} [\flow_1(\outedges(\innode)) - \flow_2(\outedges(\innode))]  \\
			&= \left[\sum_{\innode \in \inputnodes} \flow_1(\outedges(\innode))\right] - \left[\sum_{\innode \in \inputnodes} \flow_2(\outedges(\innode))\right] \\
			&= \abs{\flow_1} - \abs{\flow_2}.
		\end{align*}   
	\end{proof}

	\flowdecomp*
	\begin{proof}
		This proof uses the main ideas from the proof of Lemma 2.20 in \cite{Williamson2019}, and our result differs in that we are not providing a decomposition of the flow $\flow$ into $\ell$ subflows (where $\ell$ is the number of positive edges) whose sum is $\flow$: instead, we are providing a collection of $|\flow|$ unit path flows (each is a subflow of $\flow$, and its path includes an origin and destination node) whose sum is not necessarily equal to $\flow$.   
		
		In what follows, for any flow $\flow$ and any edge $e$, we say that $e$ is a \emph{positive edge in $\flow$} if $\flow(e) > 0$, and otherwise we say that $e$ is a \emph{zero edge in $\flow$}. The proof proceeds by induction on the number of positive edges in the flow. 
		
		For the base case, consider a flow $\flow$ with no positive edges, i.e., all edges in $\flow$ are zero edges. Then $\abs{\flow} = 0$ and the lemma statement is vacuously true.
		
		As induction hypothesis, assume that, for some positive integer $\ell$, the lemma statement is true for all flows $\flow$ on $\graph$ that have strictly fewer than $\ell$ positive edges. 
		
		For the inductive step, we prove that the lemma statement holds for an arbitrary flow $\flow$ on $\graph$ that has exactly $\ell >0$ positive edges. Starting from an arbitrary edge $uv$ such that $\flow(uv) > 0$, we repeatedly add positive edges until we form a $(\inputnodes, \outputnodes)$-path or a directed cycle consisting of positive edges (where the latter need not contain $uv$). Such a process is possible due to flow conservation: if $u \not\in \inputnodes$, there exists a node $u'$ such that $\flow(u'u) > 0$, and if $v \notin \outputnodes$, then there exists a node $v'$ such that $\flow(vv') > 0$. We consider two cases depending on the outcome of the above process:
		\begin{itemize}
			\item {\bf Case 1: The process forms a $(\inputnodes, \outputnodes)$-path $H$.} \\
			Let $\delta$ be the minimum value of $\flow(e)$ taken over all edges $e$ in $H$, and notice that $\delta$ is a positive integer since the process used to construct $H$ only included positive edges.
			
			We define an integral path flow $\flow^{*}$ by setting $\flow^{*}(e)=\delta$ if edge $e$ is in $H$, and $\flow^{*}(e)=0$ otherwise. We provide some facts about $\flow^{*}$ for later use:
			\begin{itemize}
				\item {\bf Fact 1:} $\abs{\flow^{*}} = \delta$. \\
				To see why, note that since $H$ is an $(\inputnodes, \outputnodes)$-path, there is exactly one edge in $H$ that is incident to a node in $X$. By definition, $\flow^{*}$ evaluates to $\delta$ on this edge, and $\flow^{*}$ evaluates to 0 on all other edges incident to a node in $X$. So, the fact follows from the definition of $\abs{\flow^{*}}$.
				\item {\bf Fact 2:} $\flow^{*}$ is a subflow of $\flow$. \\
				To see why, note that, by our definition of $\delta$, we know that $\flow(e) \geq \delta$ for each edge $e$ in $H$. So, from our definition of $\flow^{*}$, it follows that $\flow^{*}(e) = \delta \leq \flow(e)$ for each edge $e$ in $H$, and, $\flow^{*}(e) = 0 \leq \flow(e)$ for all other edges in $\graph$. Thus, the fact follows from the definition of subflow.
				\item {\bf Fact 3:} There exist $\delta$ unit path flows, denoted by $\flow^{*}_1,\ldots,\flow^{*}_\delta$, such that $\flow^{*}_i$ is a subflow of $\flow^{*}$ for each $i \in \{1,\ldots,\delta\}$, and, $\flow^{*} = \flowsummation_{i=1}^{\delta} \flow^{*}_{i}$. \\
				To see why, for each $i \in \{1,\ldots,\delta\}$, define $\flow^{*}_i$ by setting $\flow^{*}_i(e) = 1$ for each edge $e$ in $H$, and setting $\flow^{*}_i(e) = 0$ for all other edges in $\graph$. First, since $H$ is an $(\inputnodes, \outputnodes)$-path, it is clear that each $\flow^{*}_i$ is a unit path flow. Next, for each edge $e$ in $H$, the flow $\flowsummation_{i=1}^{\delta} \flow^{*}_{i}$ evaluates to $\delta$ on $e$, and evaluates to 0 on all other edges, which proves that $\flow^{*} = \flowsummation_{i=1}^{\delta} \flow^{*}_{i}$. Finally, consider an arbitrary $i \in \{1,\ldots,\delta\}$. For each edge $e$ in $H$, we have $\flow^{*}_{i}(e) = 1 \leq \delta = \flow^{*}(e)$, and, for all other edges in $\graph$, we have $\flow^{*}_{i}(e) = 0 = \flow^{*}(e)$, which proves that $\flow^{*}_i$ is a subflow of $\flow^{*}$.
			\end{itemize}

			Next, we create another integral flow $\flow'$ by subtracting $\flow^{*}$ from $\flow$, i.e., define $\flow' = \flow \flowsub \flow^{*}$. We provide some facts about $\flow'$ for later use:
			\begin{itemize}
				\item {\bf Fact 4:} $\flow'$ is a flow on $\graph$, and $\abs{\flow'} = \abs{\flow} - \abs{\flow^{*}} = \abs{\flow} - \delta$. \\
				To see why, note that $\flow^{*}$ is a subflow of $\flow$ by Fact 2. Then, the fact follows by applying Lemma \ref{lem:flow_sum_sub}.
				\item {\bf Fact 5:} $\flow'$ is a subflow of $\flow$. \\
				To see why, note that, for each edge $e$ in $H$, we have $\flow'(e) = \flow(e)-\flow^{*}(e) = \flow(e)-\delta < \flow(e)$, and, for all other edges in $\graph$, we have $\flow'(e) = \flow(e)-\flow^{*}(e) = \flow(e)-0 = \flow(e)$.
				\item {\bf Fact 6:} $\flow'$ has strictly fewer positive edges than $\flow$. \\
				To see why, note that, by our definition of $\delta$, there must be at least one edge $e_\delta$ in $H$ such that $\flow(e_\delta) = \delta$, and from the definition of $\flow'$, we see that $\flow'(e_\delta) = \flow(e_\delta) - \delta = 0$. This proves that there is at least one edge that is a positive edge in $\flow$ but is a zero edge in $\flow'$. Further, since we showed above that $\flow'$ is a subflow of $\flow$, we know that $\flow'(e) \leq \flow(e)$ for each edge $e$ in $G$, so it follows that each zero edge in $\flow$ is also a zero edge in $\flow'$. Altogether, this proves that $\flow'$ has more zero edges than $\flow$, which implies that $\flow'$ has strictly fewer positive edges than $\flow$.
			\end{itemize}
			
			By Fact 6, we can apply the induction hypothesis to $\flow'$. By doing so, we obtain $\abs{\flow'}$ unit path flows on $\graph$, denoted by $\flow'_1,\ldots,\flow'_{\abs{\flow'}}$, such that: $\flow'_i$ is a subflow of $\flow'$ for each $i \in \{1,\ldots,\abs{\flow'}\}$, and, $\flowsummation_{i=1}^{\abs{\flow'}} \flow'_{i}$ is a subflow of $\flow'$. Also, by applying Fact 3, we obtain $\delta$ unit path flows, denoted by $\flow^{*}_1,\ldots,\flow^{*}_\delta$, such that $\flow^{*}_i$ is a subflow of $\flow^{*}$ for each $i \in \{1,\ldots,\delta\}$, and, $\flow^{*} = \flowsummation_{i=1}^{\delta} \flow^{*}_{i}$. We claim (and prove) that $\flow'_1,\ldots,\flow'_{\abs{\flow'}},\flow^{*}_1,\ldots,\flow^{*}_\delta$ is a collection of $\abs{\flow}$ unit path flows on $G$ that satisfy the conditions of the lemma.
			
			First, we confirm that there are $\abs{\flow}$ elements in this collection. The number of elements is $\abs{\flow'}+\delta$, which, by Fact 4, is equal to $(\abs{\flow}-\delta)+\delta = \abs{\flow}$, as required.
			
			Next, we confirm that each element in this collection is a subflow of $\flow$. Consider an arbitrary $\flow'_i$ for some $i \in \{1,\ldots,\abs{\flow'}\}$. By the induction hypothesis, $\flow'_i$ is a subflow of $\flow'$, and, by Fact 5, we know that $\flow'$ is a subflow of $\flow$, which implies that $\flow'_i$ is a subflow of $\flow$, as required. Further, consider an arbitrary $\flow^{*}_i$ for some $i \in \{1,\ldots,\delta\}$. By Fact 3, $\flow^{*}_i$ is a subflow of $\flow^{*}$, and, by Fact 2, we know that $\flow^{*}$ is a subflow of $\flow$, which implies that $\flow^{*}_i$ is a subflow of $\flow$, as required.
			
			Finally, we confirm that, by taking the sum of all flows in the collection, the resulting flow is a subflow of $\flow$. In particular, our goal is to show that $\left(\flowsummation_{i=1}^{\abs{\flow'}} \flow'_{i}\right) \flowsum \left(\flowsummation_{i=1}^{\delta} \flow^{*}_{i}\right)$ is a subflow of $\flow$. By the definition of subflow, we must show that, for each edge $e$ in $\graph$:
			$$ \left(\left(\flowsummation_{i=1}^{\abs{\flow'}} \flow'_{i}\right) \flowsum \left(\flowsummation_{i=1}^{\delta} \flow^{*}_{i}\right)\right)(e) \leq \flow(e) $$ 
			To do this, we take the left side of this inequality, and re-write it using the definition of flow sum:
			$$ \left(\left(\flowsummation_{i=1}^{\abs{\flow'}} \flow'_{i}\right) \flowsum \left(\flowsummation_{i=1}^{\delta} \flow^{*}_{i}\right)\right)(e) = \left(\flowsummation_{i=1}^{\abs{\flow'}} \flow'_{i}\right)(e) + \left(\flowsummation_{i=1}^{\delta} \flow^{*}_{i}\right)(e) $$
			
			By the induction hypothesis, we know that $\flowsummation_{i=1}^{\abs{\flow'}} \flow'_{i}$ is a subflow of $\flow'$, so, by the definition of subflow, we can bound the term $\left(\flowsummation_{i=1}^{\abs{\flow'}} \flow'_{i}\right)(e)$ from above by $\flow'(e)$. Similarly, by Fact 3, we know that $\flowsummation_{i=1}^{\delta} \flow^{*}_{i}$ is equal to $\flow^{*}$, so we can replace the term $\left(\flowsummation_{i=1}^{\delta} \flow^{*}_{i}\right)(e)$ with $\flow^{*}(e)$. Thus, we have 
			$$ \left(\flowsummation_{i=1}^{\abs{\flow'}} \flow'_{i}\right)(e) + \left(\flowsummation_{i=1}^{\delta} \flow^{*}_{i}\right)(e) \leq \flow'(e) + \flow^{*}(e) $$ 
			
			Finally, by the definition of $\flow'$, we have $\flow'(e) = \flow(e) - \flow^{*}(e)$, so the righthand side of the previous inequality is equal to $\flow'(e)+\flow^{*}(e) = (\flow(e) - \flow^{*}(e))+\flow^{*}(e) = \flow(e)$, as required. \\
			
			\item {\bf Case 2: The process forms a directed cycle $H$.} \\
			Let $\delta$ be the minimum value of $\flow(e)$ taken over all edges $e$ in $H$, and notice that $\delta$ is a positive integer since the process used to construct $H$ only included positive edges.
			
			We define $\flow^{*}$ by setting $\flow^{*}(e)=\delta$ if edge $e$ is in $H$, and $\flow^{*}(e)=0$ otherwise. We provide some facts about $\flow^{*}$ for later use:
			\begin{itemize}
				\item {\bf Fact 7:} $\flow^{*}$ is an integral flow.\\
				To see why, note that each edge is assigned a non-negative integer (either 0 or $\delta$). Further, since $H$ is a directed cycle, each node in $H$ has exactly one incoming positive edge and exactly one outgoing positive edge. Since all positive edges evaluate to $\delta$ under $\flow^{*}$, it follows that the conservation constraint is satisfied at each node in $H$. Moreover, at all other nodes in $\vertexset \setminus (\inputnodes \cup \outputnodes)$, all incident edges evaluate to 0 under $\flow^{*}$, so they also satisfy the conservation constraint.
				\item {\bf Fact 8:} $\abs{\flow^{*}} = 0$.\\
				To see why, note that since $H$ is a directed cycle, each node in $H$ has exactly one incoming positive edge and one outgoing positive edge. By the construction of $G$, the nodes in $X$ have no incoming edges, so $H$ cannot contain any nodes from $X$. In particular, this means that each edge in $H$ is not incident to any node in $X$, so it follows that $\flow^{*}(e)=0$ for each edge $e$ that is incident to a node in $X$. Therefore, from the definition of $\abs{\flow^{*}}$, it follows that $\abs{\flow^{*}} = 0$.
				\item {\bf Fact 9:} $\flow^{*}$ is a subflow of $\flow$.\\
				To see why, note that, by our definition of $\delta$, we know that $\flow(e) \geq \delta$ for each edge $e$ in $H$. So, from our definition of $\flow^{*}$, it follows that $\flow^{*}(e) = \delta \leq \flow(e)$ for each edge $e$ in $H$, and, $\flow^{*}(e) = 0 \leq \flow(e)$ for all other edges in $\graph$. Thus, the fact follows from the definition of subflow.
			\end{itemize}
			
			Next, we create another integral flow $\flow'$ by subtracting $\flow^{*}$ from $\flow$, i.e., define $\flow' = \flow \flowsub \flow^{*}$. We provide some facts about $\flow'$ for later use:
			\begin{itemize}
				\item {\bf Fact 10:} $\flow'$ is a flow on $\graph$, and $\abs{\flow'} = \abs{\flow} - \abs{\flow^{*}} = \abs{\flow}$.\\
				To see why, note that $\flow^{*}$ is a subflow of $\flow$ by Fact 9. Then, the fact follows by applying Lemma \ref{lem:flow_sum_sub} and Fact 8.
				\item {\bf Fact 11:} $\flow'$ is a subflow of $\flow$.\\
				To see why, note that, for each edge $e$ in $H$, we have $\flow'(e) = \flow(e)-\flow^{*}(e) = \flow(e)-\delta < \flow(e)$, and, for all other edges in $\graph$, we have $\flow'(e) = \flow(e)-\flow^{*}(e) = \flow(e)-0 = \flow(e)$.
				\item {\bf Fact 12:} $\flow'$ has strictly fewer positive edges than $\flow$.\\
				To see why, note that, by our definition of $\delta$, there must be at least one edge $e_\delta$ in $H$ such that $\flow(e_\delta) = \delta$, and from the definition of $\flow'$, we see that $\flow'(e_\delta) = \flow(e_\delta) - \delta = 0$. This proves that there is at least one edge that is a positive edge in $\flow$ but is a zero edge in $\flow'$. Further, since we showed above that $\flow'$ is a subflow of $\flow$, we know that $\flow'(e) \leq \flow(e)$ for each edge $e$ in $G$, so it follows that each zero edge in $\flow$ is also a zero edge in $\flow'$. Altogether, this proves that $\flow'$ has more zero edges than $\flow$, which implies that $\flow'$ has strictly fewer positive edges than $\flow$.
			\end{itemize}
			
			By Fact 12, we can apply the induction hypothesis to $\flow'$. By doing so, we obtain $\abs{\flow'}$ unit path flows on $\graph$, denoted by $\flow'_1,\ldots,\flow'_{\abs{\flow'}}$, such that: $\flow'_i$ is a subflow of $\flow'$ for each $i \in \{1,\ldots,\abs{\flow'}\}$, and, $\flowsummation_{i=1}^{\abs{\flow'}} \flow'_{i}$ is a subflow of $\flow'$.  We claim (and prove) that $\flow'_1,\ldots,\flow'_{\abs{\flow'}}$ is a collection of $\abs{\flow}$ unit path flows on $G$ that satisfy the conditions of the lemma.
			
			First, we confirm that there are $\abs{\flow}$ elements in this collection. The number of elements is $\abs{\flow'}$, which, by Fact 10, is equal to $\abs{\flow}$, as required.
			
			Next, we confirm that each element in this collection is a subflow of $\flow$. Consider an arbitrary $\flow'_i$ for some $i \in \{1,\ldots,\abs{\flow'}\}$. By the induction hypothesis, $\flow'_i$ is a subflow of $\flow'$, and, by Fact 11, we know that $\flow'$ is a subflow of $\flow$, which implies that $\flow'_i$ is a subflow of $\flow$, as required.
			
			Finally, we confirm that, by taking the sum of all flows in the collection, the resulting flow is a subflow of $\flow$. In particular, our goal is to show that $\left(\flowsummation_{i=1}^{\abs{\flow'}} \flow'_{i}\right)$ is a subflow of $\flow$. By the induction hypothesis, we know that $\left(\flowsummation_{i=1}^{\abs{\flow'}} \flow'_{i}\right)$ is a subflow of $\flow'$, and, by Fact 11, we know that $\flow'$ is a subflow of $\flow$, as required. 
		\end{itemize}
	\end{proof}
	
	\flowconserve*
	\begin{proof}
		For any fixed $\subconfig \in \bsubs$, recall from the construction of $\graph$ that the directed edges involving a node $\internode_{\subconfig}$ are exactly: $\innode_\subconfig\internode_\subconfig$, $\internode_\subconfig\outnode_\subconfig$, and $\internode_\subconfig\internode_{\subconfigtwo}$ for each $\subconfigtwo \in \bsubs$ adjacent to $\subconfig$. Therefore, 
		\begin{align*}
			\hat{\flow}(\inedges(\internode_{\subconfig})) & = \hat{\flow}(\innode_\subconfig\internode_\subconfig) + \sum_{\subconfigtwo \textrm{ adj. to } \subconfig}\hat{\flow}(\internode_\subconfigtwo\internode_\subconfig)\\
			& = \hat{\flow}(\outedges(\innode_\subconfig)) + \sum_{\subconfigtwo \textrm{ adj. to } \subconfig}\hat{\flow}(\internode_\subconfigtwo\internode_\subconfig)
		\end{align*}
		and 
		\begin{align*}
			\hat{\flow}(\outedges(\internode_{\subconfig})) &= \hat{\flow}(\internode_\subconfig\outnode_\subconfig) + \sum_{\subconfigtwo \textrm{ adj. to } \subconfig}\hat{\flow}(\internode_\subconfig\internode_\subconfigtwo)\\
			&= \hat{\flow}(\inedges(\outnode_\subconfig)) + \sum_{\subconfigtwo \textrm{ adj. to } \subconfig}\hat{\flow}(\internode_\subconfig\internode_\subconfigtwo).
		\end{align*}
		
		Summing over all $\subconfig \in \bsubs$, we get
		\begin{align*}
			\sum_{\subconfig \in \bsubs}\hat{\flow}(\inedges(\internode_{\subconfig}))& = \left[\sum_{\subconfig \in \bsubs}\hat{\flow}(\outedges(\innode_\subconfig))\right] + \left[\sum_{\subconfig \in \bsubs}\sum_{\subconfigtwo \textrm{ adj. to } \subconfig}\hat{\flow}(\internode_\subconfigtwo\internode_\subconfig)\right]
		\end{align*}
		and 
		\begin{align*}
			\sum_{\subconfig \in \bsubs}\hat{\flow}(\outedges(\internode_{\subconfig})) & = \left[\sum_{\subconfig \in \bsubs}\hat{\flow}(\inedges(\outnode_\subconfig))\right] + \left[\sum_{\subconfig \in \bsubs}\sum_{\subconfigtwo \textrm{ adj. to } \subconfig}\hat{\flow}(\internode_\subconfig\internode_\subconfigtwo)\right]
		\end{align*}
		
		But, recall that the edge set of $\graph$ is defined as $\edgeset = \{\innode_\subconfig \internode_\subconfig: \subconfig \in \bsubs \} 
		\cup \{\internode_{\subconfig} \internode_{\subconfigtwo}: \subconfig, \subconfigtwo \in \bsubs \text{, $\subconfig$ and $\subconfigtwo$ are adjacent} \}
		\cup \{\internode_\subconfig \outnode_\subconfig: \subconfig \in \bsubs \}$, so the edge set of $\graph$ induced by the nodes of $\internodes$, which we denote by $E_{\internodes}$, is exactly $\{\internode_{\subconfig} \internode_{\subconfigtwo}: \subconfig, \subconfigtwo \in \bsubs \text{, $\subconfig$ and $\subconfigtwo$ are adjacent} \}$. Each directed edge of $E_{\internodes}$ is included exactly once in the double summation $\sum_{\subconfig \in \bsubs}\sum_{\subconfigtwo \textrm{ adj. to } \subconfig}\hat{\flow}(\internode_\subconfigtwo\internode_\subconfig)$, and is also included exactly once in the double summation $\sum_{\subconfig \in \bsubs}\sum_{\subconfigtwo \textrm{ adj. to } \subconfig}\hat{\flow}(\internode_\subconfig\internode_\subconfigtwo)$. This allows us to conclude that
		
		\begin{align*}
			\sum_{\subconfig \in \bsubs}\hat{\flow}(\inedges(\internode_{\subconfig}))
			& = \left[\sum_{\subconfig \in \bsubs}\hat{\flow}(\outedges(\innode_\subconfig))\right] + \left[\sum_{e \in E_{\internodes}}\hat{\flow}(e)\right]
		\end{align*}
		and 
		\begin{align*}
			\sum_{\subconfig \in \bsubs}\hat{\flow}(\outedges(\internode_{\subconfig})) 
			& = \left[\sum_{\subconfig \in \bsubs}\hat{\flow}(\inedges(\outnode_\subconfig))\right] + \left[\sum_{e \in E_{\internodes}}\hat{\flow}(e)\right].
		\end{align*}

		From the conservation constraint, we know that $\hat{\flow}(\inedges(\internode_{\subconfig})) - \hat{\flow}(\outedges(\internode_{\subconfig})) = 0$ for each $\subconfig \in \bsubs$. It follows that $\sum_{\subconfig \in \bsubs}\left[ \hat{\flow}(\inedges(\internode_{\subconfig})) - \hat{\flow}(\outedges(\internode_{\subconfig})) \right] = 0$. 
		
		Thus, we get
		\begin{align*}
			0 & = \sum_{\subconfig \in \bsubs}\left[ \hat{\flow}(\inedges(\internode_{\subconfig})) - \hat{\flow}(\outedges(\internode_{\subconfig})) \right]\\
			& = \left[\sum_{\subconfig \in \bsubs}\hat{\flow}(\inedges(\internode_{\subconfig}))\right] - \left[\sum_{\subconfig \in \bsubs}\hat{\flow}(\outedges(\internode_{\subconfig}))\right]\\
			& = \left[\left[\sum_{\subconfig \in \bsubs}\hat{\flow}(\outedges(\innode_\subconfig))\right] + \left[\sum_{e \in E_{\internodes}}\hat{\flow}(e)\right]\right] - \left[\left[\sum_{\subconfig \in \bsubs}\hat{\flow}(\inedges(\outnode_\subconfig))\right] + \left[\sum_{e \in E_{\internodes}}\hat{\flow}(e)\right]\right]\\
			& = \left[\sum_{\subconfig \in \bsubs}\hat{\flow}(\outedges(\innode_\subconfig)) \right] - \left[\sum_{\subconfig \in \bsubs}\hat{\flow}(\inedges(\outnode_\subconfig))\right]
		\end{align*}
		which implies the desired result.
	\end{proof}
	
\end{document}